\documentclass[submission,copyright,creativecommons]{eptcs}

\providecommand{\event}{GandALF 2026} 

\usepackage{iftex}

\ifpdf
\usepackage{underscore}
\usepackage[T1]{fontenc}
\else
\usepackage{breakurl}
\fi

\usepackage{tikz}
\usepackage{paralist}
\usepackage{amsmath,amssymb,amsthm,mathtools}
\usepackage{enumitem}
\usepackage{verbatim}
\usepackage{color}

\usepackage{cite}

\theoremstyle{definition}
\newtheorem{definition}{Definition}[section]
\newtheorem{example}[definition]{Example}
\newtheorem{remark}[definition]{Remark}

\theoremstyle{plain}
\newtheorem{lemma}[definition]{Lemma}
\newtheorem{proposition}[definition]{Proposition}
\newtheorem{theorem}[definition]{Theorem}
\newtheorem{corollary}[definition]{Corollary}

\newcommand{\Obs}{\mathsf{Obs}}

\newcommand{\Nat}{\mathbb{N}}

\newcommand{\Trc}{\mathop{\mathrm{Trc}}}
\newcommand{\Proc}{\mathop{\mathrm{Proc}}}

\newcommand{\q}[1]{``#1''}
\newcommand{\Pref}{\operatorname{Pref}}

\newcommand{\Sig}{\Sigma}
\newcommand{\SigStar}{\Sigma^{*}}

\newcommand{\eps}{\varepsilon}

\newcommand{\act}{\Sig}
\newcommand{\zero}{\mathbf{0}}
\newcommand{\true}{\mathsf{tt}}
\newcommand{\false}{\mathsf{ff}}
\newcommand{\HML}{\mathrm{HML}}

\title{A Topological Framework for Finite Behavioural \\ Observations and Verification}

\author{
Antonis Achilleos
\institute{Department of Computer Science,\
Reykjavik University,\
Reykjavik, Iceland}
\email{antonios@ru.is}
\and
Vasiliki Kyriakou
\institute{Department of Computer Science,\
Reykjavik University,\
Reykjavik, Iceland}
\email{vasilikikyr@ru.is}
}

\def\titlerunning{Topological Verification}
\def\authorrunning{Antonis Achilleos and Vasiliki Kyriakou}

\begin{document}

\maketitle

\begin{abstract}

Formal verification and monitorability are based on finite observations, which allow properties to be verified from finite information about system behaviour.
We study such observations through the topologies they generate on spaces of processes. We first consider trace-based topologies and show that finite trace observations on \(\Sigma^\omega\) induce the Cantor topology, while the topology corresponding to full trace inclusion is the discrete one. We then move to arbitrary process spaces, where finite trace observations define the topology \(\tau_O\), and show that simulation observations generate a strictly finer topology \(\tau_{\mathrm{sim}}\). Next, we prove a general verification theorem showing that, for any topology generated by finite observations, open sets are exactly the properties verifiable by those observations. We instantiate this result for \(\tau_O\) and \(\tau_{\mathrm{sim}}\), obtaining multi-trace and simulation monitorability as concrete cases. Finally, we examine the effect of replacing simulation with stronger relations, showing that finite-depth bisimulation yields a genuinely different topology.

\end{abstract}

\section{Introduction}

\textit{Reactive systems} are usually understood through their observable behaviour \cite{Aceto07,10.1007/978-3-642-82453-1_17}. \textit{Labelled transition systems (LTSs)} \cite{Tretmans2008,10.1007/3-540-44685-0_19,10.1007/3-540-53479-2_17} provide a standard model for describing such behaviour. In this model, states represent processes, and labelled transitions represent the actions that processes may perform. Different behavioural views arise depending on which aspects of process behaviour are observable. A trace-based observer records only the
sequences of actions that a process can execute, while a branching-time observer
also takes into account how choices are organized during the execution. The distinction between linear-time and branching-time behaviour is central in
process semantics \cite{4cd7021d283c4edab373d42f1d69a14c,GLABBEEK20013}.

This distinction is also important for \textit{verification} \cite{c9c054ed-1535-3e93-8569-48e7acb4c9c7,osti_835920,KLEIJNEN1995145}. In many verification settings, one does not observe the whole behaviour of a system at once. Instead, one obtains finite observations and uses them to conclude whether the system satisfies a property. \textit{Runtime verification} \cite{inbook,LEUCKER2009293} and \textit{monitorability} \cite{Aceto2021,10.1145/3290365,inbook2} study finite observations from an operational point of view. Runtime verification is a lightweight verification technique in which the current execution of a system is analyzed with respect to a given specification. Monitorability formalizes when such finite observations are enough to reach a conclusive verdict. In this setting, a \textit{monitor} can be understood as a computational entity that reads a trace, namely a sequence of observable system events, and attempts to determine from a finite prefix whether the observed behaviour satisfies a given property or not \cite{Aceto2021}. A property
is \textit{satisfaction-monitorable} if its satisfaction can be concluded after observing
a finite prefix \cite{Aceto2021}. Thus, monitorability is
closely connected to the general question of which properties can be confirmed using finite information.

This question has a natural topological interpretation. Both monitorability and the \textit{Cantor topology} \cite{Moschovakis2009,Kechris1995} on infinite traces are based on finite prefixes. Existing work by Diekert and Leucker \cite{DIEKERT201429} shows that open sets
in the Cantor topology correspond to satisfaction-monitorable properties.
Camerlo and Dagnino \cite{CamerloDagnino2026} extend this perspective by
studying monitorable properties of higher topological complexity. These results show that topology can provide a mathematical framework for describing what can be decided from finite evidence.

This approach is also closely related to earlier work connecting topology,
logic, and computation. In \cite{vickers1988topology}, Vickers develops topology
from a computer-science perspective motivated by the logic of finite
observations. It is also connected to the domain-theoretic
view of computation, where open sets represent observable finite evidence about
computational objects \cite{abramsky2011domaintheorylogicobservable}. In process semantics, Abramsky and
Vickers study observing and testing processes through observational logic
\cite{DBLP:journals/mscs/AbramskyV93}. These works provide important background for the framework adopted in this paper, where open sets arise from concrete finite
observations of process behaviour.

The aim of this paper is to study, through the tools of topology, which properties can be confirmed from finite information.
Rather than starting from a particular monitor model, we start from the finite observations themselves. Each finite observation determines
the set of processes satisfying it, and these sets can be considered as basic open
sets. The topology generated by such observations then describes the properties
whose membership can be confirmed by finite information. This allows us to move
beyond the Cantor topology on single traces and study richer observation
structures, including finite sets of traces and finite process-based
observations that capture finite branching behaviour. In this way, the paper connects
trace-based monitorability, general verification systems, and behavioural
semantics through the common standpoint of observation-induced topologies.

In this work, we develop this idea in several steps. Section~\ref{sec:trace-based topologies}
begins with full trace inclusion and the Alexandroff topology. It then considers finite trace observations on \(\Sigma^\omega\), showing that they
induce the Cantor topology, while full trace inclusion collapses to the
discrete topology.

Section \ref{sec:process-topologies} moves from spaces of traces to arbitrary process spaces. It defines the trace observation topology \(\tau_O\), induced by finite positive trace observations, and shows that it coincides with the topology induced by finite-depth trace inclusion. This section also introduces the simulation topology \(\tau_{\mathrm{sim}}\), whose basic
observations are finite loop-free processes. Since such observations capture
finite branching behaviour, it establishes that \(\tau_O\subsetneq\tau_{\mathrm{sim}}\).

Section~\ref{sec:verification} abstracts from these concrete constructions and proves that, for observation-induced topologies, the properties verifiable by finite observations are exactly the open sets of the induced topology. In \ref{sec:monitorability}, this result applies to the two main topologies studied in the paper, showing that openness in \(\tau_O\) corresponds to multi-trace monitorability, whereas openness in \(\tau_{\mathrm{sim}}\) corresponds to simulation monitorability. Subsection \ref{sec:beyond simulation} examines whether the same construction works when simulation is replaced by stronger behavioural
relations. It shows that complete simulation and ready simulation observations fail to form a basis, whereas finite-depth bisimulation observations do. The resulting topology \(\tau^{\mathrm{fin}}_{\mathrm{bis}}\) is
genuinely different from \(\tau_{\mathrm{sim}}\). In particular, the set of
deadlocked processes is clopen in \(\tau^{\mathrm{fin}}_{\mathrm{bis}}\), but it
is not open in \(\tau_{\mathrm{sim}}\). Finally, Subsection \ref{sec: modal logic} demonstrates that we can generate the same process topologies by using modal formulae as observations instead of finite loop-free processes. 

The main body of this paper presents the definitions and the proof ideas for the central results, while the detailed proofs are collected in the Appendix \ref{appendix}.

\section{Preliminaries}
\label{preliminaries}

Let \(\Sigma\) be a finite alphabet. We write \(\Sigma^*\) and \(\Sigma^\omega\) for the sets of finite and infinite words over \(\Sigma\),
respectively, and define $\Sigma^\infty=\Sigma^*\cup\Sigma^\omega$. The empty word is denoted by \(\varepsilon\). For \(s\in\Sigma^*\) and
\(t\in\Sigma^\infty\), we write \(s \preceq t\) if \(s\) is a finite prefix of \(t\), possibly with \(s=t\). For \(t \in \Sigma^\infty\), we write $\Pref(t)=\{\,s\in\Sigma^*\mid s\preceq t\,\}$, for the set of finite prefixes of \(t\). For \(t\in\Sigma^\omega\) and \(k\in\mathbb N\), we write \(t\!\upharpoonright k\), for the prefix of \(t\) of length \(k\).

\paragraph{Labelled Transition Systems and Traces.}
A \textit{labelled transition system (LTS)} is defined by a triple
$L=\langle \Proc,\Sigma,\to\rangle$, where \(\Proc\) is a set of processes, \(\Sigma\) is a finite alphabet of
actions, and $\to\ \subseteq\ \Proc\times\Sigma\times\Proc$ is the transition relation. We write \(p\xrightarrow{a}q\) when
\((p,a,q)\in\to\).

For \(p\in\Proc\) and \(s=a_1\dots a_n\in\Sigma^*\), we say that \(p\) has \(s\) as a finite trace, if there exist processes
$p_0,p_1,\ldots,p_n\in\Proc$, such that \(p_0=p\) and
$p_{i-1}\xrightarrow{a_i}p_i$, for every $1\leq i\leq n$.
In particular, \(\varepsilon\) is a finite trace of every process. For \(p\in\Proc\) and \(t=a_1a_2\cdots\in\Sigma^\omega\), we say that \(p\) has \(t\) as an infinite trace if there exists a sequence of processes $p_0,p_1,p_2,\ldots\in\Proc$, such that \(p_0=p\) and $p_{i-1}\xrightarrow{a_i}p_i$, for every $i\geq 1$. For each process \(p\in\Proc\), we write $\Trc(p)\subseteq\Sigma^\infty$, for the set of finite and infinite traces executable from \(p\). Thus,
\(s\in\Trc(p)\cap\Sigma^*\) means that \(p\) can execute the finite trace
\(s\), while \(t\in\Trc(p)\cap\Sigma^\omega\) means that \(p\) admits the infinite trace \(t\). 

For a state \(p\in\Proc\), we write $I(p)=\{\,a\in\Sigma\mid \exists p'\in\Proc \text{ such that }
p\xrightarrow{a}p'\,\}$ for the set of actions initially enabled at \(p\). If \(I(p)=\varnothing\),
then \(p\) is a deadlock, meaning that it has no outgoing transitions.

\paragraph{Finite Loop-Free Processes and Simulation.}
We use \textit{finite loop-free processes} as finite observations of process behaviour.
Let \(\Proc_F\) be the set of terms generated by the grammar $q ::= 0 \mid a.q \mid q+q$, $a\in\Sigma$.
For each action \(a\) and terms \(q,q'\), we write
\(q\xrightarrow{a}_F q'\) iff
\emph{(i)}  
\(q=a.q'\), or
\emph{(ii)} \(q=q_1+q_2\), for some \(q_1,q_2\), and
\(q_1\xrightarrow{a}_F q'\) or \(q_2\xrightarrow{a}_F q'\) holds.
The process \(0\) has no outgoing transitions. Since the grammar contains no recursion or variables, every
\(q\in\Proc_F\) is finite and loop-free.
We write $LTS_F=\langle \Proc_F,\Sigma,\to_F\rangle$ for the 
LTS
of finite loop-free processes.

Let $L_1=\langle\Proc_1,\Sigma,\to_1\rangle$ and $L_2=\langle\Proc_2,\Sigma,\to_2\rangle$ be LTSs over the same alphabet. A relation $R\subseteq\Proc_1\times\Proc_2$ is a \emph{simulation} from $L_1$ to $L_2$ if, whenever $(p_1,p_2)\in R$ and $p_1\xrightarrow{a}_1 p_1'$, there exists $p_2'\in\Proc_2$, such that $p_2\xrightarrow{a}_2 p_2'$ and $(p_1',p_2')\in R$. For $p_1\in\Proc_1$ and $p_2\in\Proc_2$, we write $p_1\sqsubseteq_S p_2$, if there exists a simulation $R\subseteq\Proc_1\times\Proc_2$, such that $(p_1,p_2)\in R$. In particular, for $q\in\Proc_F$ and $p\in\Proc$, $q\sqsubseteq_S p$ means that the process $p$ simulates the finite loop-free observation $q$.

In the following, we fix an LTS $L=\langle \Proc,\Sigma,\to\rangle$ that, without loss of generality, includes the processes of $LTS_F$.

\paragraph{Topological Notation.} For a set \(X\), we write \(\mathcal P(X)\) for its powerset and
\(\mathcal P_f(X)\) for the set of its finite subsets. If \(\mathcal F\) is a
family of subsets of \(X\), then $\langle \mathcal F\rangle$ denotes the topology generated by \(\mathcal F\), that is, the smallest topology
on \(X\) containing every set in \(\mathcal F\).

The \textit{Cantor topology} on \(\Sigma^\omega\) is generated by the basic open sets $s\cdot\Sigma^\omega=\{\,t\in\Sigma^\omega\mid s\preceq t\,\}$, with $s\in\Sigma^*$. Thus, $\tau_{\mathrm{Cantor}}
=
\left\langle
\{\,s\cdot\Sigma^\omega\mid s\in\Sigma^*\,\}
\right\rangle$.

Finally, we recall the \textit{Alexandroff topology} induced by a preorder. Let
\(\sqsubseteq\) be a preorder on a set \(X\). A set \(U \subseteq X\) is
upward-closed with respect to \(\sqsubseteq\) if, whenever \(x\in U\) and
\(x\sqsubseteq y\), then \(y\in U\). The Alexandroff topology induced by
\(\sqsubseteq\) is $\tau(\sqsubseteq)
=
\{\,U\subseteq X\mid U \text{ is upward-closed with respect to } \sqsubseteq\,\}$.

\section{Trace-Based Topologies}
\label{sec:trace-based topologies}

We first consider observations that are based only on traces. This provides a baseline setting in which processes are compared through the traces they can generate, thereby capturing the linear behaviour of executions rather than the branching structure of the processes that generate them. The trace preorder reflects this idea by ordering processes according to trace inclusion. Once this preorder is fixed, the associated Alexandroff topology provides a topological view of trace-based comparison.

\subsection{Full Trace Inclusion and the Alexandroff Topology}

\begin{definition}[(Full) Trace preorder]
Define a relation \(\sqsubseteq_{T}\) on \(\Proc\) by $p\sqsubseteq_{T} q
\iff
\Trc(p)\subseteq \Trc(q)$.
\end{definition}

The relation \(\sqsubseteq_{T}\) is a preorder: it is reflexive and transitive because set inclusion is reflexive and transitive. Although \(\sqsubseteq_{T}\) is defined on processes, it compares them only through their traces. This motivates the notation \q{\(T\)}, which emphasizes the trace-based nature of the comparison.

As we saw in Section \ref{preliminaries}, a preorder over $\Proc$ can be used to define its associated Alexandroff topology, where the open sets are the upward-closed sets. For the trace preorder, this means that if a process belongs to an open set, then any process whose trace set contains the traces of that process also belongs to the same open set.

\begin{definition}[Alexandroff topology induced by \(\sqsubseteq_{T}\)]
The \textit{Alexandroff topology} induced by the trace preorder \(\sqsubseteq_{T}\) is $\tau(\sqsubseteq_{T})
=
\{\,U\subseteq \Proc \mid U \text{ is upward-closed with respect to } \sqsubseteq_{T}\,\}$.
Equivalently, \(U\in\tau(\sqsubseteq_{T})\), \text{iff} $
\forall p\in U\;\forall q\in \Proc\;(p\sqsubseteq_{T} q \Rightarrow q\in U)$.
\end{definition}

\begin{lemma}
\label{alex top with trace preorder}
\(\tau(\sqsubseteq_{T})\) is a topology on \(\Proc\).
\end{lemma}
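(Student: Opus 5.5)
We verify the topology axioms directly from the characterization of $\tau(\sqsubseteq_{T})$ as the family of $\sqsubseteq_{T}$-upward-closed subsets of $\Proc$. Only the defining implication
\[
p\in U,\ p\sqsubseteq_{T} q \;\Rightarrow\; q\in U
\]
is needed. The fact that $\sqsubseteq_{T}$ is a preorder, observed above, is not even required for the axioms themselves. It only becomes relevant if one additionally wants the principal up-sets $\{q\mid p\sqsubseteq_{T} q\}$ to be open.

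The plan has three steps, carried out in order.

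First, $\varnothing$ and $\Proc$ are upward-closed. For $\varnothing$ the condition holds vacuously. For $\Proc$ the conclusion $q\in\Proc$ is always true.

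Second, let $\{U_i\}_{i\in I}$ be an arbitrary family of upward-closed sets, and take $p\in\bigcup_i U_i$ with $p\sqsubseteq_{T} q$. Then $p\in U_j$ for some $j$. Upward-closure of $U_j$ gives $q\in U_j\subseteq\bigcup_i U_i$, so the union is upward-closed.

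Third, let $U,V$ be upward-closed, and take $p\in U\cap V$ with $p\sqsubseteq_{T} q$. Applying upward-closure of $U$ and of $V$ separately gives $q\in U$ and $q\in V$, hence $q\in U\cap V$. The same argument works for arbitrary intersections, so $\tau(\sqsubseteq_{T})$ is in fact an Alexandroff topology, closed under arbitrary intersections.

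There is no real obstacle: each axiom follows by unfolding the definition once. The only point needing care is to state the union case for arbitrary (not merely finite) index sets.
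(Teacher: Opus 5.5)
Your proof is correct and follows the paper's argument essentially verbatim: the paper likewise checks that $\varnothing$ and $\Proc$ are upward-closed, that arbitrary unions preserve upward-closure by choosing an index $i$ with $x\in U_i$, and that binary intersections preserve it by applying upward-closure of each set separately. Your side remark that reflexivity and transitivity of $\sqsubseteq_{T}$ are never used is also accurate; the paper's proof does not use them either, even though it later says the argument relies on the preorder property.
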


\subsection{Trace Processes and the Collapse of Full Trace Inclusion}

In the following, we turn to a special case in which a process can execute only one infinite trace. Such processes may still have several finite traces, but their infinite behaviour is uniquely determined.

\begin{definition}[Trace-process]
A state \(p\in \Proc\) is a \emph{trace-process} if $\bigl|\Trc(p)\cap \Sigma^\omega\bigr|=1$.
We write $\Trc(p)\cap \Sigma^\omega=\{t_p\}$, 
where \(t_p\in\Sigma^\omega\) is the unique infinite trace of \(p\).
\end{definition}

The following lemma shows that trace inclusion between trace processes forces equality of the corresponding infinite traces.

\begin{lemma}
\label{lem:unique-trace-equality-contrib}
Let \(p,q\in \Proc\) be trace-processes. If $\Trc(p)\subseteq \Trc(q)$, then
$t_p=t_q$.
\end{lemma}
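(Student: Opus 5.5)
The argument is a direct unwinding of the definitions, with no earlier result needed beyond the definition of a trace-process. Since \(p\) is a trace-process, its unique infinite trace \(t_p\) lies in \(\Trc(p)\cap\Sigma^\omega\). From \(\Trc(p)\subseteq\Trc(q)\) we get \(t_p\in\Trc(q)\), and since \(t_p\in\Sigma^\omega\), in fact \(t_p\in\Trc(q)\cap\Sigma^\omega\).

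Next we use that \(q\) is also a trace-process, so \(\Trc(q)\cap\Sigma^\omega=\{t_q\}\) is a singleton. Membership of \(t_p\) in this singleton forces \(t_p=t_q\).

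There is no real obstacle. The only point to be careful about is that trace inclusion is taken over \(\Sigma^\infty\), so the inclusion must be intersected with \(\Sigma^\omega\) to reach the infinite part. The hypothesis that \(p\) is a trace-process is used only to guarantee that \(t_p\) exists; the uniqueness of \(t_q\) is what carries the argument.
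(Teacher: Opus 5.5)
Your proposal is correct and follows the paper's proof exactly: place \(t_p\) in \(\Trc(p)\cap\Sigma^\omega\), transfer it to \(\Trc(q)\cap\Sigma^\omega\) via the inclusion, and conclude from the fact that this set is the singleton \(\{t_q\}\). Your remark that the inclusion must be restricted to \(\Sigma^\omega\) matches the paper's step ``because \(t_p\) is infinite, \(t_p\in\Trc(q)\cap\Sigma^\omega\).''
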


Lemma \ref{lem:unique-trace-equality-contrib} shows that two trace-processes comparing with trace inclusion must have the same infinite trace. However, they may still differ in their finite traces. To avoid this difference, we introduce a stronger linearity condition. 

\begin{definition}[Linear trace-process]
\label{linear trace-process}
A trace-process \(p\) is called \emph{linear} if $\Trc(p)=\Pref(t_p)\cup\{t_p\}$, where \(\Pref(t_p)\subseteq \Sigma^*\) is the set of finite prefixes of \(t_p\).
\end{definition}

Thus, a linear trace-process has exactly one infinite trace, and its finite
traces are precisely the finite prefixes of that trace. Consequently, if \(p\)
and \(q\) are linear trace-processes such that
\(\Trc(p)\subseteq\Trc(q)\), then \(\Trc(p)=\Trc(q)\).

The following example illustrates that linearity restricts the trace behaviour
of a process, but does not prevent the process from having countably many
branches with different finite behaviours. Here, \(a^0.0=0\) and
\(a^{n+1}.0=a.(a^n.0)\), for every \(n\geq 0\).

\begin{example}
\label{ex:linear-countably-branching}
Fix \(a\in\Sigma\), and let \(p=a^\omega\). Define $q=p+\sum_{n\geq 0} a^n.0$ and $r=\sum_{n\geq 0} a^n.0$.
Since $\Trc(p)=\Pref(a^\omega)\cup\{a^\omega\}$, and every finite trace contributed by the additional summands of \(q\) is
already a prefix of \(a^\omega\), we have
$\Trc(q)=\Trc(p)$.
Hence, both \(p\) and \(q\) are linear trace-processes. Nevertheless, \(q\)
has countably many \(a\)-successors with different finite behaviours, meaning that after
taking an \(a\)-transition, they can perform different numbers of further
\(a\)-actions before stopping. In contrast, \(r\) has no infinite trace and is
therefore not a trace-process.
\end{example}

\begin{lemma}
\label{lem:order-collapses-contrib}
Let \(p,q\in \Proc\) be linear trace-processes. Then, $p\sqsubseteq_{T} q
\Longleftrightarrow
\Trc(p)=\Trc(q)$. 
In particular, if \(p\sqsubseteq_{T} q\), then \(p\) and \(q\) have the same unique infinite trace.
\end{lemma}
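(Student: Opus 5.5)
The plan is to prove the two directions of the equivalence separately, and then read off the final claim from Lemma~\ref{lem:unique-trace-equality-contrib}.

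The direction from right to left is immediate. If \(\Trc(p)=\Trc(q)\), then in particular \(\Trc(p)\subseteq\Trc(q)\), which is \(p\sqsubseteq_T q\) by the definition of the trace preorder.

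For the direction from left to right, assume \(p\sqsubseteq_T q\), that is, \(\Trc(p)\subseteq\Trc(q)\). Since \(p\) and \(q\) are trace-processes, Lemma~\ref{lem:unique-trace-equality-contrib} gives \(t_p=t_q\). Because both processes are linear (Definition~\ref{linear trace-process}), their trace sets are determined by their unique infinite trace:
\[
\Trc(p)=\Pref(t_p)\cup\{t_p\}=\Pref(t_q)\cup\{t_q\}=\Trc(q).
\]
The final sentence of the statement is then exactly the conclusion \(t_p=t_q\) of Lemma~\ref{lem:unique-trace-equality-contrib}.

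The only substantive step is the appeal to Lemma~\ref{lem:unique-trace-equality-contrib}, and it may be taken as given. If it had to be reproved, the argument is short. The trace \(t_p\) lies in \(\Trc(p)\cap\Sigma^\omega\subseteq\Trc(q)\cap\Sigma^\omega=\{t_q\}\), so \(t_p=t_q\). I therefore expect no real obstacle. The one point of care is to use the linearity hypothesis on both \(p\) and \(q\), not just on one of them, when identifying the full trace sets.
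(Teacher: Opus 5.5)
Your proposal is correct and follows the paper's proof exactly: the converse direction is immediate from the definition of \(\sqsubseteq_T\). For the forward direction, you invoke Lemma~\ref{lem:unique-trace-equality-contrib} to get \(t_p=t_q\) and then use linearity of both \(p\) and \(q\) to conclude \(\Trc(p)=\Pref(t_p)\cup\{t_p\}=\Pref(t_q)\cup\{t_q\}=\Trc(q)\).
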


Thus, on linear trace-processes, full trace inclusion collapses to equality of
trace sets. It does not necessarily collapse to equality of processes, since
distinct linear trace-processes may have the same trace set, as illustrated by
Example~\ref{ex:linear-countably-branching}. We now specialize to the trace
transition system, where each process is itself an infinite trace. In that
setting, equality of trace sets will imply equality of processes.

\subsection{\texorpdfstring{Finite Trace Observations and the Topology \(\tau_O\)} {Finite Trace Observations and the Topology tau-O}}
\label{sec:finite trace observations}

We instantiate the previous trace-based setting to 
the process space of
infinite traces and
make the connection between finite trace observations and the standard topology on infinite words explicit.

\paragraph{\texorpdfstring{The Trace Transition System $LTS_{\mathrm{tr}}$.}{The Trace Transition System LTS-tr}}

We implement the previous constructions in the trace transition system $LTS_{\mathrm{tr}}=\langle \Sigma^\omega,\Sigma,\to\rangle$, defined by $t_1\xrightarrow{a} t_2
\Longleftrightarrow
t_1=a \cdot t_2$.
Here, the process space is \(\Sigma^\omega\), and each transition removes the first letter of an infinite word. Thus, for every $t\in\Sigma^\omega$, $\Trc(t)=\Pref(t)\cup\{t\}$.
Indeed, a finite word $s\in\Sigma^*$ is executable from $t$ exactly when $s\preceq t$, since each transition removes the next letter of $t$. Moreover, the only infinite trace executable from $t$ is $t$ itself. Hence, every process in $LTS_{\mathrm{tr}}$ is a linear trace-process.

\paragraph{\texorpdfstring{Finite Positive Trace Observations on \(\Sigma^\omega\).}{Finite Positive Trace Observations on Sigma omega}}

Having identified the executable traces in \(LTS_{\mathrm{tr}}\), we now define the finite observations generated by these traces. For a finite word \(s\), the \textit{observation} \(O(s)\) collects all infinite traces that have \(s\) as an executable finite trace, \emph{i.e.}, that have \(s\) as a prefix.

\begin{definition}[Finite trace observation]
For every \(s\in\Sigma^*\), define $O(s)=\{\,t\in\Sigma^\omega \mid s\in\Trc(t)\,\}$.
\end{definition}

\begin{proposition}
\label{prop:item4-trace-LTS-contrib}
For every \(s\in\Sigma^*\), $O(s)=s\cdot\Sigma^\omega$.
\end{proposition}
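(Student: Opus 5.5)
The plan is to prove the two inclusions $O(s)\subseteq s\cdot\Sigma^\omega$ and $s\cdot\Sigma^\omega\subseteq O(s)$ directly, for an arbitrary fixed $s=a_1\cdots a_n\in\Sigma^*$. Both rest on the characterization $\Trc(t)=\Pref(t)\cup\{t\}$ for every $t\in\Sigma^\omega$ in $LTS_{\mathrm{tr}}$, stated just before the definition of $O(s)$. Since that identity is the only substantive ingredient, I would first make it rigorous by induction on the length of finite words. The key step is the observation that in $LTS_{\mathrm{tr}}$ every $t$ has exactly one outgoing transition, namely $t\xrightarrow{b}t'$ with $t=b\cdot t'$. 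Hence $t$ executes $a_1\cdots a_n$ iff $t=a_1\cdots a_n\cdot t''$ for some $t''\in\Sigma^\omega$, i.e.\ iff $s\preceq t$.

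For $(\subseteq)$, take $t\in O(s)$, so $s\in\Trc(t)$. Since $s$ is finite, $s\neq t$, and therefore $s\in\Pref(t)$, i.e.\ $s\preceq t$. This means $t\in s\cdot\Sigma^\omega$. For $(\supseteq)$, take $t\in s\cdot\Sigma^\omega$, so $s\preceq t$. Then $s\in\Pref(t)\subseteq\Trc(t)$, which gives $t\in O(s)$. The degenerate case $s=\varepsilon$ is immediate: both sides equal $\Sigma^\omega$, because $\varepsilon$ is a trace of every process.

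The only real obstacle is the justification of the trace characterization. Specifically, one must make sure that executing $s$ from $t$ forces the successive states to be the successive suffixes of $t$. This follows from the determinism of the transition relation: $t_1\xrightarrow{a}t_2$ requires $t_1=a\cdot t_2$, which uniquely determines both $a$ and $t_2$. With that in hand, the induction is routine.
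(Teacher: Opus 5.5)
Your proposal is correct and follows essentially the same route as the paper, which simply chains $s\in\Trc(t)\iff s\in\Pref(t)\iff s\preceq t$ for finite $s$, using $\Trc(t)=\Pref(t)\cup\{t\}$. Your induction via the determinism of $LTS_{\mathrm{tr}}$ only adds a justification of that trace characterization, which the paper asserts without proof in the text just before the definition of $O(s)$.
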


Hence, finite trace observations in \(LTS_{\mathrm{tr}}\) are exactly the basic open sets of the Cantor topology. Due to this, the topology generated by finite trace observations coincides with the Cantor topology.

\begin{proposition}
\label{prop:BS-basis-trace-LTS-contrib}
The family $\mathcal B_O=\{\,O(s)\mid s\in\Sigma^*\,\}$ is a basis for a topology on \(\Sigma^\omega\).
\end{proposition}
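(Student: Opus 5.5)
The plan is to check the two standard basis axioms for $\mathcal B_O$ on $\Sigma^\omega$: the sets in $\mathcal B_O$ cover $\Sigma^\omega$, and the intersection of two basic sets is, around each of its points, a union of basic sets. Throughout I will use Proposition~\ref{prop:item4-trace-LTS-contrib}, which rewrites each observation as a cylinder, $O(s)=s\cdot\Sigma^\omega$. Both axioms then reduce to elementary facts about prefixes.

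\emph{Covering.} Take $s=\varepsilon$. Then $O(\varepsilon)=\varepsilon\cdot\Sigma^\omega=\Sigma^\omega$, because $\varepsilon\preceq t$ for every $t$; equivalently, $\varepsilon$ is a finite trace of every process. So $\Sigma^\omega$ itself belongs to $\mathcal B_O$, and the family covers the space.

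\emph{Intersection.} Let $s_1,s_2\in\Sigma^*$ and $t\in O(s_1)\cap O(s_2)$. Then $s_1\preceq t$ and $s_2\preceq t$, so both are prefixes of the same infinite word. Two prefixes of one word are comparable under $\preceq$, so without loss of generality $s_1\preceq s_2$. Then every $t'$ with $s_2\preceq t'$ also satisfies $s_1\preceq t'$, which gives $O(s_2)\subseteq O(s_1)$. Hence $t\in O(s_2)=O(s_1)\cap O(s_2)$, and $O(s_2)\in\mathcal B_O$ is the basic set required by the axiom.

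More strongly, this shows that any two basic sets are either nested or disjoint: if $s_1,s_2$ are incomparable, no infinite word extends both, so the intersection is empty. I do not expect a real obstacle. The only point needing care is the comparability of prefixes of a common word, which holds because both $s_1$ and $s_2$ equal $t\!\upharpoonright|s_i|$, and a shorter initial segment of $t$ is a prefix of a longer one. The conclusion is that $\mathcal B_O$ is a basis, and by Proposition~\ref{prop:item4-trace-LTS-contrib} the topology it generates is exactly $\tau_{\mathrm{Cantor}}$.
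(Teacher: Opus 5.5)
Your proposal is correct and follows essentially the same route as the paper's proof: covering via $O(\varepsilon)=\Sigma^\omega$, and the intersection axiom via comparability of two prefixes of the same infinite word, which gives $O(s_2)\subseteq O(s_1)$ when $s_1\preceq s_2$. Your extra remark that basic sets are either nested or disjoint is a correct strengthening but is not needed for the proposition.
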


\begin{definition}[Topology $\tau_O$ on traces]
Let $\tau_O=\left\langle \{O(s)\mid s\in\Sigma^*\}\right\rangle$, be the topology on \(\Sigma^\omega\) generated by the basis $\mathcal B_O$.
\end{definition}

\begin{corollary}
\label{cor:item4-cantor-contrib}
The topology \(\tau_O\) coincides with the Cantor topology on \(\Sigma^\omega\).
\end{corollary}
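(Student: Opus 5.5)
The plan is to show that the two topologies are generated by the same family of sets. By Proposition~\ref{prop:item4-trace-LTS-contrib}, for every \(s\in\Sigma^*\) we have \(O(s)=s\cdot\Sigma^\omega\). Hence the generating families coincide:
\[
\{\,O(s)\mid s\in\Sigma^*\,\}=\{\,s\cdot\Sigma^\omega\mid s\in\Sigma^*\,\}.
\]
The generated topology \(\langle\mathcal F\rangle\) depends only on the family \(\mathcal F\), so
\[
\tau_O=\langle\{O(s)\mid s\in\Sigma^*\}\rangle=\langle\{s\cdot\Sigma^\omega\mid s\in\Sigma^*\}\rangle=\tau_{\mathrm{Cantor}}.
\]

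One point needs care: \(\tau_O\) is introduced as the topology generated by the basis \(\mathcal B_O\), while \(\tau_{\mathrm{Cantor}}\) is defined as the smallest topology containing the cylinders. To be explicit, I would invoke Proposition~\ref{prop:BS-basis-trace-LTS-contrib}. Since \(\mathcal B_O\) is a basis, the topology whose open sets are the unions of members of \(\mathcal B_O\) is a topology containing \(\mathcal B_O\). It is also contained in any topology containing \(\mathcal B_O\), because topologies are closed under unions. So it equals \(\langle\mathcal B_O\rangle\), and the two descriptions of \(\tau_O\) agree.

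I could also give a direct double-inclusion check. Each basic open set of either topology is a basic open set of the other, so each topology contains the other's generators and is therefore contained in the other. This holds by minimality of the generated topology.

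No step is a genuine obstacle. The only substantive content is the identity \(O(s)=s\cdot\Sigma^\omega\), which is already Proposition~\ref{prop:item4-trace-LTS-contrib}. That identity rests on \(\Trc(t)=\Pref(t)\cup\{t\}\) in \(LTS_{\mathrm{tr}}\). The rest is the bookkeeping, done above, to ensure that ``generated by the basis'' and ``smallest topology containing'' are read consistently.
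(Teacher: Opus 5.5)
Your proposal is correct and matches the paper's proof. Both arguments rest entirely on Proposition~\ref{prop:item4-trace-LTS-contrib}, which identifies each \(O(s)\) with the cylinder \(s\cdot\Sigma^\omega\), so the two generating families and hence the generated topologies coincide; your extra reconciliation of ``generated by the basis \(\mathcal B_O\)'' with ``smallest topology containing'' is harmless bookkeeping that the paper leaves implicit.
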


This gives the first key conclusion of this section. When observations are finite traces, the resulting topology on \(\Sigma^\omega\) is not discrete, but coincides with the classical Cantor topology.

\subsubsection{Finite-Depth Trace Inclusion}

We now introduce finite-depth trace inclusion for an arbitrary labelled
transition system. Instead of comparing full trace sets, this relation compares
only traces of a fixed finite length \(k\). We later specialize this
construction to the trace transition system \(LTS_{\mathrm{tr}}\).

\begin{definition}[Finite-depth trace inclusion]
\label{def: Finite-depth trace inclusion}
Let \(L=\langle\Proc,\Sigma,\to\rangle\) be an LTS.
For every \(k\in\mathbb N\) and every \(p\in\Proc\), define $\Trc_k(p)=\Trc(p)\cap\Sigma^k$. For \(p,q\in\Proc\), define $
p\sqsubseteq_T^k q
\Longleftrightarrow
\Trc_k(p)\subseteq\Trc_k(q)$. 
Let \(\tau(\sqsubseteq_T^k)\) denote the Alexandroff topology induced by
\(\sqsubseteq_T^k\). We define $\tau(\sqsubseteq_T^{\mathrm{fin}})
=
\left\langle
\bigcup_{k\geq 0}\tau(\sqsubseteq_T^k)
\right\rangle$. 
\end{definition}

\begin{corollary}
\label{finite depth top}
For every \(k\in\mathbb N\), the family
\(\tau(\sqsubseteq_T^k)\) is a topology on \(\Proc\).
\end{corollary}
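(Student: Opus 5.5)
The plan is to reduce the statement to the general fact that the upward-closed sets of any preorder form a topology; Lemma~\ref{alex top with trace preorder} is the instance of this fact for \(\sqsubseteq_T\). The only thing specific to the corollary is that \(\sqsubseteq_T^k\) is a preorder on \(\Proc\), which is why the result is stated as a corollary. So the first step is to verify this. Reflexivity holds because \(\Trc_k(p)\subseteq\Trc_k(p)\). For transitivity, if \(\Trc_k(p)\subseteq\Trc_k(q)\) and \(\Trc_k(q)\subseteq\Trc_k(r)\), then \(\Trc_k(p)\subseteq\Trc_k(r)\), since set inclusion is transitive. This is the same argument the paper gave for \(\sqsubseteq_T\), with \(\Trc_k\) in place of \(\Trc\).

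Next we check the topology axioms for \(\tau(\sqsubseteq_T^k)\) directly, in the same order as for Lemma~\ref{alex top with trace preorder}.
\begin{enumerate}[label=(\roman*)]
\item The sets \(\varnothing\) and \(\Proc\) are upward-closed. For \(\varnothing\) the condition is vacuous, and for \(\Proc\) every \(q\) already lies in \(\Proc\).
\item Let \(\{U_i\}_{i\in I}\) be upward-closed. If \(p\in\bigcup_i U_i\) and \(p\sqsubseteq_T^k q\), then \(p\in U_j\) for some \(j\), so \(q\in U_j\subseteq\bigcup_i U_i\).
\item If \(p\in U\cap V\) and \(p\sqsubseteq_T^k q\), then \(q\in U\) and \(q\in V\). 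The same argument in fact handles arbitrary intersections, which is stronger than needed.
\end{enumerate}

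Nothing here is a real obstacle: no property of traces is used beyond the fact that \(\sqsubseteq_T^k\) is a preorder. Even that is not needed for the topology axioms themselves, since upward-closed sets of any relation are closed under arbitrary unions and intersections. Transitivity only matters to make the phrase ``Alexandroff topology induced by a preorder'' in Definition~\ref{def: Finite-depth trace inclusion} legitimate. The only point needing care is to note explicitly that \(\Trc_k(p)=\Trc(p)\cap\Sigma^k\) is a well-defined subset for every \(p\), so that the relation is defined on all of \(\Proc\).
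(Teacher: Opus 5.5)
Your proposal is correct and matches the paper's proof: both observe that $\sqsubseteq_T^k$ is a preorder because set inclusion is reflexive and transitive, and then reuse the argument of Lemma~\ref{alex top with trace preorder} for the upward-closed sets. Your remark is also correct: the topology axioms hold for the upward-closed sets of any relation, so transitivity is needed only to make ``Alexandroff topology of a preorder'' the right name.
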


We now specialize this construction to the trace transition system
\(LTS_{\mathrm{tr}}\). Since every finite trace of
\(t\in\Sigma^\omega\) is a prefix of \(t\), the \(k\)-trace information of
\(t\) consists of exactly its prefix of length \(k\).

\begin{lemma}
\label{lem:Trck-singleton-trace-LTS-contrib}
For every \(t\in\Sigma^\omega\) and every \(k\in\mathbb N\), $\Trc_k(t)=\{\,t\!\upharpoonright k\,\}$.
\end{lemma}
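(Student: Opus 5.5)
The plan is to unfold the definitions: $\Trc_k(t)=\Trc(t)\cap\Sigma^k$, and in $LTS_{\mathrm{tr}}$ we already know $\Trc(t)=\Pref(t)\cup\{t\}$. Since $t\in\Sigma^\omega$ is not a word of length $k$, we get $\Trc_k(t)=\Pref(t)\cap\Sigma^k$. The statement then reduces to showing that the finite prefixes of $t$ of length exactly $k$ form the singleton $\{t\!\upharpoonright k\}$.

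First I will prove the inclusion $\supseteq$: $t\!\upharpoonright k$ is by definition a prefix of $t$ of length $k$, so it lies in $\Pref(t)\cap\Sigma^k$.

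Next I will prove the inclusion $\subseteq$. Take any $s\in\Sigma^k$ with $s\preceq t$, and write $t=a_1a_2\cdots$. A prefix relation $s\preceq t$ with $|s|=k$ forces $s=a_1\cdots a_k=t\!\upharpoonright k$. Operationally, one can also argue directly from the transition relation: a run $t=t_0\xrightarrow{b_1}t_1\xrightarrow{b_2}\cdots\xrightarrow{b_k}t_k$ satisfies $t_{i-1}=b_i\cdot t_i$. By induction on $i$, this gives $b_i=a_i$ and $t_i=a_{i+1}a_{i+2}\cdots$, since transitions in $LTS_{\mathrm{tr}}$ are deterministic (each state has exactly one outgoing transition, labelled by its first letter).

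The only mildly delicate point is making this uniqueness of the length-$k$ run precise. If one relies on the earlier identification $\Trc(t)=\Pref(t)\cup\{t\}$, that point is immediate, so I expect no real obstacle. The induction on $k$ through determinism of $LTS_{\mathrm{tr}}$ is a fallback if a self-contained argument is preferred.
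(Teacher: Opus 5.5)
Your proposal is correct and follows essentially the same route as the paper: unfold $\Trc_k(t)=\Trc(t)\cap\Sigma^k$, use $\Trc(t)=\Pref(t)\cup\{t\}$, drop $t$ because it is infinite, and note that $t$ has exactly one prefix of length $k$, namely $t\!\upharpoonright k$. Your operational fallback, which uses the determinism of $LTS_{\mathrm{tr}}$, is a valid self-contained justification of the identity $\Trc(t)=\Pref(t)\cup\{t\}$, but it is not needed here.
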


It follows that finite-depth trace inclusion is equality of prefixes of length \(k\).

\begin{lemma}
\label{lem:preorder-prefix-trace-LTS-contrib}
For all \(t,u\in\Sigma^\omega\) and all \(k\in\mathbb N\), $t\sqsubseteq_{T}^k u
\Longleftrightarrow
t\!\upharpoonright k=u\!\upharpoonright k$.
\end{lemma}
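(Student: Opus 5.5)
The plan is to unfold the definition of \(\sqsubseteq_T^k\) and then apply Lemma~\ref{lem:Trck-singleton-trace-LTS-contrib}, which reduces the statement to an inclusion between singletons. Fix \(t,u\in\Sigma^\omega\) and \(k\in\mathbb N\). By Definition~\ref{def: Finite-depth trace inclusion}, \(t\sqsubseteq_T^k u\) holds iff \(\Trc_k(t)\subseteq\Trc_k(u)\). Lemma~\ref{lem:Trck-singleton-trace-LTS-contrib} gives \(\Trc_k(t)=\{t\!\upharpoonright k\}\) and \(\Trc_k(u)=\{u\!\upharpoonright k\}\). The condition therefore becomes \(\{t\!\upharpoonright k\}\subseteq\{u\!\upharpoonright k\}\).

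For the forward direction, assume the inclusion. Then \(t\!\upharpoonright k\in\{u\!\upharpoonright k\}\), so \(t\!\upharpoonright k=u\!\upharpoonright k\). For the converse, if \(t\!\upharpoonright k=u\!\upharpoonright k\), the two singletons coincide, so one is contained in the other. Hence \(t\sqsubseteq_T^k u\).

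The argument is a chain of equivalences with no real obstacle. The only substantive input is Lemma~\ref{lem:Trck-singleton-trace-LTS-contrib}, which we may assume. That lemma in turn rests on \(\Trc(t)=\Pref(t)\cup\{t\}\) in \(LTS_{\mathrm{tr}}\): the only word of length \(k\) in this set is the length-\(k\) prefix of \(t\), since \(t\) itself is infinite. It also follows that \(\sqsubseteq_T^k\) is symmetric on \(\Sigma^\omega\), which is the form in which the lemma will presumably be used to compare \(\tau(\sqsubseteq_T^{\mathrm{fin}})\) with the Cantor topology.
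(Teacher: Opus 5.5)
Your proposal is correct and follows essentially the same route as the paper. Both unfold \(\sqsubseteq_T^k\) as \(\Trc_k(t)\subseteq\Trc_k(u)\), use Lemma~\ref{lem:Trck-singleton-trace-LTS-contrib} to reduce this to an inclusion of singletons, and observe that such an inclusion holds exactly when \(t\!\upharpoonright k=u\!\upharpoonright k\).
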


Hence, the open sets determined by \(\sqsubseteq_{T}^k\) are the sets of infinite words sharing a prefix of length \(k\).

\begin{corollary}
\label{cor:Tk-prefix-sets-contrib}
For each \(k\in\mathbb N\) and each \(s\in\Sigma^k\), the set $
\{\,t\in\Sigma^\omega \mid t\!\upharpoonright k=s\,\}$ is exactly $s\cdot\Sigma^\omega$. Thus, the sets determined by \(\sqsubseteq_{T}^k\) are exactly the \(k\)-length Cantor basic open sets.
\end{corollary}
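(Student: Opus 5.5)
The statement to prove is Corollary~\ref{cor:Tk-prefix-sets-contrib}. The plan is to derive it directly from Lemma~\ref{lem:preorder-prefix-trace-LTS-contrib} and Proposition~\ref{prop:item4-trace-LTS-contrib}, in two steps.

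\textbf{First claim.} Fix \(k\in\mathbb N\) and \(s\in\Sigma^k\). For \(t\in\Sigma^\omega\), we have \(t\!\upharpoonright k=s\) if and only if \(s\preceq t\), because \(s\) has length exactly \(k\) and a word has a unique prefix of each length. By definition of the Cantor basic sets, \(s\preceq t\) holds if and only if \(t\in s\cdot\Sigma^\omega\). Hence \(\{t\in\Sigma^\omega\mid t\!\upharpoonright k=s\}=s\cdot\Sigma^\omega\). Equivalently, by Proposition~\ref{prop:item4-trace-LTS-contrib}, this set is \(O(s)\).

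\textbf{Second claim.} By Lemma~\ref{lem:preorder-prefix-trace-LTS-contrib}, on \(\Sigma^\omega\) the relation \(\sqsubseteq_T^k\) is the equivalence relation \(t\sim_k u \iff t\!\upharpoonright k=u\!\upharpoonright k\). For a symmetric preorder, the upward closure of a point \(t\) is its equivalence class, namely \(\{u\mid u\!\upharpoonright k=t\!\upharpoonright k\}\). By the first claim, this class equals \((t\!\upharpoonright k)\cdot\Sigma^\omega\).

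So the minimal open sets of \(\tau(\sqsubseteq_T^k)\), i.e.\ the principal up-sets \({\uparrow}t\), are exactly the sets \(s\cdot\Sigma^\omega\) with \(s\in\Sigma^k\). Each such \(s\) is realised as \(t\!\upharpoonright k\) for some \(t\), since \(\Sigma\) is nonempty and we may take \(t=s\cdot a^\omega\).

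It is also worth recording that every upward-closed set is a union of such classes. This is the precise sense in which these sets are ``determined by'' \(\sqsubseteq_T^k\): they form a partition basis of \(\tau(\sqsubseteq_T^k)\).

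\textbf{Main obstacle.} The only real obstacle is interpreting ``the sets determined by \(\sqsubseteq_T^k\)''. I would read it as the principal upward-closed sets, i.e.\ the equivalence classes, and state it that way. With that reading, everything else is a routine unfolding of the definitions.
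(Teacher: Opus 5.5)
Your proposal is correct and follows the paper's argument. The paper's proof likewise uses Lemma~\ref{lem:preorder-prefix-trace-LTS-contrib} to see that \(\sqsubseteq_T^k\) is equality of length-\(k\) prefixes, so the equivalence class (equivalently, the principal up-set) of \(t\) is \((t\!\upharpoonright k)\cdot\Sigma^\omega\); your first claim and the realisability of every \(s\in\Sigma^k\) are harmless details the paper leaves implicit, though you should avoid the notation \(\sim_k\) for this equivalence, since the paper reserves it for \(k\)-bisimulation.
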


Combining all finite depths, therefore, yields the same topology obtained from finite trace observations.

\begin{theorem}
\label{thm:item2-trace-LTS-contrib}
For the trace transition system \(LTS_{\mathrm{tr}}\), the topology $\tau(\sqsubseteq_{T}^{\mathrm{fin}})$
coincides with the Cantor topology on \(\Sigma^\omega\).
\end{theorem}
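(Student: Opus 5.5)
We prove the two inclusions $\tau(\sqsubseteq_T^{\mathrm{fin}})\subseteq\tau_{\mathrm{Cantor}}$ and $\tau_{\mathrm{Cantor}}\subseteq\tau(\sqsubseteq_T^{\mathrm{fin}})$. The main tool is Lemma~\ref{lem:preorder-prefix-trace-LTS-contrib}: on $LTS_{\mathrm{tr}}$, each preorder $\sqsubseteq_T^k$ is the equivalence relation \q{same prefix of length $k$}. Consequently, a set is upward-closed for $\sqsubseteq_T^k$ exactly when it is a union of equivalence classes $\{t\mid t\!\upharpoonright k=s\}$ with $s\in\Sigma^k$. By Corollary~\ref{cor:Tk-prefix-sets-contrib}, each such class equals $s\cdot\Sigma^\omega$.

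For the inclusion $\tau(\sqsubseteq_T^{\mathrm{fin}})\subseteq\tau_{\mathrm{Cantor}}$, fix $k$ and $U\in\tau(\sqsubseteq_T^k)$. Since $U$ is upward-closed for $\sqsubseteq_T^k$, every $t\in U$ satisfies $t\!\upharpoonright k\cdot\Sigma^\omega\subseteq U$. To see this, take $u$ with $u\!\upharpoonright k=t\!\upharpoonright k$; then $t\sqsubseteq_T^k u$ by Lemma~\ref{lem:preorder-prefix-trace-LTS-contrib}, so $u\in U$. It follows that
\[
U=\bigcup_{t\in U}(t\!\upharpoonright k)\cdot\Sigma^\omega ,
\]
which is Cantor-open. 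Hence $\bigcup_k\tau(\sqsubseteq_T^k)\subseteq\tau_{\mathrm{Cantor}}$. Since $\tau(\sqsubseteq_T^{\mathrm{fin}})$ is by definition the smallest topology containing this union, and $\tau_{\mathrm{Cantor}}$ is a topology containing it, we get $\tau(\sqsubseteq_T^{\mathrm{fin}})\subseteq\tau_{\mathrm{Cantor}}$.

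For the reverse inclusion, take a basic Cantor set $s\cdot\Sigma^\omega$ and let $k=|s|$. This set is upward-closed for $\sqsubseteq_T^k$: if $t\in s\cdot\Sigma^\omega$ and $t\sqsubseteq_T^k u$, then $u\!\upharpoonright k=t\!\upharpoonright k=s$, again by Lemma~\ref{lem:preorder-prefix-trace-LTS-contrib}. So $s\cdot\Sigma^\omega\in\tau(\sqsubseteq_T^k)\subseteq\tau(\sqsubseteq_T^{\mathrm{fin}})$. Every Cantor-open set is a union of such basic sets, and $\tau(\sqsubseteq_T^{\mathrm{fin}})$ is closed under unions, so $\tau_{\mathrm{Cantor}}\subseteq\tau(\sqsubseteq_T^{\mathrm{fin}})$. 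Alternatively, one can combine Corollary~\ref{cor:item4-cantor-contrib} with the observation that each $O(s)$ lies in $\tau(\sqsubseteq_T^{|s|})$.

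No step is a serious obstacle. The only point requiring care is the first inclusion: there we must use that $\sqsubseteq_T^k$ is symmetric on $LTS_{\mathrm{tr}}$, so that upward-closed sets are full unions of depth-$k$ cylinders. We also must not confuse the generated topology with the bare union of the $\tau(\sqsubseteq_T^k)$. That union need not itself be a topology, which is why we argue via the minimality of the generated topology rather than showing equality with the union directly.
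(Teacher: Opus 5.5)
Your proof is correct and takes essentially the same approach as the paper. Both arguments use Lemma~\ref{lem:preorder-prefix-trace-LTS-contrib} to show two things: $\sqsubseteq_T^k$-upward-closed sets are unions of length-$k$ cylinders, and each cylinder $s\cdot\Sigma^\omega$ is $\sqsubseteq_T^{|s|}$-upward-closed. Your version spells out the upward-closure check and the minimality argument for the generated topology, which the paper leaves implicit.
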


Thus, finite trace observations and finite-depth trace inclusion lead to the same topology on the space of infinite traces \(\Sigma^\omega\). This shows that, although the two constructions are defined differently, both capture the same finite-prefix information.

\subsubsection{Finite versus Full Trace Topologies}

We now compare the finite constructions with the topology induced by full trace inclusion,
and
show that finite and full trace information do not have the same topological effects.

Since every process in \(LTS_{\mathrm{tr}}\) is a linear trace-process, Lemma~\ref{lem:order-collapses-contrib} yields $\Trc(t)=\Trc(u)$, whenever \(t\sqsubseteq_T u\). Each process \(t\in\Sigma^\omega\) has \(t\) as its unique infinite trace. Hence, equality of trace sets implies \(t=u\). Therefore, $t\sqsubseteq_T u \iff t=u$, for every \(t,u\in\Sigma^\omega\). Thus, \(\sqsubseteq_T\) is equality on \(\Sigma^\omega\), and every subset of \(\Sigma^\omega\) is upward-closed. Consequently, $\tau(\sqsubseteq_T)=\mathcal P(\Sigma^\omega)$, so \(\tau(\sqsubseteq_T)\) is the discrete topology on \(\Sigma^\omega\).

\begin{corollary}
\label{cor:trace-LTS-summary}
For the trace transition system \(LTS_{\mathrm{tr}}\), we have $\left\langle \{O(s)\mid s\in\Sigma^*\}\right\rangle
=
\tau(\sqsubseteq_{T}^{\mathrm{fin}})
=
\tau_{\mathrm{Cantor}}$. Moreover, $\tau(\sqsubseteq_{T})=\mathcal P(\Sigma^\omega)$. Hence, if \(|\Sigma|\ge 2\), then $\left\langle \{O(s)\mid s\in\Sigma^*\}\right\rangle
=
\tau(\sqsubseteq_{T}^{\mathrm{fin}})
\subsetneq
\tau(\sqsubseteq_{T})$.
\end{corollary}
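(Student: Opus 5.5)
The corollary assembles results already established, so the plan is to chain them and then add one separating example. First, the identity $\left\langle \{O(s)\mid s\in\Sigma^*\}\right\rangle=\tau_{\mathrm{Cantor}}$ is exactly Corollary~\ref{cor:item4-cantor-contrib}. It rests on Proposition~\ref{prop:item4-trace-LTS-contrib}, which gives $O(s)=s\cdot\Sigma^\omega$, so the two generating families are literally equal. Second, $\tau(\sqsubseteq_{T}^{\mathrm{fin}})=\tau_{\mathrm{Cantor}}$ is Theorem~\ref{thm:item2-trace-LTS-contrib}. Combining the two gives the first chain of equalities.

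Next, I will record the discrete-topology claim from the argument just preceding the corollary. Every $t\in\Sigma^\omega$ is a linear trace-process in $LTS_{\mathrm{tr}}$. By Lemma~\ref{lem:order-collapses-contrib}, $t\sqsubseteq_T u$ gives $\Trc(t)=\Trc(u)$. Since each of $t$ and $u$ is the unique infinite trace of itself, this forces $t=u$. Hence $\sqsubseteq_T$ is equality, every subset of $\Sigma^\omega$ is upward-closed, and $\tau(\sqsubseteq_T)=\mathcal P(\Sigma^\omega)$.

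For the strict inclusion, the inclusion $\tau_{\mathrm{Cantor}}\subseteq\mathcal P(\Sigma^\omega)$ is trivial. For strictness, I will exhibit a subset of $\Sigma^\omega$ that is not Cantor-open, namely a singleton $\{t\}$. Fix distinct letters $a,b\in\Sigma$, which exist because $|\Sigma|\ge 2$. Suppose $\{t\}$ were open. Then it would contain a basic set $s\cdot\Sigma^\omega$ with $s\preceq t$. Choose a letter $c\in\{a,b\}$ different from the $(|s|+1)$-st letter of $t$. Then any word $s\,c\,t'$ lies in $s\cdot\Sigma^\omega$ but differs from $t$, which is a contradiction. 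So $\{t\}\in\tau(\sqsubseteq_T)\setminus\tau_{\mathrm{Cantor}}$.

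The only point requiring any care is this last step: checking that the hypothesis $|\Sigma|\ge 2$ is exactly what makes every basic open set infinite, indeed of size continuum. With $|\Sigma|=1$ the space $\Sigma^\omega$ is a single point and the two topologies coincide. Everything else is direct citation of the earlier results.
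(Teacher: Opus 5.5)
Your proof is correct and follows the paper's route. You cite Corollary~\ref{cor:item4-cantor-contrib} and Theorem~\ref{thm:item2-trace-LTS-contrib} for the equalities, use the collapse of $\sqsubseteq_T$ to equality for discreteness, and use non-discreteness of the Cantor topology for strictness. The only difference is that you explicitly verify non-discreteness (for $|\Sigma|\ge 2$ no singleton $\{t\}$ contains a cylinder), which the paper merely asserts.
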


\begin{remark}
When $|\Sigma|=1$, say $\Sigma=\{a\}$, we have $\Sigma^\omega=\{a^\omega\}$. Hence, $\tau_{\mathrm{Cantor}}=\mathcal{P}(\Sigma^\omega)$ is discrete, and therefore
$
\left\langle \{O(s)\mid s\in\Sigma^*\}\right\rangle
=
\tau(\sqsubseteq_{T}^{\mathrm{fin}})
=
\tau(\sqsubseteq_{T})
=
\mathcal{P}(\Sigma^\omega)$. Thus, the strict inclusion in Corollary \ref{cor:trace-LTS-summary} occurs precisely when $|\Sigma|\geq 2$.
\end{remark}

\section{Process-Based Topologies}
\label{sec:process-topologies}

In Section \ref{sec:finite trace observations}, we studied the special case in which the process space is \(\Sigma^\omega\). We now move to a general process space, where a process may have several possible traces and may also admit a branching structure. This change of setting is important because trace information alone records which executions are possible, but not how these executions are organized.

\subsection{\texorpdfstring{Finite Positive Trace Observations on $\Proc$}{Finite Positive Trace Observations on Proc}}

Let \(\Proc\) now be an arbitrary process space. We again consider finite
positive trace observations of the form $
O(s)=\{\,p\in\Proc\mid s\in\Trc(p)\,\}$, $s\in\Sigma^*$. The meaning of \(O(s)\) is the same as in \ref{sec:finite trace observations}. It collects all processes that
can execute the finite trace \(s\). However, its role is different from the
trace-space case. When the process space is \(\Sigma^\omega\), each process is
itself a single infinite trace, and \(O(s)\) is simply the set of all infinite
words with prefix \(s\). In a general process space, a process may have many
possible traces and may also branch internally. Thus, \(O(s)\) records only that
the trace \(s\) is executable. It does not record the branching structure that
leads to this trace.

Since a process may satisfy several finite trace observations at the same time,
we must also consider finite intersections of such observations. For example, $O(s_1)\cap\cdots\cap O(s_n)$
contains exactly those processes that can execute all of the finite traces
\(s_1,\ldots,s_n\). In general, such an intersection cannot be reduced to a
single observation \(O(s)\). For this reason, the family
\(\{O(s)\mid s\in\Sigma^*\}\) is treated as a subbasis, and its finite
intersections form the corresponding basis.

For every finite set \(D\in\mathcal P_f(\Sigma^*)\), define $O(D)=\bigcap_{s\in D}O(s)$.
Thus, \(O(D)\) is the set of processes that can execute every trace in \(D\).
We define  $O(\varnothing)=\bigcap_{s\in\varnothing}O(s)=\Proc$, since the empty observation places no requirements on a process.

\begin{definition}[Topology $\tau_O$ on processes]
We define $\tau_O=\left\langle \{O(s)\mid s\in\Sigma^*\}\right\rangle$.
\end{definition}

The finite intersections of these subbasic observations form the corresponding basis.

\begin{proposition}
\label{prop:BS-process-basis-contrib}
The family $\mathcal B_O=
\left\{
\bigcap_{i=1}^n O(s_i)
\;\middle|\;
n\ge 1,\ s_1,\dots,s_n\in\Sigma^*
\right\}$
is a basis for \(\tau_O\). In particular, \(\tau_O\) is a topology on \(\Proc\).
\end{proposition}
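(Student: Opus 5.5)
The plan is to apply the standard basis criterion. A family $\mathcal B$ of subsets of $X$ is a basis for some topology, namely the family of all unions of members of $\mathcal B$, provided two conditions hold: (i) $\mathcal B$ covers $X$; and (ii) for any $B_1,B_2\in\mathcal B$ and any $x\in B_1\cap B_2$ there is $B_3\in\mathcal B$ with $x\in B_3\subseteq B_1\cap B_2$. Once this is in place, I will show that the topology this basis generates equals $\langle\{O(s)\mid s\in\Sigma^*\}\rangle=\tau_O$. That equality also gives the final clause, that $\tau_O$ is a topology on $\Proc$, although this already holds by definition of the generated topology.

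For (i), I use that $\varepsilon$ is a finite trace of every process, as noted in the Preliminaries. Hence $O(\varepsilon)=\Proc$, and this set lies in $\mathcal B_O$ (take $n=1$, $s_1=\varepsilon$). For (ii), I take $B_1=\bigcap_{i=1}^n O(s_i)$ and $B_2=\bigcap_{j=1}^m O(s'_j)$. Their intersection is again a finite intersection of subbasic sets, $\bigcap_{i=1}^n O(s_i)\cap\bigcap_{j=1}^m O(s'_j)$, with $n+m\ge1$ factors, so it belongs to $\mathcal B_O$. Choosing $B_3=B_1\cap B_2$ settles the condition trivially.

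For the equality of topologies, I argue both inclusions. First, every $O(s)$ lies in $\mathcal B_O$ (take $n=1$), so $O(s)$ is open in the topology generated by $\mathcal B_O$. That topology therefore contains the subbasis, and hence contains $\tau_O$ by minimality. Conversely, any topology containing every $O(s)$ is closed under finite intersections and arbitrary unions. It therefore contains every member of $\mathcal B_O$ and every union of such members, so the topology generated by $\mathcal B_O$ is contained in $\tau_O$.

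No step here is a real obstacle. The only point needing care is the covering condition: the definition of $\mathcal B_O$ requires $n\ge1$, so the empty intersection $O(\varnothing)=\Proc$ is not directly available. This is why the argument uses $O(\varepsilon)=\Proc$ instead.
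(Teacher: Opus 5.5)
Your proof is correct and follows the paper's argument: covering via $O(\varepsilon)=\Proc$, and the intersection property by taking $B_3=B_1\cap B_2$, which is again a finite intersection of sets $O(s)$. You additionally verify explicitly that the topology generated by $\mathcal B_O$ coincides with $\tau_O$, a step the paper leaves implicit.
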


The next result gives a characterization of the open sets of \(\tau_O\).

\begin{proposition}[Basis characterization of open sets in \(\tau_{O}\)]
\label{prop:tau-s2-open}
For every \(U\subseteq\Proc\),
$U$ is open in $\tau_O$, \text{iff} $\forall p\in U$, there exist finite traces $s_1,\dots,s_n\in \Trc(p)\cap\Sigma^*$, such that $p\in \bigcap_{i=1}^nO(s_i)\subseteq U$.
\end{proposition}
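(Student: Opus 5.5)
The plan is to use Proposition~\ref{prop:BS-process-basis-contrib}, which states that $\mathcal B_O$ is a basis for $\tau_O$, together with the standard fact that a set is open in the topology generated by a basis exactly when it is a union of basis elements. Equivalently, $U$ is open exactly when every point of $U$ lies in some basis element contained in $U$. The only extra content of the statement, beyond this general fact, is that the traces $s_1,\dots,s_n$ may be chosen inside $\Trc(p)\cap\Sigma^*$. This follows from the definition of $O(s)$.

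For the forward direction, let $U\in\tau_O$ and $p\in U$. Since $\mathcal B_O$ is a basis for $\tau_O$, there is a basic set $B=\bigcap_{i=1}^n O(s_i)$ with $n\ge 1$ such that $p\in B\subseteq U$. Because $p\in O(s_i)=\{\,q\in\Proc\mid s_i\in\Trc(q)\,\}$ for each $i$, we get $s_i\in\Trc(p)\cap\Sigma^*$. This gives exactly the required witnesses.

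For the backward direction, suppose that for every $p\in U$ we can choose traces $s^p_1,\dots,s^p_{n_p}\in\Trc(p)\cap\Sigma^*$ with $p\in B_p:=\bigcap_{i=1}^{n_p}O(s^p_i)\subseteq U$. We then have $U=\bigcup_{p\in U}B_p$: the inclusion $\supseteq$ holds since each $B_p\subseteq U$, and $\subseteq$ holds since $p\in B_p$. Each $B_p$ is a finite intersection of subbasic sets $O(s)$, hence lies in $\tau_O$. An arbitrary union of open sets is open, so $U\in\tau_O$.

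There are two small points to check. If $n_p=0$ is allowed, then the intersection is $O(\varnothing)=\Proc$, which is open. Alternatively, one can always take $n_p\ge1$ by adding $\varepsilon$, since $\varepsilon\in\Trc(p)$ and $O(\varepsilon)=\Proc$. The case $U=\varnothing$ is vacuous. I expect no step to be a genuine obstacle. The only care needed is noting that the witnesses automatically belong to $\Trc(p)$ and that the basis convention $n\ge1$ is harmless.
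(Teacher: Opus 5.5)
Your proposal is correct and takes the same route as the paper, which simply invokes the standard basis characterization via Proposition~\ref{prop:BS-process-basis-contrib}. Your check that the witnesses $s_i$ automatically lie in $\Trc(p)\cap\Sigma^*$ (because $p\in O(s_i)$), together with your handling of the $n\ge 1$ convention via $O(\varepsilon)=\Proc$, makes explicit details the paper leaves implicit.
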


Thus, \(\tau_O\) captures properties that can be verified by finitely many positive trace observations. We now relate this topology to the finite-depth trace inclusion topology introduced in Definition \ref{def: Finite-depth trace inclusion}. The following theorem shows that these two constructions are equivalent on arbitrary LTSs.

\begin{theorem}
\label{equality of topologies on processes}
For every 
LTS, 
$\tau_O=\tau(\sqsubseteq_{T}^{\mathrm{fin}})$.
\end{theorem}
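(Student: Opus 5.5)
We prove the two inclusions separately. Since both topologies are generated by families of sets, it suffices in each direction to show that every generator of one topology is open in the other.

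For \(\tau_O\subseteq\tau(\sqsubseteq_T^{\mathrm{fin}})\), fix \(s\in\Sigma^*\) and let \(k=|s|\). We claim that \(O(s)\) is upward-closed with respect to \(\sqsubseteq_T^k\). Indeed, if \(p\in O(s)\), then \(s\in\Trc(p)\cap\Sigma^k=\Trc_k(p)\). If moreover \(p\sqsubseteq_T^k q\), then \(s\in\Trc_k(q)\subseteq\Trc(q)\), so \(q\in O(s)\). Hence \(O(s)\in\tau(\sqsubseteq_T^k)\subseteq\tau(\sqsubseteq_T^{\mathrm{fin}})\). Since every subbasic set of \(\tau_O\) lies in \(\tau(\sqsubseteq_T^{\mathrm{fin}})\), and \(\tau_O\) is the smallest topology containing them, the inclusion follows.

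For \(\tau(\sqsubseteq_T^{\mathrm{fin}})\subseteq\tau_O\), it suffices to show \(\tau(\sqsubseteq_T^k)\subseteq\tau_O\) for each \(k\). Let \(U\) be upward-closed with respect to \(\sqsubseteq_T^k\) and let \(p\in U\). The key observation is that \(\Trc_k(p)\subseteq\Sigma^k\) is finite, because \(\Sigma\) is finite. Set \(D=\Trc_k(p)\in\mathcal P_f(\Sigma^*)\). Then \(p\in O(D)\). Moreover, for any \(q\in O(D)\) we have \(\Trc_k(p)\subseteq\Trc(q)\cap\Sigma^k=\Trc_k(q)\), that is, \(p\sqsubseteq_T^k q\), so \(q\in U\) by upward closure. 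Thus \(p\in O(D)\subseteq U\) with \(D\subseteq\Trc(p)\cap\Sigma^*\).

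This needs care when \(D=\varnothing\), which happens when \(p\) has no trace of length \(k\). In that case \(p\sqsubseteq_T^k q\) holds for every \(q\in\Proc\), so \(U=\Proc\), which is open. Alternatively, one can use \(O(\{\varepsilon\})=\Proc\) as the basic set so that \(n\ge1\) holds in Proposition~\ref{prop:BS-process-basis-contrib}. Either way, Proposition~\ref{prop:tau-s2-open} then yields \(U\in\tau_O\).

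The two inclusions together give \(\tau_O=\tau(\sqsubseteq_T^{\mathrm{fin}})\). The only substantive point is the finiteness of \(\Trc_k(p)\), which relies on \(\Sigma\) being finite. Without it, \(\sqsubseteq_T^k\)-upsets need not be finite intersections of trace observations, so this is the step we expect to be the crux.
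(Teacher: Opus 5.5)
Your proof is correct and follows essentially the same route as the paper. Each $O(s)$ is shown to be upward-closed for $\sqsubseteq_T^{|s|}$, and conversely each $\sqsubseteq_T^k$-upset is written as a union of the sets $\bigcap_{s\in\Trc_k(p)}O(s)$, which are $\tau_O$-open because $\Trc_k(p)\subseteq\Sigma^k$ is finite. Your explicit treatment of the case $\Trc_k(p)=\varnothing$ makes precise a point the paper handles only implicitly through its convention $O(\varnothing)=\Proc$.
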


Thus, \(\tau_O\) captures finite trace information on processes. To move beyond
linear traces and capture finite branching structure, we next introduce
simulation-based observations.

\subsection{\texorpdfstring{The Simulation Observation Topology \(\tau_{\mathrm{sim}}\)} {The Simulation Observation Topology tau-sim}}

The goal in this subsection is to define a topology whose basic observations are finite process behaviours rather than finite traces. For this purpose, we use finite loop-free processes as observations. A process satisfies such an observation when it can simulate the finite branching structure described by that observation.

Recalling the definitions of finite loop-free processes and the simulation preorder introduced in Section \ref{preliminaries}, we now define simulation observations.

\begin{definition}[Basic simulation observation]
For \(q\in \mathrm{Proc_F}\), define $O(q)=\{\,p\in \Proc\mid q \sqsubseteq_S p\,\}$, 
where \(q \sqsubseteq_S p\) means that \(p\) simulates the finite loop-free process \(q\).
\end{definition}

The set \(O(q)\) contains all processes that satisfy the finite simulation observation \(q\). Unlike a trace observation ($O(s)$), which only checks whether a process can execute one linear trace, a simulation observation $(O(q))$ can require a process to match a finite branching behaviour.
Hence, simulation observations can express finite branching information. 

A useful property of simulation observations is that finite intersections can still be expressed by a single observation. The operator \(+\) provides this representation, since a process simulates \(q_1+q_2\) exactly when it simulates both \(q_1\) and \(q_2\).

\begin{lemma}
\label{lem:plus-sim-contrib}
For all \(q_1,q_2\in \mathrm{Proc_F}\) and all \(p\in \Proc\), 
$q_1+q_2\sqsubseteq_S  p
\iff
q_1 \sqsubseteq_S  p \text{ and } q_2 \sqsubseteq_S  p$.
\end{lemma}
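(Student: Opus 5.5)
The plan is to prove each direction by building an explicit simulation relation. The only facts needed are the definition of $\to_F$ on sums and the definition of $\sqsubseteq_S$ as the existence of a simulation. The key observation is that the outgoing transitions of $q_1+q_2$ are exactly the union of those of $q_1$ and of $q_2$. Indeed, by clause (ii), $q_1+q_2\xrightarrow{a}_F q'$ holds iff $q_1\xrightarrow{a}_F q'$ or $q_2\xrightarrow{a}_F q'$, and clause (i) cannot apply since $q_1+q_2$ is not of the form $a.q'$.

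For the direction ($\Leftarrow$), assume simulations $R_1\ni(q_1,p)$ and $R_2\ni(q_2,p)$ from $LTS_F$ to $L$. We take $R=R_1\cup R_2\cup\{(q_1+q_2,p)\}$ and check that it is a simulation.
\begin{itemize}
\item Pairs already in $R_1$ or $R_2$ are handled by $R_1$ or $R_2$, since $R$ contains both.
\item For the new pair, suppose $q_1+q_2\xrightarrow{a}_F q'$. By the observation above, $q_i\xrightarrow{a}_F q'$ for some $i$.
\item Since $(q_i,p)\in R_i$, there is $p'$ with $p\xrightarrow{a}p'$ and $(q',p')\in R_i\subseteq R$.
\end{itemize}
Hence $q_1+q_2\sqsubseteq_S p$.

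For the direction ($\Rightarrow$), assume $R$ is a simulation with $(q_1+q_2,p)\in R$. For $i=1,2$ we take $R_i=R\cup\{(q_i,p)\}$. Any transition $q_i\xrightarrow{a}_F q'$ is also a transition of $q_1+q_2$, so the clause for $(q_1+q_2,p)$ in $R$ gives $p\xrightarrow{a}p'$ with $(q',p')\in R\subseteq R_i$. All other pairs of $R_i$ lie in $R$ and are handled by $R$. So $R_i$ is a simulation and $q_i\sqsubseteq_S p$.

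There is no real obstacle here; the argument is routine. The one point needing care is to take $q_1+q_2$ as a syntactic term whose transitions are given exactly by clause (ii). Because of this, no induction on terms is needed. It also doesn't matter if the added pairs coincide with pairs already present, for example when $q_i$ occurs inside $q_1+q_2$.
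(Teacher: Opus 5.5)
Your proposal is correct and follows essentially the same route as the paper: for ($\Leftarrow$) you use exactly the paper's relation $R_1\cup R_2\cup\{(q_1+q_2,p)\}$, and for ($\Rightarrow$) you extend $R$ by $(q_i,p)$ just as the paper does, except that the paper adds both pairs at once while you add them separately. Your explicit remark that clause (i) of $\to_F$ cannot fire on the sum term is a small point of care that the paper leaves implicit.
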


This immediately yields the basis structure of the new topology.

\begin{proposition}
\label{prop:Bsim-basis-contrib}
The family $\mathcal B_{\mathrm{sim}}=\{\,O(q)\mid q\in \mathrm{Proc_F}\,\}$ is a basis for a topology on \(\Proc\).
\end{proposition}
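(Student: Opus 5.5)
To show that $\mathcal B_{\mathrm{sim}}$ is a basis, I will verify the two standard basis axioms. First, the family must cover $\Proc$. Second, for any two basic sets and any point in their intersection, some basic set must contain that point and lie inside the intersection. Lemma~\ref{lem:plus-sim-contrib} lets us prove a stronger version of the second axiom: $\mathcal B_{\mathrm{sim}}$ is closed under binary intersections.

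\emph{Covering.} Take the observation $0\in\Proc_F$. The relation $R=\{(0,p)\}$ is a simulation for every $p\in\Proc$. The simulation condition is vacuous here, because $0$ has no outgoing transitions. Hence $0\sqsubseteq_S p$ for all $p$, so $O(0)=\Proc$. In particular every process lies in some member of $\mathcal B_{\mathrm{sim}}$.

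\emph{Intersection.} Let $q_1,q_2\in\Proc_F$. Then $q_1+q_2\in\Proc_F$ by the grammar. By Lemma~\ref{lem:plus-sim-contrib}, for every $p\in\Proc$ we have $p\in O(q_1+q_2)$ iff $q_1\sqsubseteq_S p$ and $q_2\sqsubseteq_S p$, that is, iff $p\in O(q_1)\cap O(q_2)$. So $O(q_1)\cap O(q_2)=O(q_1+q_2)\in\mathcal B_{\mathrm{sim}}$. Given $p\in O(q_1)\cap O(q_2)$, the basic set $O(q_1+q_2)$ contains $p$ and is contained in, indeed equal to, the intersection. This is exactly the second axiom.

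Both axioms hold, so $\mathcal B_{\mathrm{sim}}$ is a basis for a topology on $\Proc$, namely the topology $\langle\mathcal B_{\mathrm{sim}}\rangle$ whose open sets are the unions of members of $\mathcal B_{\mathrm{sim}}$. All the substantive content sits in Lemma~\ref{lem:plus-sim-contrib}, which we may assume. The only step needing care is the covering step: $\Proc_F$ must contain $0$ (it does, from the grammar), and the empty-transition case must make $\{(0,p)\}$ trivially a simulation.
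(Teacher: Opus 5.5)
Your proof is correct and follows the paper's argument. Covering uses $O(0)=\Proc$, which the paper isolates as Lemma~\ref{lem:zero-sim}, and the intersection axiom uses Lemma~\ref{lem:plus-sim-contrib} to get $O(q_1)\cap O(q_2)=O(q_1+q_2)\in\mathcal B_{\mathrm{sim}}$.
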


\begin{definition}[Simulation topology]
Let $\tau_{\mathrm{sim}}$ be the topology on \(\Proc\) generated by the basis \(\mathcal B_{\mathrm{sim}}\).
\end{definition}

This topology is process-based because its basic sets no longer examine whether specific traces are present, but whether a process can match a given finite branching behaviour. The following proposition characterizes the openness in \(\tau_{\mathrm{sim}}\).

\begin{proposition}[Basis characterization of open sets in \(\tau_{\mathrm{sim}}\)]
\label{prop:tau-sim-open}
For every \(U\subseteq\Proc\), $U$ is open in $\tau_{\mathrm{sim}}$ iff $\forall p\in U$, there exists $q\in\mathrm{Proc_F}$, such that $p\in O(q)\subseteq U$.
\end{proposition}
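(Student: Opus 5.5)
This is the standard characterization of open sets in a topology generated by a basis, so the plan is to unfold the definition of $\tau_{\mathrm{sim}}$. By Proposition~\ref{prop:Bsim-basis-contrib}, $\mathcal B_{\mathrm{sim}}$ is a basis. The topology it generates is therefore the family of all unions of subfamilies of $\mathcal B_{\mathrm{sim}}$. We first record why this family is indeed the smallest topology containing $\mathcal B_{\mathrm{sim}}$:
\begin{itemize}
\item It contains $\varnothing$, as the empty union.
\item It contains $\Proc$, since $O(0)=\Proc$: every process simulates $0$, via the relation $\{(0,p)\}$.
\item It is closed under arbitrary unions, trivially.
\item It is closed under binary intersections, by Lemma~\ref{lem:plus-sim-contrib}, which gives $O(q_1)\cap O(q_2)=O(q_1+q_2)$.
\end{itemize}
Minimality is clear, because any topology containing the basic sets contains their unions. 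So we may use the description \q{$U\in\tau_{\mathrm{sim}}$ iff $U$ is a union of sets $O(q)$}.

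\emph{Forward direction.} Suppose $U$ is open, so $U=\bigcup_{i\in I}O(q_i)$ for some $q_i\in\Proc_F$. Given $p\in U$, pick $i$ with $p\in O(q_i)$. Then $p\in O(q_i)\subseteq U$, and $q=q_i$ witnesses the condition.

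\emph{Backward direction.} Suppose that for every $p\in U$ there is $q_p\in\Proc_F$ with $p\in O(q_p)\subseteq U$. Then
\[
U=\bigcup_{p\in U}O(q_p).
\]
The inclusion $\subseteq$ holds because each $p\in U$ lies in $O(q_p)$. The inclusion $\supseteq$ holds because each $O(q_p)\subseteq U$. Hence $U$ is a union of basic sets and is open. The case $U=\varnothing$ is the empty union.

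The only point needing care is the identification of $\langle\mathcal B_{\mathrm{sim}}\rangle$ with the family of unions of basic sets. This rests on the basis properties, namely covering via $O(0)=\Proc$ and closure under intersection via Lemma~\ref{lem:plus-sim-contrib}. These are exactly the content of Proposition~\ref{prop:Bsim-basis-contrib}, so I would cite that result rather than reprove them. No other step presents an obstacle.
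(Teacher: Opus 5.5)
Your proof is correct and takes the same approach as the paper, which simply invokes the standard characterization of open sets in a topology generated by a basis, relying on Proposition~\ref{prop:Bsim-basis-contrib}. You have written that standard argument out explicitly, including why $\tau_{\mathrm{sim}}$ is exactly the family of unions of sets $O(q)$: this uses $O(0)=\Proc$ and $O(q_1)\cap O(q_2)=O(q_1+q_2)$, together with distributing intersections over unions.
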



\subsection{\texorpdfstring{$\tau_O$ versus $\tau_{\mathrm{sim}}$}{tau O versus tau sim}}

We now compare \(\tau_O\) with \(\tau_{\mathrm{sim}}\). Both topologies are generated by finite observations, but they observe different kinds of information. The topology \(\tau_O\) is based on finite traces, while \(\tau_{\mathrm{sim}}\) is based on finite loop-free processes. Thus, \(\tau_{\mathrm{sim}}\) is strictly more expressive than \(\tau_O\), because it can distinguish finite branching patterns that finite trace observations cannot observe.

We first show that every trace observation can be encoded by a suitable finite loop-free process, which implies that every \(\tau_O\)-open set is also \(\tau_{\mathrm{sim}}\)-open.

\begin{theorem}
\label{thm:tauS-in-tauSim-contrib}
$\tau_O\subseteq \tau_{\mathrm{sim}}$.
\end{theorem}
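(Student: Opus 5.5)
The plan is to show that every subbasic set of \(\tau_O\) already lies in \(\tau_{\mathrm{sim}}\). Since \(\tau_O\) is by definition the smallest topology containing \(\{O(s)\mid s\in\Sigma^*\}\), and \(\tau_{\mathrm{sim}}\) is a topology, this gives \(\tau_O\subseteq\tau_{\mathrm{sim}}\) at once. For each finite trace \(s=a_1\cdots a_n\in\Sigma^*\), I will use the linear finite loop-free process
\[
q_s = a_1.a_2.\cdots a_n.0 \in \Proc_F ,
\]
with \(q_\varepsilon=0\). The central claim is that \(O(s)=O(q_s)\): the trace observation of \(s\) equals the simulation observation of \(q_s\). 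Because \(O(q_s)\in\mathcal B_{\mathrm{sim}}\), this set is \(\tau_{\mathrm{sim}}\)-open by Proposition~\ref{prop:Bsim-basis-contrib}.

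To prove \(O(s)=O(q_s)\), I argue both inclusions, by induction on \(n\) or directly.

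\emph{Forward inclusion.} Suppose \(s\in\Trc(p)\), witnessed by a path \(p=p_0\xrightarrow{a_1}p_1\cdots\xrightarrow{a_n}p_n\). Write \(q_s^{(i)}=a_{i+1}.\cdots a_n.0\) for the residuals of \(q_s\). Define
\[
R=\{(q_s^{(i)},p_i)\mid 0\le i\le n\}.
\]
The only transition of \(q_s^{(i)}\) is \(q_s^{(i)}\xrightarrow{a_{i+1}}_F q_s^{(i+1)}\), because by the transition rules \(a.q\) has exactly one transition. That transition is matched by \(p_i\xrightarrow{a_{i+1}}p_{i+1}\), and \(0\) has no transitions, so \(R\) is a simulation.

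\emph{Reverse inclusion.} Suppose \(R\) is a simulation with \((q_s,p)\in R\). Starting from \(q_s\), follow its unique transitions \(q_s^{(i)}\xrightarrow{a_{i+1}}q_s^{(i+1)}\). Each step of this chain is matched in \(p\), and this yields states \(p_1,\dots,p_n\) with \(p_{i-1}\xrightarrow{a_i}p_i\). Hence \(s\in\Trc(p)\).

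Together these give \(O(s)=O(q_s)\). As an optional check, Proposition~\ref{prop:BS-process-basis-contrib} lets us go further: a basic set \(\bigcap_i O(s_i)\) equals \(O(q_{s_1}+\cdots+q_{s_n})\) by Lemma~\ref{lem:plus-sim-contrib}, so every \(\tau_O\)-basic set is even \(\tau_{\mathrm{sim}}\)-basic. The subbasis argument alone already suffices.

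I expect no real obstacle; the only point needing care is the bookkeeping of the transitions of \(q_s\). Concretely, one must confirm from the inductive definition of \(\to_F\) that a prefix term \(a.q\) has exactly one transition, so that the simulation relation \(R\) is well defined. The case \(s=\varepsilon\) also needs a check: there \(O(\varepsilon)=\Proc=O(0)\), which holds because every process simulates \(0\).
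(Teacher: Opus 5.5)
Your proof is correct and follows the paper's argument: the same linear encoding $q_s$, the same key identity $O(s)=O(q_s)$, and the same conclusion that every subbasic $\tau_O$-set is $\tau_{\mathrm{sim}}$-open. The only difference is cosmetic. The paper proves $O(s)=O(q_s)$ by induction on $|s|$, using the one-step unfolding $a.q_{s'}\sqsubseteq_S p \iff \exists p'\,(p\xrightarrow{a}p' \wedge q_{s'}\sqsubseteq_S p')$, whereas you build the simulation explicitly along a witnessing path, which has the small advantage of justifying that unfolding rather than just citing it.
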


\begin{proof}[Proof idea]
Every finite trace observation can be represented as a simulation observation.
Indeed, if \(s=a_1\cdots a_n\), then \(s\) can be encoded by the finite linear
process $q_s=a_1.\cdots a_n.0$.
A process can execute the trace \(s\) exactly when it simulates \(q_s\). Hence,
each subbasic open set \(O(s)\) of \(\tau_O\) is also open in
\(\tau_{\mathrm{sim}}\).
\end{proof}

The inclusion of Theorem \ref{thm:tauS-in-tauSim-contrib} says that simulation observations are at least as expressive as finite trace observations. The next example shows that they are strictly more expressive.

\begin{figure}[t]
\centering
\begin{tikzpicture}[scale=0.8, every node/.style={scale=0.8}, >=stealth]

\node at (0,2.6) {$\mathbf{p_1}$};

\node[circle,fill,inner sep=1.5pt] (p1) at (0,2) {};
\node[circle,fill,inner sep=1.2pt,label=left:{$\mathbf{r}$}] (p1l) at (-0.9,1.1) {};
\node[circle,fill,inner sep=1.2pt,label=right:{$\mathbf{s}$}] (p1r) at (0.9,1.1) {};
\node (p1lb) at (-0.9,0) {};
\node (p1rc) at (0.9,0) {};

\draw[->] (p1) -- (p1l) node[pos=0.45, above left=2pt] {$a$};
\draw[->] (p1) -- (p1r) node[pos=0.45, above right=2pt] {$a$};
\draw[->] (p1l) -- (p1lb) node[midway, left=2pt] {$b$};
\draw[->] (p1r) -- (p1rc) node[midway, right=2pt] {$c$};

\node at (4.5,2.6) {$\mathbf{p_2}$};

\node[circle,fill,inner sep=1.5pt] (p2) at (4.5,2) {};
\node[circle,fill,inner sep=1.2pt,label=right:{$\mathbf{t}$}] (p2m) at (4.5,1.1) {};
\node (p2l) at (3.6,0) {};
\node (p2r) at (5.4,0) {};

\draw[->] (p2) -- (p2m) node[midway, right=2pt] {$a$};
\draw[->] (p2m) -- (p2l) node[midway, left=2pt] {$b$};
\draw[->] (p2m) -- (p2r) node[midway, right=2pt] {$c$};

\end{tikzpicture}
\caption{Two trace-equivalent processes with different branching structures.}
\label{fig:p1-p2-processes}
\end{figure}

\begin{example}[Strictness of the inclusion]
\label{exp:tauS-strictly-in-tauSim-contrib}
Consider the processes \(p_1,p_2\in\Proc\), illustrated in
Figure~\ref{fig:p1-p2-processes}. The process \(p_1\) has exactly two
\(a\)-successors \(r\) and \(s\), where \(r\) has a \(b\)-transition but no
\(c\)-transition, and \(s\) has a \(c\)-transition but no \(b\)-transition. The
process \(p_2\) has an \(a\)-successor \(t\) with both a \(b\)-transition and a
\(c\)-transition. Thus, \(p_1\) and \(p_2\) have the same finite traces, that is $\Trc(p_1)=\Trc(p_2)$. In particular, both processes can execute the traces \(ab\) and \(ac\).

This means that \(\tau_O\) cannot distinguish \(p_1\) and \(p_2\), because \(\tau_O\) is based only on finite trace observations. However, \(\tau_{\mathrm{sim}}\) can observe the branching structure of the traces. To see this, consider the finite loop-free process $q=a.(b.0+c.0)$. Here, $q$ requires a process to execute an \(a\)-transition to a state which can then execute both \(b\) and \(c\).
The process \(p_2\) satisfies this observation, while \(p_1\) does not. Therefore, $q \sqsubseteq_S p_2$, while $q\not\sqsubseteq_S p_1$. Consequently, \(\tau_{\mathrm{sim}}\) can separate \(p_1\) and \(p_2\), while \(\tau_O\)
cannot. This shows that simulation observations contain information that is not captured by finite trace observations. A more detailed discussion of this distinction is provided in Appendix \ref{app-example-section}.
\end{example}

\begin{corollary}
\label{cor:tauS-in-tauSim-strict}
$\tau_O\subsetneq \tau_{\mathrm{sim}}$.
\end{corollary}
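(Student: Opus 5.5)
The plan is to combine Theorem~\ref{thm:tauS-in-tauSim-contrib}, which already gives $\tau_O\subseteq\tau_{\mathrm{sim}}$, with a witness that some $\tau_{\mathrm{sim}}$-open set is not $\tau_O$-open. We take the processes $p_1,p_2$ of Example~\ref{exp:tauS-strictly-in-tauSim-contrib} and the observation $q=a.(b.0+c.0)$. The set $O(q)$ belongs to the basis $\mathcal B_{\mathrm{sim}}$ of Proposition~\ref{prop:Bsim-basis-contrib}, so it is $\tau_{\mathrm{sim}}$-open. It contains $p_2$ but not $p_1$, as the example argues. Concretely, any simulation of $q$ by $p_1$ would have to relate $b.0+c.0$ to $r$ or to $s$, and neither can match both a $b$- and a $c$-transition.

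It remains to show that $O(q)\notin\tau_O$. We argue by contradiction using Proposition~\ref{prop:tau-s2-open}. If $O(q)$ were $\tau_O$-open, then since $p_2\in O(q)$ there would be finite traces $s_1,\dots,s_n\in\Trc(p_2)\cap\Sigma^*$ with $p_2\in\bigcap_i O(s_i)\subseteq O(q)$. Because $\Trc(p_1)=\Trc(p_2)$, each $s_i$ is also a trace of $p_1$. Hence $p_1\in\bigcap_i O(s_i)\subseteq O(q)$, which contradicts $q\not\sqsubseteq_S p_1$.

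More conceptually, every $\tau_O$-open set is closed under trace equivalence, and $O(q)$ is not. To make this airtight, the only point needing care is the claim $\Trc(p_1)=\Trc(p_2)$. Both processes have exactly the finite traces $\varepsilon, a, ab, ac$, and neither has infinite traces, which can be checked by listing the transitions.

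A subtlety, and the main obstacle, is that the statement is about the fixed LTS $L$. Strictness needs $p_1,p_2$ to exist in $\Proc$ and $\Sigma$ to contain at least three letters $a,b,c$. Since $L$ contains $LTS_F$, we can take $p_1=a.b.0+a.c.0$ and $p_2=a.(b.0+c.0)$ themselves as elements of $\Proc_F\subseteq\Proc$. This removes the existence issue. If $|\Sigma|<3$, the same argument should work with $a=b$ or a reused letter, for instance $p_1=a.a.0+a.0$ versus a more branching variant. Alternatively, we would simply note the assumption on $\Sigma$ as the example implicitly does.
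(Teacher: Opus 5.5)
Your proof matches the paper's: Theorem~\ref{thm:tauS-in-tauSim-contrib} gives the inclusion, and $O(q)$ for $q=a.(b.0+c.0)$ separates the trace-equivalent processes $p_1=a.b.0+a.c.0$ and $p_2=a.(b.0+c.0)$, taken in $\Proc_F\subseteq\Proc$. The contradiction then comes from Proposition~\ref{prop:tau-s2-open}, exactly as in the appendix proof of Example~\ref{exp:tauS-strictly-in-tauSim-contrib}.

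Your fallback remark for small alphabets is only half right. For $|\Sigma|=2$, reusing a letter works, e.g.\ $q=a.(a.0+b.0)$ against $a.a.0+a.b.0$. For $|\Sigma|=1$, however, no witness exists, so a candidate such as $a.a.0+a.0$ against some more branching variant cannot succeed. The reason is that if $q\in\Proc_F$ has longest path length $d$, then $O(q)=O(a^d)$: relate every state reachable from $q$ in exactly $i$ steps to the $i$-th state of an $a^d$-path of $p$, and this is a simulation. Hence $\tau_{\mathrm{sim}}=\tau_O$ for a unary alphabet, and the corollary, like the paper's example, tacitly requires $|\Sigma|\ge 2$.
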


\section{Verification Induced by Observation Topologies}
\label{sec:verification}

Sections \ref{sec:trace-based topologies} and \ref{sec:process-topologies}
introduced topologies generated by finite observations.
We now give a general verification view of such topologies. The main idea is
that an open set represents a property whose satisfaction can be confirmed by a
finite observation.


Let \(\Obs\) be a set of finite observations, and let $O:\Obs\to\mathcal P(\Proc)$ assign to each observation \(o\in \Obs\) the set \(O(o)\) of processes
satisfying that observation. 
Thus, $O(o) = \{ p \in \Proc \mid p \text{ satisfies } o\,\}$.

Assume that $\mathcal B_{G}=\{\,O(o)\mid o\in\Obs\,\}$ 
is a basis for a general topology \(\tau_{G}\) on \(\Proc\).

\begin{proposition}[Basis characterization of open sets in \(\tau_{G}\)]
\label{prop:general-basis-characterization}
A set \(U\subseteq\Proc\) is open in \(\tau_G\) if and
only if for every \(p\in U\), there exists \(o\in\Obs\), such that $p\in O(o)\subseteq U$. 
\end{proposition}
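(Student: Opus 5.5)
The argument rests only on the definition of the topology generated by a basis. We read the assumption that $\mathcal B_G=\{O(o)\mid o\in\Obs\}$ is a basis for $\tau_G$ in the standard sense: every member of $\mathcal B_G$ is open in $\tau_G$, and the $\tau_G$-open sets are exactly the unions of subfamilies of $\mathcal B_G$. Equivalently, $\tau_G=\langle\mathcal B_G\rangle$, where $\mathcal B_G$ covers $\Proc$ and satisfies the intersection condition. The proof then has two directions.

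\emph{($\Rightarrow$)} Let $U$ be open in $\tau_G$. Then $U=\bigcup_{i\in I}O(o_i)$ for some family $(o_i)_{i\in I}$ in $\Obs$. Take $p\in U$. Then $p\in O(o_i)$ for some $i$, and $O(o_i)\subseteq U$, so $o=o_i$ is the required observation.

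\emph{($\Leftarrow$)} Suppose every $p\in U$ has some $o_p\in\Obs$ with $p\in O(o_p)\subseteq U$. Then
\[
U=\bigcup_{p\in U}O(o_p).
\]
The inclusion $\subseteq$ holds because each $p$ lies in its own $O(o_p)$. The inclusion $\supseteq$ holds because each $O(o_p)\subseteq U$. So $U$ is a union of basic sets and is therefore open. The case $U=\varnothing$ is covered by the empty union.

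The only point needing care is how ``basis for $\tau_G$'' is read. If $\tau_G$ is instead defined as the smallest topology containing $\mathcal B_G$, we must check that the family of unions of members of $\mathcal B_G$ is itself a topology. Then it coincides with $\langle\mathcal B_G\rangle$. This is the standard fact:
\begin{itemize}
\item $\Proc$ is such a union because $\mathcal B_G$ covers $\Proc$.
\item Arbitrary unions of unions are again unions.
\item For finite intersections, given $x\in O(o_1)\cap O(o_2)$, the basis property supplies $o_3$ with $x\in O(o_3)\subseteq O(o_1)\cap O(o_2)$. Hence $O(o_1)\cap O(o_2)$ is a union of basic sets, and distributivity then handles intersections of two arbitrary unions.
\end{itemize}
This is the same reasoning as behind Propositions~\ref{prop:tau-s2-open} and~\ref{prop:tau-sim-open}, so the general statement follows in exactly the same way.
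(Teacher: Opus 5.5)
Your proposal is correct and follows the paper's approach: the paper simply invokes the standard basis characterization of open sets, and you write that standard argument out in full, including the check that unions of basic sets form a topology equal to $\langle\mathcal B_G\rangle$.
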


\subsection{Topological Verification} 
\label{topol-verif}

Throughout this subsection, verification means \textit{satisfaction} verification, that is, a verifier is used to confirm that a process belongs to a property \(L\). 

We now define verification with respect to the basis \(\mathcal B_G\). Here, a verifier receives a finite observation as input and may answer \(\mathsf{yes}\) when that observation guarantees membership in the property being verified.

\begin{definition}[Topological verifier]
\label{Topological verifier}
Let \(L\subseteq\Proc\). A $\mathcal B_{G}$-\emph{verifier} for \(L\) is a function $v: \Obs\to\{\mathsf{yes},\mathsf{?}\}$, such that, for every \(o,o'\in \Obs\):
\begin{enumerate}
\item[i.] \emph{(Soundness.)} If \(v(o)=\mathsf{yes}\), then \(O(o)\subseteq L\).
\item[ii.] \textit{(Monotonicity.)} If \(v(o)=\mathsf{yes}\) and \(O(o')\subseteq O(o)\), then
\(v(o')=\mathsf{yes}\).
\end{enumerate}
\end{definition}

\begin{definition}[Topological verifiability]
A property \(L\subseteq\Proc\) is $\mathcal B_{G}$-\emph{verifiable} if there exists a $\mathcal B_{G}$-verifier \(v\) for \(L\) such
that, for every \(p\in L\), there exists \(o\in \Obs\) with $p\in O(o)$ and $v(o)=\mathsf{yes}$. 
\end{definition}

\begin{theorem}
\label{thm:general-topological-verification}
A property \(L\subseteq\Proc\) is $\mathcal B_{G}$-verifiable if and only if
$L\in \tau_{G}$.
\end{theorem}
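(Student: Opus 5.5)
The plan is to prove the two directions separately, using Proposition~\ref{prop:general-basis-characterization} as the bridge between openness and the existence of basic observations inside \(L\).

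\emph{Verifiable implies open.} Suppose \(v\) is a \(\mathcal B_G\)-verifier witnessing that \(L\) is \(\mathcal B_G\)-verifiable, and let \(p\in L\). By verifiability there is \(o\in\Obs\) with \(p\in O(o)\) and \(v(o)=\mathsf{yes}\). Soundness then gives \(O(o)\subseteq L\), so \(p\in O(o)\subseteq L\). Since \(p\) was arbitrary, Proposition~\ref{prop:general-basis-characterization} yields \(L\in\tau_G\). Monotonicity is not needed in this direction.

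\emph{Open implies verifiable.} Suppose \(L\in\tau_G\). Define the canonical verifier
\[
v(o)=\mathsf{yes} \text{ if } O(o)\subseteq L, \qquad v(o)=\mathsf{?} \text{ otherwise.}
\]
We check the two conditions of Definition~\ref{Topological verifier}.
\begin{itemize}
\item Soundness holds by construction.
\item For monotonicity, if \(v(o)=\mathsf{yes}\) and \(O(o')\subseteq O(o)\), then \(O(o')\subseteq O(o)\subseteq L\), so \(v(o')=\mathsf{yes}\).
\end{itemize}
For the verifiability clause, take \(p\in L\). Since \(L\) is open, Proposition~\ref{prop:general-basis-characterization} gives some \(o\in\Obs\) with \(p\in O(o)\subseteq L\), and hence \(v(o)=\mathsf{yes}\).

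The argument is essentially definitional. The only point needing care is that monotonicity is formulated in terms of the sets \(O(o)\) rather than the observations themselves, so \(v\) must depend on \(o\) only through \(O(o)\). The canonical choice above does exactly this. Note also that it handles observations \(o\neq o'\) with \(O(o)=O(o')\) correctly.
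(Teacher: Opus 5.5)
Your proposal is correct and follows the paper's proof essentially step for step. The forward direction uses soundness to obtain $p\in O(o_p)\subseteq L$ for each $p\in L$; the paper writes this as $L=\bigcup_{p\in L}O(o_p)$ instead of invoking Proposition~\ref{prop:general-basis-characterization}, which amounts to the same thing. The converse uses the same canonical verifier, $\mathsf{yes}$ exactly when $O(o)\subseteq L$.
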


\begin{proof}[Proof idea]
The equivalence follows from the basis characterization of open sets. If
\(L\) is open, then every \(p\in L\) has a basic neighbourhood \(O(o)\), such
that \(p\in O(o)\subseteq L\). The verifier $\nu$ can answer \(\mathsf{yes}\) on
exactly such observations. Conversely, if a verifier accepts an observation
\(o\), soundness gives \(O(o)\subseteq L\). Since every \(p\in L\) 
is in 
some
accepted observation, 
\(L\) is a union of basic open sets, 
and
therefore \(L\in\tau_{G}\).
\end{proof}

\subsection{Monitorability Induced by \texorpdfstring{$\tau_O$}{tau-O} and \texorpdfstring{$\tau_{\mathrm{sim}}$}{tau-sim}}
\label{sec:monitorability}

In Subsection \ref{topol-verif}, we showed that for any topology generated by finite observations, the open sets are exactly the properties verifiable by those observations. We now apply this general result to the two observation topologies introduced earlier. For \(\tau_O\), the observations are finite sets of finite traces. For \(\tau_{\mathrm{sim}}\), the observations are finite loop-free processes. In these two cases, the corresponding verification notions are monitorability notions. 

Throughout this section, monitorability refers to \textit{satisfaction} monitorability. Thus, monitors are used to confirm that a process belongs to a property \(L\) using finite observations. We do not separately study violation monitorability, since detecting violations of \(L\) corresponds to satisfaction monitorability of the complement \(\Proc\setminus L\).
We also note that monitorability is usually defined with respect to observations that consist of a single finite trace \cite{Aceto2021,inbook}. The following definitions, however, assume that monitors can verify a system by observing multiple executions (\emph{i.e.}, traces), which is a reasonable assumption from the point of view of Runtime Verification, and it fits well with our topological point of view.

\subsubsection{Multi-Trace Monitorability}

For \(\tau_O\), an observation is a finite set of finite traces.  Let $\mathcal P_{\mathrm{f}}(\Sigma^*)$ be the set of all possible finite subsets of $\Sigma^*$, and \(D = \{s_1, \dots, s_n\}\) a finite set of finite traces. Let
$\Obs_O=\mathcal P_f(\Sigma^*)$.
Thus, \(O(D)\) is the set of processes that can execute all finite traces in
\(D\).

\paragraph{Topological Formulation of Multi-Trace Monitorability.}

We first define monitorability in terms of the basic opens of $\tau_O$. A monitor receives as input a finite set $D$ of finite traces and may answer \(\mathsf{yes}\) when these traces guarantee membership in the property \(L\).

\begin{definition}[Multi-trace monitor]
\label{def:multi-trace-monitor-ms}
Let \(L\subseteq \Proc\). A \emph{multi-trace monitor} for \(L\) is a function $m:\mathcal P_{\mathrm{f}}(\Sigma^*)\to\{\mathsf{yes},\mathsf{?}\}$ such that, for every $C,D \in \mathcal P_{\mathrm{f}}(\Sigma^*)$:
\begin{enumerate}
\item[i.] \emph{(Soundness.)} If \(m(D)=\mathsf{yes}\), then
$\bigcap_{s\in D} O(s)\subseteq L$.
\item[ii.] \textit{(Monotonicity.)} If \(m(C)=\mathsf{yes}\) and \(C\subseteq D\), then $m(D)=\mathsf{yes}$. 
\end{enumerate}
\end{definition}

\begin{definition}[Multi-trace monitorability]
\label{def: multi-trace mon sense 1}
A property \(L\subseteq \Proc\) is \emph{multi-trace monitorable} if there exists a multi-trace monitor \(m\) for \(L\) such that for every \(p\in L\), there exists a finite set $D\subseteq \Trc(p)\cap\Sigma^*$, with $m(D)=\mathsf{yes}$.
\end{definition}

Theorem \ref{thm:general-topological-verification} immediately yields the corresponding characterization for \(\tau_O\). In this case, verifiability coincides with multi-trace monitorability.

\begin{corollary}
\label{Equivalence L top with L in tau s}
A property \(L\subseteq \Proc\) is multi-trace monitorable if and only if $
L\in\tau_O$.
\end{corollary}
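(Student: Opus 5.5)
The plan is to obtain Corollary~\ref{Equivalence L top with L in tau s} as an instance of Theorem~\ref{thm:general-topological-verification}, taking $\Obs=\Obs_O=\mathcal P_f(\Sigma^*)$ and $O(D)=\bigcap_{s\in D}O(s)$. The first step is to check that $\mathcal B_G=\{O(D)\mid D\in\mathcal P_f(\Sigma^*)\}$ is a basis generating $\tau_O$. By Proposition~\ref{prop:BS-process-basis-contrib}, the finite intersections with $n\ge 1$ already form a basis for $\tau_O$. The only new member is $O(\varnothing)=\Proc$, and $\Proc=O(\varepsilon)$, since $\varepsilon$ is a trace of every process. So $\mathcal B_G=\mathcal B_O$, and therefore $\tau_G=\tau_O$.

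The second step is to match the two notions of verifier. A multi-trace monitor is soundness-identical to a $\mathcal B_G$-verifier, but its monotonicity condition is phrased as $C\subseteq D$ rather than $O(D)\subseteq O(C)$. The success condition also differs: it requires $D\subseteq\Trc(p)\cap\Sigma^*$, which is the same as $p\in O(D)$. So the acceptance conditions agree, and only monotonicity needs care. Rather than proving the two monotonicity notions equivalent, I will prove the two directions of the corollary directly, reusing the argument of Theorem~\ref{thm:general-topological-verification}.

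For $(\Leftarrow)$, suppose $L\in\tau_O$. Define $m(D)=\mathsf{yes}$ iff $O(D)\subseteq L$. Soundness holds by definition. Monotonicity holds because $C\subseteq D$ gives $O(D)\subseteq O(C)\subseteq L$. For the success condition, take $p\in L$. Proposition~\ref{prop:tau-s2-open} gives traces $s_1,\dots,s_n\in\Trc(p)\cap\Sigma^*$ with $p\in\bigcap_i O(s_i)\subseteq L$. Setting $D=\{s_1,\dots,s_n\}$ gives $m(D)=\mathsf{yes}$.

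For $(\Rightarrow)$, suppose $m$ witnesses multi-trace monitorability. For each $p\in L$, pick $D_p\subseteq\Trc(p)\cap\Sigma^*$ with $m(D_p)=\mathsf{yes}$. Then $p\in O(D_p)$, and soundness gives $O(D_p)\subseteq L$. Hence $L=\bigcup_{p\in L}O(D_p)$ is a union of basic open sets, so $L\in\tau_O$; monotonicity is not even needed here. I expect the only real subtlety to be the mismatch between set-inclusion monotonicity and observation-inclusion monotonicity. The direct two-direction argument avoids it, so I do not anticipate a genuine obstacle.
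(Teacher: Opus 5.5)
Your proof is correct and is essentially the paper's argument. The paper instantiates Theorem~\ref{thm:general-topological-verification} with $\Obs=\mathcal P_f(\Sigma^*)$ and $O(D)=\bigcap_{s\in D}O(s)$, and your two directions are that theorem's proof specialised to this case. Re-running it directly also makes explicit two points the paper's one-line proof leaves implicit. First, $O(\varnothing)=\Proc=O(\varepsilon)$, so the observation family really is the basis $\mathcal B_O$. Second, the monitor's monotonicity ($C\subseteq D$) is weaker than the verifier's ($O(D)\subseteq O(C)$), which is harmless because $(\Rightarrow)$ uses only soundness and the monitor built in $(\Leftarrow)$ satisfies the stronger condition anyway.
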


\paragraph{Operational Formulation of Multi-Trace Monitorability.}

The topological formulation views a monitor as a function returning
\(\mathsf{yes}\) or \(\mathsf{?}\). We now give an equivalent operational
formulation in which a monitor is represented by the collection of finite
trace sets that it accepts.

\begin{definition}[Multi-trace monitor]
\label{def: Multi-trace monitor}
A \emph{multi-trace monitor} is a set $m\subseteq \mathcal P_f(\Sigma^*)$,
such that, for every \(C,D\in\mathcal P_f(\Sigma^*)\), if \(C\in m\) and \(C\subseteq D\), then \(D\in m\).

\end{definition}

\begin{definition}[Acceptance of a process]
\label{def:process-trace-acceptance}
Let \(m\subseteq \mathcal P_f(\Sigma^*)\) be a multi-trace monitor. For
\(p\in\Proc\), define
$\mathrm{acc_m}(p)
\:\Longleftrightarrow\:
\exists D\subseteq \Trc(p)\cap\Sigma^*$,
such that
$D\in m$.
\end{definition}

\begin{definition}[Soundness]
\label{def:multi-trace-monitor-soundness}
Let \(L\subseteq \Proc\). A multi-trace monitor
\(m\subseteq \mathcal P_f(\Sigma^*)\) is \emph{sound for \(L\)} if for every
\(p\in\Proc\), $\mathrm{acc_m}(p)
\;\Longrightarrow\:
p\in L$.
\end{definition}

\begin{definition}[Multi-trace monitorability]
\label{def:multi-trace-satisfaction-monitorability}
A property \(L\subseteq \Proc\) is \emph{multi-trace monitorable}
if there exists a multi-trace monitor $m\subseteq \mathcal P_f(\Sigma^*)$
that is sound for \(L\), and such that for every \(p\in L\), $\mathrm{acc_m}(p)$.
Equivalently, for every \(p\in L\), there exists
\(D\subseteq \Trc(p)\cap\Sigma^*\), such that $D\in m$.
\end{definition}


The two formulations describe the same notion of finite trace-based
monitorability. The topological formulation records acceptance through the
output \(\mathsf{yes}\), while the operational formulation records acceptance
by membership in the set \(m\). The following theorem makes this equivalence
explicit and connects both formulations to openness in \(\tau_O\).

\begin{theorem}
\label{theorem:topological-monitoring-system-equivalence}
Let \(L\subseteq \Proc\). The following are equivalent: 
\begin{enumerate}[label=\textup{(\roman*)}
]
    \item[i.] \(L\in\tau_O\).
    \item[ii.] \(L\) is multi-trace monitorable according to
    Definition~\ref{def: multi-trace mon sense 1}.
    \item[iii.] \(L\) is multi-trace monitorable according to
    Definition~\ref{def:multi-trace-satisfaction-monitorability}.
\end{enumerate}
\end{theorem}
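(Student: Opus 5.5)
The plan is to prove (i)$\Leftrightarrow$(ii) by quoting Corollary~\ref{Equivalence L top with L in tau s}, and then to close the cycle with (ii)$\Rightarrow$(iii) and (iii)$\Rightarrow$(i). For the first equivalence, Corollary~\ref{Equivalence L top with L in tau s} needs one small check: a set $D\subseteq\Trc(p)\cap\Sigma^*$ is exactly a finite set with $p\in O(D)$. Also, the monotonicity condition $C\subseteq D$ of Definition~\ref{def:multi-trace-monitor-ms} is implied by the general condition $O(D)\subseteq O(C)$, since $C\subseteq D$ gives $O(D)\subseteq O(C)$. So I would rather prove (i)$\Leftrightarrow$(ii) directly, reusing the proof idea of Theorem~\ref{thm:general-topological-verification}, than rely on the corollary blindly.

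\textbf{(i)$\Rightarrow$(ii).} Assume $L\in\tau_O$ and set $m(D)=\mathsf{yes}$ iff $O(D)\subseteq L$.
\begin{itemize}
\item Soundness holds by construction.
\item Monotonicity: if $C\subseteq D$ then $O(D)\subseteq O(C)\subseteq L$.
\item Coverage: for $p\in L$, Proposition~\ref{prop:tau-s2-open} gives traces $s_1,\dots,s_n\in\Trc(p)\cap\Sigma^*$ with $p\in O(\{s_1,\dots,s_n\})\subseteq L$, so $D=\{s_1,\dots,s_n\}$ works.
\end{itemize}

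\textbf{(ii)$\Rightarrow$(iii).} Given a function monitor $m$, take the set $m'=\{D\mid m(D)=\mathsf{yes}\}$.
\begin{itemize}
\item $m'$ is upward closed by the monotonicity of $m$.
\item Soundness: if $\mathrm{acc}_{m'}(p)$, there is $D\subseteq\Trc(p)\cap\Sigma^*$ with $D\in m'$. Then $p\in O(D)$, and soundness of $m$ gives $O(D)\subseteq L$, so $p\in L$.
\item Coverage: for $p\in L$, the $D$ supplied by Definition~\ref{def: multi-trace mon sense 1} lies in $m'$ and $D\subseteq\Trc(p)\cap\Sigma^*$, so $\mathrm{acc}_{m'}(p)$.
\end{itemize}

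\textbf{(iii)$\Rightarrow$(i).} Given a sound set monitor $m$ covering $L$, I claim
\[
L=\bigcup_{D\in m}O(D).
\]
\begin{itemize}
\item Inclusion $\subseteq$: for $p\in L$, acceptance gives $D\in m$ with $D\subseteq\Trc(p)$, i.e.\ $p\in O(D)$.
\item Inclusion $\supseteq$: if $p\in O(D)$ with $D\in m$, then $D\subseteq\Trc(p)\cap\Sigma^*$, so $\mathrm{acc}_m(p)$ holds and soundness gives $p\in L$.
\end{itemize}
Each $O(D)$ is a basic open set by Proposition~\ref{prop:BS-process-basis-contrib}. The case $D=\varnothing$ gives $O(\varnothing)=\Proc$, which is open anyway. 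Hence $L\in\tau_O$.

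The main subtlety is soundness in the operational formulation. Definition~\ref{def:multi-trace-monitor-soundness} quantifies over processes, not over the observation sets $O(D)$. The key step that bridges the two is the equivalence $p\in O(D)\iff D\subseteq\Trc(p)\cap\Sigma^*$, and I will state it explicitly before the case analysis.
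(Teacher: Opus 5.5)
Your proof is correct, and it differs from the paper's mainly in how the equivalences are organized. The paper obtains (i)$\Leftrightarrow$(ii) simply by citing Corollary~\ref{Equivalence L top with L in tau s}. It then proves (ii)$\Leftrightarrow$(iii) by translating monitors in both directions: your $m'=\{D\mid m(D)=\mathsf{yes}\}$ for (ii)$\Rightarrow$(iii), and conversely $m'(D)=\mathsf{yes}$ iff $D\in m$. For the latter, the topological soundness $\bigcap_{s\in D}O(s)\subseteq L$ is checked via the same bridge $p\in O(D)\iff D\subseteq\Trc(p)\cap\Sigma^*$ that you isolate.

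Your (ii)$\Rightarrow$(iii) is identical to the paper's, but you close a cycle instead. You get (i)$\Rightarrow$(ii) from the canonical monitor $m(D)=\mathsf{yes}$ iff $O(D)\subseteq L$. You get (iii)$\Rightarrow$(i) from the identity $L=\bigcup_{D\in m}O(D)$, which skips the detour back to function monitors.

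What your route buys is self-containment. As you rightly observe, Corollary~\ref{Equivalence L top with L in tau s} is not a literal instance of Theorem~\ref{thm:general-topological-verification}. The monotonicity clause $C\subseteq D$ of Definition~\ref{def:multi-trace-monitor-ms} is strictly weaker than the clause $O(D)\subseteq O(C)$ of Definition~\ref{Topological verifier}: for example, $O(\{ab\})\subseteq O(\{a\})$ although $\{a\}\not\subseteq\{ab\}$. The corollary is nonetheless true, because the half of the general theorem going from verifiability to openness never uses monotonicity, and the other half builds a verifier satisfying the stronger clause. The paper leaves this check implicit, and your direct arguments avoid needing it.

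What the paper's route buys is reuse of the general verification theorem and an explicit two-way translation between the two monitor formats.
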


\subsubsection{Simulation Monitorability}

For \(\tau_{\mathrm{sim}}\), an observation is a finite loop-free process.
Let $\Obs_{\mathrm{sim}}=\mathrm{Proc}_F$. For \(q\in\Obs_{\mathrm{sim}}\), define $O(q)=\{\,p\in\Proc\mid q \sqsubseteq_S p\,\}$.
Since the basic open sets of \(\tau_{\mathrm{sim}}\) are the sets \(O(q)\), a simulation monitor receives as an input a finite loop-free process \(q\). The monitor may answer \(\mathsf{yes}\) when this finite simulation observation guarantees membership in the property \(L\).

\begin{definition}[Simulation monitor (Topological formulation)]
\label{simulation monitor}
Let \(L\subseteq \Proc\). A \emph{simulation monitor} for \(L\) is a function $m: \mathrm{Proc_F}\to\{\mathsf{yes},\mathsf{?}\},$
such that:
\begin{itemize}
\item[i.] \emph{(Soundness.)} If \(m(q)=\mathsf{yes}\), then $O(q)\subseteq L$, where $O(q)=\{\,p\in \mathrm{Proc}\mid q \sqsubseteq_S p\,\}$.
\item[ii.] \textit{(Monotonicity.)} If \(m(q)=\mathsf{yes}\) and $O(r)\subseteq O(q)$, for $r,q \in \mathrm{Proc_F}$, then $m(r)=\mathsf{yes}$.
\end{itemize}
\end{definition}

\begin{definition}[Simulation monitorability]
\label{def: sim-mon}
A property \(L\subseteq \Proc\) is \emph{simulation-monitorable} if there exists a simulation monitor \(m\) for \(L\), such that for every \(p\in L\), there exists \(q\in \mathrm{Proc_F}\) with $p\in O(q)$ and $m(q)=\mathsf{yes}$. Equivalently, for every \(p\in L\), there exists \(q\in \mathrm{Proc_F}\), such that $q \sqsubseteq_S p$ and $m(q)=\mathsf{yes}$.
\end{definition}

Definition \ref{def: sim-mon} matches Proposition \ref{prop:tau-sim-open}, about the characterization of open sets in
\(\tau_{\mathrm{sim}}\). A process \(p\in L\) must have a finite simulation
observation \(q\), such that \(p\in O(q)\) and all processes satisfying the same
observation belong to \(L\).
%
%
Theorem \ref{thm:general-topological-verification} also gives the corresponding characterization for the simulation topology. Here, the finite observations are finite loop-free processes, and verifiability coincides with simulation monitorability.

\begin{corollary}
\label{simul mon in tau sim}
A property \(L\subseteq \mathrm{Proc}\) is simulation-monitorable if and only if $L\in\tau_{\mathrm{sim}}$. 
\end{corollary}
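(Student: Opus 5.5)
The plan is to obtain Corollary~\ref{simul mon in tau sim} as a direct instance of Theorem~\ref{thm:general-topological-verification}. We set $\Obs=\Obs_{\mathrm{sim}}=\mathrm{Proc}_F$ and let $O(q)=\{p\in\Proc\mid q\sqsubseteq_S p\}$ for $q\in\mathrm{Proc}_F$. Then the family $\mathcal B_G$ of Section~\ref{sec:verification} is exactly $\mathcal B_{\mathrm{sim}}$. By Proposition~\ref{prop:Bsim-basis-contrib}, this family is a basis, and the topology it generates is $\tau_{\mathrm{sim}}$ by definition. So the standing hypothesis of Section~\ref{sec:verification} holds, with $\tau_G=\tau_{\mathrm{sim}}$.

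Next we match the definitions. A simulation monitor in the sense of Definition~\ref{simulation monitor} is a function $\mathrm{Proc}_F\to\{\mathsf{yes},\mathsf{?}\}$ satisfying soundness ($m(q)=\mathsf{yes}\Rightarrow O(q)\subseteq L$) and monotonicity ($m(q)=\mathsf{yes}$ and $O(r)\subseteq O(q)$ imply $m(r)=\mathsf{yes}$). Read literally, these are the two clauses of Definition~\ref{Topological verifier} with $o=q$ and $o'=r$. Hence simulation monitors for $L$ are exactly the $\mathcal B_{\mathrm{sim}}$-verifiers for $L$.

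Similarly, the coverage condition in Definition~\ref{def: sim-mon} asks that every $p\in L$ have some $q$ with $p\in O(q)$ and $m(q)=\mathsf{yes}$. This is verbatim the condition in the definition of $\mathcal B_G$-verifiability. Therefore $L$ is simulation-monitorable iff $L$ is $\mathcal B_{\mathrm{sim}}$-verifiable, and Theorem~\ref{thm:general-topological-verification} gives that this holds iff $L\in\tau_{\mathrm{sim}}$.

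There is no real obstacle here. The only point needing care is confirming that $\tau_{\mathrm{sim}}$ is literally the topology generated by $\mathcal B_{\mathrm{sim}}$, so that the basis hypothesis of the general theorem applies; Proposition~\ref{prop:Bsim-basis-contrib} supplies this.

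As a sanity check, or if one prefers a direct argument, the two directions can also be shown by hand.

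\emph{From open to monitorable.} Given $L$ open, define $m(q)=\mathsf{yes}$ iff $O(q)\subseteq L$. Soundness is immediate, and monotonicity holds by transitivity of inclusion. Coverage follows from Proposition~\ref{prop:tau-sim-open}.

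\emph{From monitorable to open.} Given a simulation monitor $m$ witnessing monitorability, we have $L=\bigcup\{O(q)\mid m(q)=\mathsf{yes}\}$. The inclusion $\supseteq$ is soundness and $\subseteq$ is coverage, so $L$ is a union of basic open sets and hence open.
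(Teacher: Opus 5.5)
Your proposal is correct and matches the paper's proof, which likewise obtains the corollary by instantiating Theorem~\ref{thm:general-topological-verification} with $\Obs=\mathrm{Proc}_F$ and $O(q)=\{p\in\Proc\mid q\sqsubseteq_S p\}$. Your check that $\mathcal B_{\mathrm{sim}}$ is a basis generating $\tau_{\mathrm{sim}}$, and your direct two-direction argument, spell out details the paper leaves implicit.
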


\subsection{Beyond Simulation Observations}
\label{sec:beyond simulation}

In Corollary \ref{simul mon in tau sim}, we showed that simulation monitorability corresponds exactly to
openness in \(\tau_{\mathrm{sim}}\). A natural next step is to ask whether the same construction
works when simulation is replaced by stronger behavioural relations. This is motivated by the standard hierarchy of preorders \cite{GLABBEEK20013}, where ready simulation and complete simulation appear as
strict refinements of ordinary simulation.

\paragraph{Complete and Ready Simulation Observations.}

Complete and ready simulation are standard refinements of ordinary simulation. Complete simulation adds information about deadlocks, while ready simulation adds information about the actions that are initially enabled. We study what
happens when simulation observations are replaced by complete simulation and
ready simulation observations.

We define complete simulation and ready simulation \cite{GLABBEEK20013} uniformly using an index
\(X\in\{\mathrm{CS},\mathrm{RS}\}\).

\begin{definition}[\(X\)-simulation]
\label{def:X-simulation}
Let \(X\in\{\mathrm{CS},\mathrm{RS}\}\). A relation $R\subseteq \mathrm{Proc_F}\times \Proc$
is called an \emph{\(X\)-simulation} if whenever \((q,p)\in R\), the following
conditions hold:
\begin{enumerate}
\item[i.] if \(q\xrightarrow{a}_F q'\), then there exists \(p'\in\Proc\), such that $p\xrightarrow{a} p'$ and $(q',p')\in R$,
\item[ii.] the following \(X\)-condition holds: $\begin{cases}
I(q)=\varnothing \iff I(p)=\varnothing,
& \text{if } X=\mathrm{CS},\\[1mm]
I(q)=I(p),
& \text{if } X=\mathrm{RS}.
\end{cases}$
\end{enumerate}
\end{definition}

The case \(X=\mathrm{CS}\) corresponds to complete simulation, while the case
\(X=\mathrm{RS}\) corresponds to ready simulation. In the following, \(\sqsubseteq_{\mathrm{CS}}\) and \(\sqsubseteq_{\mathrm{RS}}\) denote complete and ready simulation, respectively.

\begin{definition}[\(X\)-simulation preorder]
\label{def:X-simulation-preorder}
For \(q\in\mathrm{Proc_F}\) and \(p\in\Proc\), we write
$q\sqsubseteq_X p$, if there exists an \(X\)-simulation relation $R\subseteq \mathrm{Proc_F}\times\Proc$, such that
$(q,p)\in R$.
\end{definition}

\begin{definition}[\(X\)-simulation observation]
\label{def:X-simulation-observation}
For \(q\in\mathrm{Proc_F}\), define $S_X(q)=\{\,p\in\Proc\mid q\sqsubseteq_X p\,\}$. Let $\mathcal F_X
=
\{\,S_X(q)\mid q\in\mathrm{Proc_F}\,\}$.
\end{definition}

We say that a family $\mathcal F\subseteq\mathcal P(\Proc)$ satisfies the \emph{covering property} if
$\bigcup_{U\in\mathcal F} U=\Proc$. 
Since every basis for a topology on $\Proc$ must satisfy the covering property, a family that fails to cover $\Proc$ cannot be a basis.

For ordinary simulation, the sets \(O(q)\) cover the process space and form a
basis. For complete simulation and ready simulation, this can fail. The reason
is that finite loop-free observations eventually terminate, while some processes
may continue forever without reaching a deadlock. Therefore, such processes may
not satisfy any complete-simulation or ready-simulation observation. The next
result shows that these observation families fail already at the covering
property.

\begin{proposition}
\label{prop:FCS-FRS-not-cover}
For each \(X\in\{\mathrm{CS},\mathrm{RS}\}\), the family $
\mathcal F_X
=
\{\,S_X(q)\mid q\in\mathrm{Proc_F}\,\}$ does not satisfy the covering property. Hence, neither
\(\mathcal F_{\mathrm{CS}}\) nor \(\mathcal F_{\mathrm{RS}}\) is a basis.
\end{proposition}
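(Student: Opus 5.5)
The plan is to exhibit a single process that lies in no set $S_X(q)$, for either choice of $X$. Since the fixed LTS $L$ is arbitrary, I will add a process with a self-loop: fix $a\in\Sigma$ and let $p_\omega$ be a process with $p_\omega\xrightarrow{a}p_\omega$ as its only transition. This is the process $a^\omega$ from Example~\ref{ex:linear-countably-branching}, and it can be added to $\Proc$ without loss of generality. The goal is then to show that $q\not\sqsubseteq_X p_\omega$ for every $q\in\mathrm{Proc_F}$ and each $X\in\{\mathrm{CS},\mathrm{RS}\}$. Once this holds, $p_\omega\notin\bigcup\mathcal F_X$, so the covering property fails, and the remark preceding the proposition shows that $\mathcal F_X$ is not a basis.

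The key step is an induction on the structure (or the depth) of $q\in\mathrm{Proc_F}$. Suppose, for contradiction, that $R$ is an $X$-simulation with $(q,p_\omega)\in R$. Every process reachable from $p_\omega$ is $p_\omega$ itself, so every pair in $R$ reachable from $(q,p_\omega)$ by clause~(i) has the form $(q',p_\omega)$. Since $q$ is finite and loop-free, the chain of transitions $q=q_0\xrightarrow{a_1}_F q_1\xrightarrow{a_2}_F\cdots$ must terminate. Following transitions from $q$ therefore reaches some $q'$ with $I(q')=\varnothing$, with $(q',p_\omega)\in R$ forced by clause~(i). To reach it, at each step I pick any available transition; clause~(i) always produces a matching move, namely $p_\omega\xrightarrow{a}p_\omega$ when the label is $a$. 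If the label is some $b\neq a$, clause~(i) already fails, which is also a contradiction. In every case, the pair $(q',p_\omega)$ with $I(q')=\varnothing$ violates clause~(ii):
\begin{itemize}
\item for $\mathrm{CS}$, we have $I(q')=\varnothing$ but $I(p_\omega)=\{a\}\neq\varnothing$;
\item for $\mathrm{RS}$, we have $I(q')=\varnothing\neq\{a\}=I(p_\omega)$.
\end{itemize}

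Termination of the path in $q$ needs a small lemma: every $q\in\mathrm{Proc_F}$ has finite depth. This holds because $q\xrightarrow{a}_F q'$ implies that $q'$ is a proper subterm of $q$, which follows by induction on the rules (i) and (ii) defining $\to_F$. The subterm ordering is then well-founded.

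The main obstacle I anticipate is the setup rather than the argument itself. Formally, the result requires some process in $\Proc$ with no finite "terminating" behaviour. If the fixed $L$ were only $LTS_F$, every process would be loop-free and the covering property would hold. I would therefore state explicitly that the proposition is about process spaces containing such a divergent process, and that we may adjoin $p_\omega$ to $L$ as in Example~\ref{ex:linear-countably-branching}.
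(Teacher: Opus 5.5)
Your proposal is correct and follows the paper's proof essentially step for step. You take a process whose only transition is an $a$-self-loop, follow a transition path in $q$ down to a deadlock $q'$, match it move by move so that $(q',p_\omega)\in R$, and then contradict the $X$-condition because $I(q')=\varnothing\neq\{a\}$. The differences are cosmetic: the paper uses a standalone one-state LTS over $\Sigma=\{a\}$ and treats a deadlocked $q$ as a separate case, whereas you adjoin $p_\omega$ to the fixed $L$, handle labels $b\neq a$, justify termination via the proper-subterm ordering, and make explicit (correctly) that the statement needs some divergent process in $\Proc$.
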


\begin{proof}[Proof idea]
The key point is that all finite loop-free observations eventually reach a
deadlock, while some processes may run forever without reaching one.
Both complete and ready simulations require information about whether
states can terminate, or which actions are immediately available.
Therefore, a process that never reaches a deadlock will
fail to satisfy every complete- or ready-simulation observation. Hence, the
families \(\mathcal F_{\mathrm{CS}}\) and \(\mathcal F_{\mathrm{RS}}\) do not
cover the process space, and so they cannot be bases.
\end{proof}

This result shows that stronger behavioural relations do not automatically produce suitable observation topologies. In the case of complete and ready simulation, finite loop-free observations are too limited to cover all processes. This motivates us to move to a bounded form of behavioural comparison, namely finite-depth bisimulation.

\subsection{Finite-Depth Bisimulation Observations}

Finite-depth bisimulation compares processes only up to a fixed finite depth \(k\). It keeps the two-sided matching condition of bisimulation, but applies it
only for a bounded number of steps. This allows us to retain finite observations while capturing information that ordinary simulation cannot.


\begin{definition}[\(k\)-bisimulation \cite{HennessyM85}]
\label{def:k-bisimulation}
For \(k\in\mathbb N\), the relation \(\sim_{k}\) between processes of LTSs is defined inductively as follows.

\noindent\textit{(Base case.)} For \(k=0\), $x \sim_{0} y$, for every \(x,y \in \Proc\).

\noindent\textit{(Inductive case.)} For \(k+1\), define \(x \sim_{k+1} y\), if the following hold:\begin{enumerate}[label=\textup{(\roman*)}, nosep, leftmargin=1.8em]
\item[i.] if $x\xrightarrow{a}x'$, then there exists \(y'\), such that $y\xrightarrow{a}y'$ and $x' \sim_{k} y'$,
\item[ii.] if $y\xrightarrow{a}y'$, then there exists \(x'\), such that $x\xrightarrow{a}x'$ and $x' \sim_{k} y'$.
\end{enumerate}
\end{definition}

Thus, \(x\sim_{k} y\) means that \(x\) and \(y\) cannot be distinguished by bisimulation observations of depth at most \(k\).

\begin{definition}
For \(q\in\Proc_F\) and \(k\in\mathbb N\), define $O^k_{\mathrm{bis}}(q)
=
\{\,p\in\Proc\mid q\sim_{k}p\,\}$.
The family of finite-depth bisimulation observations is $\mathcal B^{\mathrm{fin}}_{\mathrm{bis}}
=
\{\,O^k_{\mathrm{bis}}(q)\mid q\in\Proc_F,\ k\in\mathbb N\,\}$.
\end{definition}

\begin{proposition} 
\label{prop:bis-finite-depth-basis} 
The family \(\mathcal B^{\mathrm{fin}}_{\mathrm{bis}}\) is a basis for a topology $\tau^{\mathrm{fin}}_{\mathrm{bis}}$ on \(\Proc\). 
\end{proposition}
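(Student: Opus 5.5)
The plan is to verify the two standard basis axioms for \(\mathcal B^{\mathrm{fin}}_{\mathrm{bis}}\): the covering property, and the intersection property. For the intersection property, whenever \(p\in O^k_{\mathrm{bis}}(q_1)\cap O^l_{\mathrm{bis}}(q_2)\), we must find \(q_3\in\Proc_F\) and \(m\in\mathbb N\) with \(p\in O^m_{\mathrm{bis}}(q_3)\subseteq O^k_{\mathrm{bis}}(q_1)\cap O^l_{\mathrm{bis}}(q_2)\).

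Covering is immediate from the base case of Definition~\ref{def:k-bisimulation}. Since \(\sim_0\) relates every pair of processes, we have \(O^0_{\mathrm{bis}}(0)=\Proc\).

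Before the intersection property, I would establish two auxiliary facts about \(\sim_k\), both by induction on \(k\).
\begin{enumerate}[label=(\alph*)]
\item Each \(\sim_k\) is an equivalence relation on \(\Proc\). Reflexivity, symmetry and transitivity each pass through the inductive clause using the induction hypothesis at level \(k\).
\item The relations are monotone in depth: \(x\sim_{m}y\) implies \(x\sim_j y\) for all \(j\le m\). It suffices to show that \(\sim_{k+1}\subseteq\sim_k\), which again follows by induction.
\end{enumerate}
With these in hand, take \(m=\max(k,l)\). Suppose we have some \(q_3\in\Proc_F\) with \(q_3\sim_m p\), and let \(r\in O^m_{\mathrm{bis}}(q_3)\). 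Then \(r\sim_m q_3\sim_m p\) gives \(r\sim_m p\), hence \(r\sim_k p\). Combining this with \(p\sim_k q_1\) gives \(r\sim_k q_1\), so \(r\in O^k_{\mathrm{bis}}(q_1)\). The same argument with \(l\) gives \(r\in O^l_{\mathrm{bis}}(q_2)\).

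The main obstacle is producing the finite representative \(q_3\). Concretely, I need to show that every \(p\in\Proc\) is \(m\)-bisimilar to some finite loop-free term. This is not obvious, since \(p\) may be infinitely branching, as in Example~\ref{ex:linear-countably-branching}. I would first prove, by induction on \(m\), that \(\sim_m\) has only finitely many equivalence classes; here the finiteness of \(\Sigma\) is essential. The case \(m=0\) is trivial, with a single class. For \(m+1\), the \(\sim_{m+1}\)-class of \(x\) is determined by the map sending each \(a\in\Sigma\) to the set of \(\sim_m\)-classes of \(a\)-successors of \(x\). There are only finitely many such maps, so there are only finitely many \(\sim_{m+1}\)-classes.

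Next, I define a term \(t_m(p)\in\Proc_F\) by recursion on \(m\). Set \(t_0(p)=0\). For \(t_{m+1}(p)\), take the (finite) sum of \(a.t_m(r_C)\) over all \(a\in\Sigma\) and all \(\sim_m\)-classes \(C\) that contain some \(a\)-successor of \(p\), where \(r_C\) is a chosen such successor; an empty sum is \(0\). I then show \(t_m(p)\sim_m p\) by induction.
\begin{enumerate}[label=(\roman*)]
\item Each summand \(a.t_m(r_C)\) is matched by the transition \(p\xrightarrow{a}r_C\), using \(t_m(r_C)\sim_m r_C\).
\item Conversely, each transition \(p\xrightarrow{a}p'\) with \(p'\in C\) is matched by the summand \(a.t_m(r_C)\). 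This uses \(t_m(r_C)\sim_m r_C\sim_m p'\) together with transitivity from fact (a).
\end{enumerate}
Setting \(q_3=t_m(p)\) completes the intersection property, and hence the proof that \(\mathcal B^{\mathrm{fin}}_{\mathrm{bis}}\) is a basis.
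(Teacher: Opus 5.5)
Your proof is correct, but the step you call the main obstacle is unnecessary, and skipping it is exactly what the paper does. Assume without loss of generality $k\le l$. Then the given term $q_2$ already satisfies $q_2\sim_l p$, so your own argument goes through verbatim with $q_3=q_2$ and $m=l$: for $r\in O^l_{\mathrm{bis}}(q_2)$ you get $r\sim_l p$, hence $r\sim_k p\sim_k q_1$.

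The paper argues this way, using only your facts (a) and (b). It asserts these without proof, so your inductions fill a small gap. It concludes that the two basic sets are nested, $O^l_{\mathrm{bis}}(q_2)\subseteq O^k_{\mathrm{bis}}(q_1)$. The intersection is therefore itself the basic set $O^l_{\mathrm{bis}}(q_2)$, and no new observation needs to be built.

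Your detour through the finiteness of the set of $\sim_m$-classes and the canonical terms $t_m(p)$ is sound. In particular, summing over pairs $(a,C)$ rather than over successors correctly handles infinitely branching $p$. It also proves a strictly stronger fact: every $\sim_m$-class contains a term of $\Proc_F$, so the $\sim_m$-classes are exactly the sets $O^m_{\mathrm{bis}}(q)$.

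That lemma is not needed for the basis property, but the paper relies on it implicitly elsewhere. One instance is its $m=1$ case, used when Proposition~\ref{prop:D-bis-open-not-sim-open} writes $\Proc\setminus D$ as a union of the sets $O^1_{\mathrm{bis}}(q_A)$. Another is the inclusion $\tau_{\HML}\subseteq\tau^{\mathrm{fin}}_{\mathrm{bis}}$ in Corollary~\ref{cor:topol-equiv-logic}. For the present proposition, the nesting argument is shorter and does not require $\Sigma$ to be finite.
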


The topology \(\tau^{\mathrm{fin}}_{\mathrm{bis}}\) collects
bisimulation observations of all finite depths. Each observation is still
finite, but it can compare both directions of behaviour up to a bounded depth.
The resulting topology captures finite two-sided behavioural information that
is not available to the simulation topology.

The next result illustrates this difference through a deadlock. Ordinary
simulation observations cannot isolate the set of deadlocked processes, because
simulation only requires the observed process \(q\) to have its transitions
matched. If \(q\) has no transitions, then every process simulates it. In
contrast, finite-depth bisimulation also checks the reverse direction, and
therefore can detect whether a process has no outgoing transitions.

\begin{proposition}
\label{prop:D-bis-open-not-sim-open}
Assume that \(\Sigma\) is finite and contains at least one action, and that
\(\Proc\) contains at least one deadlocked process and at least one
non-deadlocked process. Let $D=\{\,p\in\Proc\mid I(p)=\varnothing\,\}$
be the set of deadlocked processes. Then, $D\in\tau^{\mathrm{fin}}_{\mathrm{bis}}$, and $
\Proc\setminus D\in\tau^{\mathrm{fin}}_{\mathrm{bis}}$, so \(D\) is clopen in \(\tau^{\mathrm{fin}}_{\mathrm{bis}}\). However, $
D\notin\tau_{\mathrm{sim}}$. Consequently, $\tau^{\mathrm{fin}}_{\mathrm{bis}}\neq\tau_{\mathrm{sim}}$.
\end{proposition}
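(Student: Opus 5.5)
The plan is to exhibit $D$ and its complement as basic sets, or unions of basic sets, of $\mathcal B^{\mathrm{fin}}_{\mathrm{bis}}$, and then to show that $D$ fails the basis characterization of Proposition~\ref{prop:tau-sim-open}.

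\emph{Step 1: $D$ is open.} I would show that $D=O^1_{\mathrm{bis}}(0)$. If $p\sim_1 0$, clause (ii) of Definition~\ref{def:k-bisimulation} says every transition $p\xrightarrow{a}p'$ must be matched by a transition of $0$. Since $0$ has none, $I(p)=\varnothing$. Conversely, if $I(p)=\varnothing$, both clauses hold vacuously, so $0\sim_1 p$. One point needs care: the relation in the definition is written as $q\sim_k p$, so I either note that $\sim_k$ is symmetric (clear by induction) or check the clauses in the stated order. Hence $D\in\mathcal B^{\mathrm{fin}}_{\mathrm{bis}}\subseteq\tau^{\mathrm{fin}}_{\mathrm{bis}}$.

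\emph{Step 2: $\Proc\setminus D$ is open.} For each $a\in\Sigma$, I would take $q_a=a.0$ and prove $O^1_{\mathrm{bis}}(a.0)=\{p\mid I(p)=\{a\}\}$. At depth $1$ the successors only need to be related by $\sim_0$, which is total, so the two clauses reduce to $I(p)\subseteq I(a.0)$ and $I(a.0)\subseteq I(p)$. This generalises: for any nonempty $A\subseteq\Sigma$, setting $q_A=\sum_{a\in A}a.0$ gives $O^1_{\mathrm{bis}}(q_A)=\{p\mid I(p)=A\}$. Since $\Sigma$ is finite, there are finitely many such $A$, though finiteness is not essential because arbitrary unions are open anyway. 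Then
\[\Proc\setminus D=\bigcup_{\varnothing\neq A\subseteq\Sigma}O^1_{\mathrm{bis}}(q_A),\]
which is open. Together with Step 1, this shows $D$ is clopen.

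\emph{Step 3: $D\notin\tau_{\mathrm{sim}}$.} Suppose toward a contradiction that $D$ is $\tau_{\mathrm{sim}}$-open. By hypothesis $D$ contains a deadlocked process $d$, so Proposition~\ref{prop:tau-sim-open} gives some $q\in\Proc_F$ with $d\in O(q)\subseteq D$. Since $q\sqsubseteq_S d$ and $d$ has no transitions, $q$ can have no transitions. Any transition $q\xrightarrow{a}_F q'$ would need a matching transition from $d$. Now take a non-deadlocked process $n$, which exists by hypothesis. The relation $\{(q,n)\}$ is a simulation, vacuously, because $q$ has no transitions. So $n\in O(q)\subseteq D$, contradicting $I(n)\neq\varnothing$. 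Therefore $D\notin\tau_{\mathrm{sim}}$. Since $D\in\tau^{\mathrm{fin}}_{\mathrm{bis}}$, the two topologies differ.

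I expect no real obstacle. The most delicate step is Step 1: correctly unfolding $\sim_1$, including the direction and symmetry of the relation and the fact that $\sim_0$ is total.
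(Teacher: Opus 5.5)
Your proposal is correct and follows essentially the same route as the paper: $D=O^1_{\mathrm{bis}}(0)$, then $\Proc\setminus D=\bigcup_{\varnothing\neq A\subseteq\Sigma}O^1_{\mathrm{bis}}(q_A)$ using $O^1_{\mathrm{bis}}(q_A)=\{p\mid I(p)=A\}$, and finally any basic $\tau_{\mathrm{sim}}$-neighbourhood $O(q)$ of a deadlock forces $q$ to be a deadlock, whence $O(q)=\Proc\not\subseteq D$. One trivial slip, which you already half-flagged: for $p\sim_1 0$ it is clause (i), not (ii), that forces $I(p)=\varnothing$, while clause (ii) is the relevant one for the form $0\sim_1 p$ used in $O^1_{\mathrm{bis}}(0)$.
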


\begin{proof}[Proof idea]
Finite-depth bisimulation can already identify deadlock at depth \(1\). Since
the process \(0\) has no outgoing transitions, a process is \(1\)-bisimilar to
\(0\) exactly when it also has no outgoing transitions.
Hence, $O^1_{\mathrm{bis}}(0)$ is exactly the set of deadlocked processes. The simulation topology cannot make this distinction. Since \(0\) has no outgoing transitions, every
process simulates \(0\), and therefore ordinary simulation cannot separate
deadlocked processes from non-deadlocked ones. Thus, deadlock is captured by
\(\tau^{\mathrm{fin}}_{\mathrm{bis}}\), but not by \(\tau_{\mathrm{sim}}\).
\end{proof}

This result shows that \(\tau^{\mathrm{fin}}_{\mathrm{bis}}\) captures finite
information that \(\tau_{\mathrm{sim}}\) cannot. In particular, deadlocks are observable in 
\(\tau^{\mathrm{fin}}_{\mathrm{bis}}\),
but not in 
\(\tau_{\mathrm{sim}}\).
Hence, finite-depth bisimulation does not merely refine the
previous construction, as it generates a genuinely different topology.

\subsection{Modal Logic as a Topological Basis}
\label{sec: modal logic}

The Hennessy-Milner theorem~\cite{HennessyM85} tells us that bisimilarity is characterized by the Hennessy-Milner logic ($\HML$), in that two processes are bisimilar if and only if they satisfy exactly the same $\HML$ formulas. 
In other words, non-bisimilar processes can be separated by $\HML$ formulas, which is similar to the way that open sets separate points in a space. Furthermore, this correspondence extends between the preorders $\sqsubseteq_S, \sqsubseteq_{CS}, \sqsubseteq_{RS}$ and respective fragments of $\HML$~\cite{GLABBEEK20013,AcetoMFI19,DBLP:journals/iandc/GrooteV92}.
The similarities between formulas and open sets persist, as in this section we show that we can use fragments of $\HML$ as observation sets.

\begin{definition}[Fragments of 
$\HML$
]
We use the fragments of $\HML$ given by the following grammars:
\begin{align*}
\varphi_T \in \mathcal{L}_T &::= ~ \true ~ \mid ~ \false ~ \mid ~ \varphi_T\wedge \varphi_T~ \mid ~\varphi_T\vee \varphi_T~ \mid ~ \langle a \rangle\psi_T,  
&\psi_T  &::= ~ \true ~ \mid ~ \false ~ \mid ~  \langle a \rangle\psi_T \\
\varphi_S \in \mathcal{L}_S &::= ~ \true ~ \mid ~ \false ~ \mid ~ \varphi_S\wedge \varphi_S~ \mid ~\varphi_S\vee \varphi_S~ \mid ~ \langle a \rangle\varphi_S \\
\varphi_{BS} \in \HML &::= ~ \true ~ \mid ~ \false ~ \mid ~ \varphi_{BS}\wedge \varphi_{BS} ~ \mid ~\varphi_{BS}\vee \varphi_{BS}~ \mid ~ \langle a \rangle\varphi_{BS} ~ \mid ~ [a]\varphi_{BS} ,
\end{align*}
where 
 $a \in \act$.
The semantics of the $\true, \false$ constants and $\land, \lor$ operators are the usual ones, and for the modal operators $[a]$ and $\langle a \rangle$, for each $p \in \Proc$, we define 
\begin{align*}
     &p\models \langle a\rangle\varphi \text{ iff there is 
 some } p \xrightarrow{a} q \text{ such that } q\models\varphi; \quad \text{and, dually,}\\
    &p\models [a]\varphi \text{ iff for all } p\xrightarrow{a} q \text{ it holds that } q\models\varphi  .
\end{align*}
The \emph{modal depth} of a formula is the nesting depth of its modalities. 
$\HML_k$ is 
the fragment of $\HML$ that only includes formulas of modal depth at most $k \in \Nat$. We use the notation $ [\![\varphi]\!] = \{ p \in \Proc \mid p \models \varphi \}$.
\end{definition}

%

The papers \cite{AcetoILS12,GrafS86a,SteffenI94} demonstrate that for each process $p$ and process preorder $\sqsubseteq$
there exists a \emph{characteristic}
formula in a corresponding fragment of $\HML$ that describes exactly the processes that are above $p$, with respect to $\sqsubseteq$.



\begin{theorem}
\label{modal-form}
    Let $\sqsubseteq$ be one of $\sqsubseteq_T^{fin}$, $\sqsubseteq_S$, 
    $\sim_k$
    and let $\mathcal{L}$ be, respectively, $\mathcal{L}_T$, $\mathcal{L}_S$, 
    or  $\HML_k$.
    Then, for every $p \in \Proc_F$, there exists $\varphi \in \mathcal{L}$, such that for every $q \in \Proc$, 
    \( q \models \varphi \ \iff \ p \sqsubseteq q.\)
\end{theorem}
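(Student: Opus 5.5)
The plan is to construct, for each $p\in\Proc_F$, an explicit characteristic formula $\chi(p)$ by recursion on the structure of $p$ (and on $k$ in the bisimulation case). We then prove $q\models\chi(p)\iff p\sqsubseteq q$ by induction. Two facts make the constructions legitimate:
\begin{itemize}
\item every $p\in\Proc_F$ has finitely many transitions and finitely many finite traces, so all conjunctions and disjunctions below are finite;
\item the empty conjunction is $\true$ and the empty disjunction is $\false$.
\end{itemize}

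\emph{Simulation.} Define $\chi_S(0)=\true$, $\chi_S(a.p')=\langle a\rangle\chi_S(p')$ and $\chi_S(p_1+p_2)=\chi_S(p_1)\wedge\chi_S(p_2)$; this lies in $\mathcal L_S$. We prove the equivalence by induction on the term $p$.
\begin{itemize}
\item For $p=0$: every $q$ simulates $0$, and every $q$ satisfies $\true$.
\item For $p=p_1+p_2$: this is exactly Lemma~\ref{lem:plus-sim-contrib} combined with the induction hypothesis.
\item For $p=a.p'$: we check that $a.p'\sqsubseteq_S q$ iff there is $q\xrightarrow{a}q'$ with $p'\sqsubseteq_S q'$. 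For ($\Rightarrow$), match the unique transition of $a.p'$. For ($\Leftarrow$), extend a simulation witnessing $p'\sqsubseteq_S q'$ with the pair $(a.p',q)$. The induction hypothesis then finishes the case.
\end{itemize}

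\emph{Finite-depth traces.} We read $p\sqsubseteq_T^{\mathrm{fin}} q$ as $p\sqsubseteq_T^k q$ for all $k$, that is, $\Trc(p)\cap\Sigma^*\subseteq\Trc(q)$. Since $p$ is loop-free and finite, $\Trc(p)\cap\Sigma^*$ is a finite set. For $s=a_1\cdots a_n$, let $\psi_s=\langle a_1\rangle\cdots\langle a_n\rangle\true$, which is a $\psi_T$-formula (with $\psi_\varepsilon=\true$). By a trivial induction on $n$, $q\models\psi_s$ iff $s\in\Trc(q)$. Put $\chi_T(p)=\bigwedge_{s\in\Trc(p)\cap\Sigma^*}\psi_s$, which lies in $\mathcal L_T$. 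Then $q\models\chi_T(p)$ iff every finite trace of $p$ is a trace of $q$, as required.

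\emph{$k$-bisimulation.} This case needs a two-sided formula. Define $\chi_0(p)=\true$ and
\[
\chi_{k+1}(p)=\bigwedge_{p\xrightarrow{a}_F p'}\langle a\rangle\chi_k(p')\;\wedge\;\bigwedge_{a\in\Sigma}[a]\bigvee_{p\xrightarrow{a}_F p'}\chi_k(p').
\]
This has modal depth at most $k+1$, so it lies in $\HML_{k+1}$. It is finite because $\Sigma$ is finite and $p$ is finitely branching.

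We prove, by induction on $k$ and simultaneously for all $p\in\Proc_F$, that $q\models\chi_k(p)\iff p\sim_k q$.
\begin{itemize}
\item The first conjunct is equivalent to clause (i) of Definition~\ref{def:k-bisimulation}, using the induction hypothesis for $p'$.
\item The box conjunct is equivalent to clause (ii): every $a$-successor $q'$ of $q$ satisfies some $\chi_k(p')$ with $p\xrightarrow{a}p'$, i.e.\ $p'\sim_k q'$. Here $q$ may be infinitely branching; this causes no problem because $[a]$ quantifies over all successors of $q$.
\end{itemize}

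The main obstacle is this bisimulation case. The points that need care are getting the orientation of $\sim_k$ right (the definition is symmetric, so $p\sim_k q$ is the intended reading) and making the empty-disjunction convention explicit. That convention is exactly what makes, for instance, $\chi_1(0)$ characterise the deadlocked processes, in agreement with Proposition~\ref{prop:D-bis-open-not-sim-open}.
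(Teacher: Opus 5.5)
Your proposal is correct and follows the paper's proof. The trace case is the same conjunction of linear diamond formulas $\psi_s$ over the finite trace set of $p$. Your $\chi_{k+1}(p)$ is the paper's inductive characteristic formula for $\sim_k$, and the uniform empty-disjunction convention gives the right box clause $[a]\bigvee_{p\xrightarrow{a}_F p'}\chi_k(p')$; this repairs the paper's displayed formula, which has a stray $\langle a\rangle$ inside the box and an ad hoc case $\varphi=\zero$ for $p=0$. The one real difference is the simulation case: the paper only cites \cite{AcetoILS12}, whereas you prove it directly by structural induction on $p$ using Lemma~\ref{lem:plus-sim-contrib}, which makes the argument self-contained.
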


For each fragment $\mathcal{L}$ of $\HML$, we define the topology $\tau_{\mathcal{L}} = 
\langle 
\{ 
[\![\varphi]\!]
\mid \varphi \in \mathcal{L} \}
\rangle
$. We note that $\{ [\![\varphi]\!] \mid \varphi \in \mathcal{L} \}$ is a basis, because the above fragments 
are closed with respect to conjunction and include $\true$.

\begin{corollary}
\label{cor:topol-equiv-logic}
    The topologies $\tau_O = \tau_{\mathcal{L}_T}$; $\tau_{sim} = \tau_{\mathcal{L}_S}$; and $\tau_{bis}^{fin} = \tau_{\HML}$.
\end{corollary}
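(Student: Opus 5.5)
We show each equality by two inclusions between topologies. The general scheme is this: a topology generated by a basis $\mathcal B$ is contained in a topology $\tau$ exactly when every member of $\mathcal B$ is $\tau$-open. So for each pair we need (a) every basic observation set is a union of formula denotations, and (b) every formula denotation $[\![\varphi]\!]$ is open in the observation topology.

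\textbf{Direction (a): observations are formula denotations.} We use Theorem~\ref{modal-form} directly.
\begin{itemize}
\item For $\tau_{\mathrm{sim}}$: given $q\in\Proc_F$, the theorem gives a characteristic formula $\varphi_q\in\mathcal L_S$ with $[\![\varphi_q]\!]=O(q)$.
\item For $\tau^{\mathrm{fin}}_{\mathrm{bis}}$: $\sim_k$ is symmetric, so $O^k_{\mathrm{bis}}(q)$ is the set of $p$ with $q\sim_k p$, which is $[\![\varphi]\!]$ for the given $\varphi\in\HML_k\subseteq\HML$.
\item For $\tau_O$: we do not need the preorder $\sqsubseteq_T^{fin}$. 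For $s=a_1\cdots a_n$, the formula $\langle a_1\rangle\cdots\langle a_n\rangle\true\in\mathcal L_T$ defines $O(s)$. Its finite conjunctions define the basic sets of Proposition~\ref{prop:BS-process-basis-contrib}.
\end{itemize}

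\textbf{Direction (b): denotations are open.} We induct on formulas.
\begin{itemize}
\item $[\![\true]\!]=\Proc$ and $[\![\false]\!]=\varnothing$ are open.
\item Conjunction and disjunction give intersections and unions, and both preserve openness.
\item For $\mathcal L_S$, the case $\langle a\rangle\varphi$ does not follow from openness of $[\![\varphi]\!]$ alone. We strengthen the hypothesis: every $\varphi\in\mathcal L_S$ is a finite disjunction $\bigvee_i\psi_i$ with $[\![\psi_i]\!]=O(q_i)$ for some $q_i\in\Proc_F$, or denotes $\varnothing$.
  \begin{itemize}
  \item $\true$ corresponds to $q=0$.
  \item $\wedge$ distributes over $\vee$ and uses Lemma~\ref{lem:plus-sim-contrib}, via $O(q)\cap O(q')=O(q+q')$.
  \item $\langle a\rangle$ distributes over $\vee$, and $[\![\langle a\rangle\psi]\!]=O(a.q)$ whenever $[\![\psi]\!]=O(q)$. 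This is immediate from the definition of simulation.
  \end{itemize}
\item For $\mathcal L_T$: the $\psi_T$ formulas are either $\false$-terminated, hence empty, or of the form $\langle s\rangle\true$, hence equal to $O(s)$. Positive Boolean combinations of these are unions of basic sets $O(D)$.
\item For $\HML$ this is the main obstacle, since $[a]$ is not monotone with respect to simulation. We use the standard fact that any two processes with $p\sim_k p'$ agree on all formulas of $\HML_k$; this is the easy half of Hennessy--Milner, proved by induction on $k$. So for $\varphi$ of depth $k$ and $p\models\varphi$, it suffices to find $q\in\Proc_F$ with $q\sim_k p$. Then $p\in O^k_{\mathrm{bis}}(q)\subseteq[\![\varphi]\!]$, which shows $[\![\varphi]\!]$ is open.
  \begin{itemize}
  \item To build $q$ we need image-finiteness up to $\sim_{k-1}$. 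Since $\Sigma$ is finite, induction on $k$ shows that $\sim_k$ has finitely many classes, each represented by some element of $\Proc_F$.
  \item We then take $q=\sum_{a}\sum a.q'$, where $q'$ ranges over finite representatives of the $\sim_{k-1}$-classes of the $a$-successors of $p$. Only finitely many classes occur.
  \item We check $q\sim_k p$ by induction.
  \end{itemize}
\end{itemize}

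The delicate step is exactly this finite-representative argument. If it fails in some degenerate case, we fall back to showing directly by induction that $[\![[a]\varphi]\!]$ is a union of sets $O^{k+1}_{\mathrm{bis}}(q)$.

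With (a) and (b) in hand, the three equalities $\tau_O=\tau_{\mathcal L_T}$, $\tau_{\mathrm{sim}}=\tau_{\mathcal L_S}$ and $\tau^{\mathrm{fin}}_{\mathrm{bis}}=\tau_{\HML}$ follow.
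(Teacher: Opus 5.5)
Your proof is correct, and it is more complete than the paper's. The paper's proof is a single appeal to Theorem~\ref{modal-form}. Characteristic formulas only yield your direction (a): each basic observation set is a formula denotation, so $\tau_O\subseteq\tau_{\mathcal L_T}$, $\tau_{\mathrm{sim}}\subseteq\tau_{\mathcal L_S}$ and $\tau^{\mathrm{fin}}_{\mathrm{bis}}\subseteq\tau_{\HML}$.

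The reverse inclusions need every $[\![\varphi]\!]$ to be open in the observation topology. That does not follow from Theorem~\ref{modal-form} alone, and the paper leaves it implicit. Your direction (b) supplies exactly this missing half:
\begin{itemize}
\item for $\mathcal L_S$, the normal form as finite unions of sets $O(q)$, using Lemma~\ref{lem:plus-sim-contrib} and $[\![\langle a\rangle\psi]\!]=O(a.q)$;
\item for $\mathcal L_T$, the observation that $\psi_T$-formulas denote either $O(s)$ or $\varnothing$;
\item for $\HML$, the combination of $\sim_k$-invariance of $\HML_k$ with the fact that every $\sim_k$-class has the form $O^k_{\mathrm{bis}}(q)$.
\end{itemize}

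The finite-representative step does go through. $\Sigma$ is finite throughout the paper, so $\sim_k$ has finite index by induction on $k$, even for infinitely branching processes. Your hedged fallback is therefore unnecessary. This step is also precisely where finiteness of $\Sigma$ matters: $q\sim_1 p$ forces $I(q)=I(p)$. With infinitely many actions, a process with infinitely many initial actions would lie in no $O^k_{\mathrm{bis}}(q)$ with $k\ge 1$, so $[\![\langle a\rangle\true]\!]$ need not be open.

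For $\tau_O$, using $\langle a_1\rangle\cdots\langle a_n\rangle\true$ directly is also cleaner. It avoids the relation $\sqsubseteq_T^{\mathrm{fin}}$, which Theorem~\ref{modal-form} invokes but the paper never defines as a relation.
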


\section{Conclusion}

This paper developed a topological framework for finite behavioural
observations and verification. Section~\ref{sec:trace-based topologies}
introduced full trace inclusion and its associated Alexandroff topology.
This section then shows that, on the trace
transition system, finite trace observations and finite-depth trace inclusion
both yield the Cantor topology. In contrast, full trace inclusion yields the
discrete topology.

For arbitrary labelled transition systems, Section~\ref{sec:process-topologies}
showed that finite positive trace observations induce \(\tau_O\), which
coincides with the topology generated by finite-depth trace inclusion. The same
section established that finite loop-free simulation observations induce the
strictly finer topology \(\tau_{\mathrm{sim}}\), showing that finite branching information can distinguish processes with the same finite traces.

Section~\ref{sec:verification} proved the general result that, whenever finite observations form a basis, the properties verifiable from those observations
are exactly the open sets of the induced topology. Applying this result,
we obtained multi-trace monitorability for
\(\tau_O\) and simulation monitorability for
\(\tau_{\mathrm{sim}}\), and showed that complete and ready simulation observations fail to form a basis. Therefore, the behavioural preorders between complete simulation and bisimulation cannot be verified using finite loop-free process observations on arbitrary LTSs.
On the other hand, 
finite-depth bisimulation yields a distinct topology in which deadlock is
observable. Finally, Subection~\ref{sec: modal logic} provides a logical
presentation of the finite observations through modal formulas. 

Together, these results show that observation-induced topologies provide a
common framework for relating behavioural semantics, finite verification, and
monitorability. Future work includes studying the specialization preorders induced by these
observation topologies, in particular their relation to finitary simulation \cite{ACETO1997127,AcetoIngolfdottir1995},
and investigating whether finite observations for further behavioural relations,
such as prebisimulation, yield analogous topological characterisations.
Another interesting research avenue is to extend the set of finite observations to include finite-state processes that may loop.

\section*{Acknowledgment}

This research was funded by the Icelandic Research Fund (Rannsóknasjóður) through a doctoral student grant, grant no.~2511259. We sincerely thank Luca Aceto for his generous feedback, careful reading, and insightful suggestions, which have improved this work.

\bibliographystyle{eptcs}

\bibliography{generic.bib}

\begin{thebibliography}{10}
\providecommand{\bibitemdeclare}[2]{}
\providecommand{\surnamestart}{}
\providecommand{\surnameend}{}
\providecommand{\urlprefix}{Available at }
\providecommand{\url}[1]{\texttt{#1}}
\providecommand{\href}[2]{\texttt{#2}}
\providecommand{\urlalt}[2]{\href{#1}{#2}}
\providecommand{\doi}[1]{doi:\urlalt{https://doi.org/#1}{#1}}
\providecommand{\eprint}[1]{arXiv:\urlalt{https://arxiv.org/abs/#1}{#1}}
\providecommand{\bibinfo}[2]{#2}

\bibitemdeclare{misc}{abramsky2011domaintheorylogicobservable}
\bibitem{abramsky2011domaintheorylogicobservable}
\bibinfo{author}{Samson \surnamestart Abramsky\surnameend} (\bibinfo{year}{2011}): \emph{\bibinfo{title}{Domain Theory and the Logic of Observable Properties}}.
\newblock \eprint{1112.0347}.

\bibitemdeclare{article}{DBLP:journals/mscs/AbramskyV93}
\bibitem{DBLP:journals/mscs/AbramskyV93}
\bibinfo{author}{Samson \surnamestart Abramsky\surnameend} \& \bibinfo{author}{Steven \surnamestart Vickers\surnameend} (\bibinfo{year}{1993}): \emph{\bibinfo{title}{Quantales, Observational Logic and Process Semantics}}.
\newblock {\slshape \bibinfo{journal}{Math. Struct. Comput. Sci.}} \bibinfo{volume}{3}(\bibinfo{number}{2}), pp. \bibinfo{pages}{161--227}, \doi{10.1017/S0960129500000189}.

\bibitemdeclare{inproceedings}{inbook2}
\bibitem{inbook2}
\bibinfo{author}{Luca \surnamestart Aceto\surnameend}, \bibinfo{author}{Antonis \surnamestart Achilleos\surnameend}, \bibinfo{author}{Adrian \surnamestart Francalanza\surnameend} \& \bibinfo{author}{Anna \surnamestart Ing{\'o}lfsd{\'o}ttir\surnameend} (\bibinfo{year}{2018}): \emph{\bibinfo{title}{A framework for parameterized monitorability}}.
\newblock In: {\slshape \bibinfo{booktitle}{International Conference on Foundations of Software Science and Computation Structures}}, \bibinfo{organization}{Springer}, pp. \bibinfo{pages}{203--220}.

\bibitemdeclare{article}{10.1145/3290365}
\bibitem{10.1145/3290365}
\bibinfo{author}{Luca \surnamestart Aceto\surnameend}, \bibinfo{author}{Antonis \surnamestart Achilleos\surnameend}, \bibinfo{author}{Adrian \surnamestart Francalanza\surnameend}, \bibinfo{author}{Anna \surnamestart Ing\'{o}lfsd\'{o}ttir\surnameend} \& \bibinfo{author}{Karoliina \surnamestart Lehtinen\surnameend} (\bibinfo{year}{2019}): \emph{\bibinfo{title}{Adventures in monitorability: from branching to linear time and back again}}.
\newblock {\slshape \bibinfo{journal}{Proc. ACM Program. Lang.}} \bibinfo{volume}{3}(\bibinfo{number}{POPL}), \doi{10.1145/3290365}.
\newblock \urlprefix\url{https://doi.org/10.1145/3290365}.

\bibitemdeclare{article}{Aceto2021}
\bibitem{Aceto2021}
\bibinfo{author}{Luca \surnamestart Aceto\surnameend}, \bibinfo{author}{Antonis \surnamestart Achilleos\surnameend}, \bibinfo{author}{Adrian \surnamestart Francalanza\surnameend}, \bibinfo{author}{Anna \surnamestart Ing{\'o}lfsd{\'o}ttir\surnameend} \& \bibinfo{author}{Karoliina \surnamestart Lehtinen\surnameend} (\bibinfo{year}{2021}): \emph{\bibinfo{title}{An operational guide to monitorability with applications to regular properties}}.
\newblock {\slshape \bibinfo{journal}{Software and Systems Modeling}} \bibinfo{volume}{20}, pp. \bibinfo{pages}{335--361}.
\newblock \bibinfo{note}{DOI: \url{https://doi.org/10.1007/s10270-020-00860-z}}.

\bibitemdeclare{article}{AcetoMFI19}
\bibitem{AcetoMFI19}
\bibinfo{author}{Luca \surnamestart Aceto\surnameend}, \bibinfo{author}{Dario \surnamestart Della~Monica\surnameend}, \bibinfo{author}{Ignacio \surnamestart F{\'{a}}bregas\surnameend} \& \bibinfo{author}{Anna \surnamestart Ing{\'{o}}lfsd{\'{o}}ttir\surnameend} (\bibinfo{year}{2019}): \emph{\bibinfo{title}{When Are Prime Formulae Characteristic?}}
\newblock {\slshape \bibinfo{journal}{Theor. Comput. Sci.}} \bibinfo{volume}{777}, pp. \bibinfo{pages}{3--31}, \doi{10.1016/J.TCS.2018.12.004}.

\bibitemdeclare{techreport}{AcetoIngolfdottir1995}
\bibitem{AcetoIngolfdottir1995}
\bibinfo{author}{Luca \surnamestart Aceto\surnameend} \& \bibinfo{author}{Anna \surnamestart Ingolfsdottir\surnameend} (\bibinfo{year}{1995}): \emph{\bibinfo{title}{On the Finitary Bisimulation}}.
\newblock \bibinfo{type}{Technical Report} \bibinfo{number}{59}, \bibinfo{institution}{BRICS, Department of Computer Science, Aarhus University}, \doi{10.7146/brics.v2i59.19960}.
\newblock \urlprefix\url{https://doi.org/10.7146/brics.v2i59.19960}.

\bibitemdeclare{book}{Aceto07}
\bibitem{Aceto07}
\bibinfo{author}{Luca \surnamestart Aceto\surnameend}, \bibinfo{author}{Anna \surnamestart Ing{\'o}lfsd{\'o}ttir\surnameend}, \bibinfo{author}{Kim~Guldstrand \surnamestart Larsen\surnameend} \& \bibinfo{author}{Jiri \surnamestart Srba\surnameend} (\bibinfo{year}{2007}): \emph{\bibinfo{title}{Reactive systems: modelling, specification and verification}}.
\newblock \bibinfo{publisher}{Cambridge University Press}, \doi{10.1017/CBO9780511814105}.

\bibitemdeclare{article}{AcetoILS12}
\bibitem{AcetoILS12}
\bibinfo{author}{Luca \surnamestart Aceto\surnameend}, \bibinfo{author}{Anna \surnamestart Ing{\'{o}}lfsd{\'{o}}ttir\surnameend}, \bibinfo{author}{Paul~Blain \surnamestart Levy\surnameend} \& \bibinfo{author}{Joshua \surnamestart Sack\surnameend} (\bibinfo{year}{2012}): \emph{\bibinfo{title}{Characteristic formulae for fixed-point semantics: a general framework}}.
\newblock {\slshape \bibinfo{journal}{Math. Struct. Comput. Sci.}} \bibinfo{volume}{22}(\bibinfo{number}{2}), pp. \bibinfo{pages}{125--173}, \doi{10.1017/S0960129511000375}.

\bibitemdeclare{article}{ACETO1997127}
\bibitem{ACETO1997127}
\bibinfo{author}{Luca \surnamestart Aceto\surnameend} \& \bibinfo{author}{Anna \surnamestart Ingólfsdóttir\surnameend} (\bibinfo{year}{1997}): \emph{\bibinfo{title}{A characterization of finitary bisimulation}}.
\newblock {\slshape \bibinfo{journal}{Information Processing Letters}} \bibinfo{volume}{64}(\bibinfo{number}{3}), pp. \bibinfo{pages}{127--134}, \doi{https://doi.org/10.1016/S0020-0190(97)00163-4}.
\newblock \urlprefix\url{https://www.sciencedirect.com/science/article/pii/S0020019097001634}.

\bibitemdeclare{incollection}{inbook}
\bibitem{inbook}
\bibinfo{author}{Ezio \surnamestart Bartocci\surnameend}, \bibinfo{author}{Yli{\`e}s \surnamestart Falcone\surnameend}, \bibinfo{author}{Adrian \surnamestart Francalanza\surnameend} \& \bibinfo{author}{Giles \surnamestart Reger\surnameend} (\bibinfo{year}{2018}): \emph{\bibinfo{title}{Introduction to runtime verification}}.
\newblock In: {\slshape \bibinfo{booktitle}{Lectures on Runtime Verification: Introductory and Advanced Topics}}, \bibinfo{publisher}{Springer}, pp. \bibinfo{pages}{1--33}.

\bibitemdeclare{article}{c9c054ed-1535-3e93-8569-48e7acb4c9c7}
\bibitem{c9c054ed-1535-3e93-8569-48e7acb4c9c7}
\bibinfo{author}{I.~\surnamestart Berlin\surnameend} (\bibinfo{year}{1938}): \emph{\bibinfo{title}{Verification}}.
\newblock {\slshape \bibinfo{journal}{Proceedings of the Aristotelian Society}} \bibinfo{volume}{39}, pp. \bibinfo{pages}{225--248}.
\newblock \urlprefix\url{http://www.jstor.org/stable/4544328}.

\bibitemdeclare{misc}{CamerloDagnino2026}
\bibitem{CamerloDagnino2026}
\bibinfo{author}{Riccardo \surnamestart Camerlo\surnameend} \& \bibinfo{author}{Francesco \surnamestart Dagnino\surnameend} (\bibinfo{year}{2026}): \emph{\bibinfo{title}{The Complexity of Being Monitorable}}.
\newblock \bibinfo{note}{\url{https://doi.org/10.48550/arXiv.2601.04256}}.

\bibitemdeclare{inproceedings}{10.1007/3-540-53479-2_17}
\bibitem{10.1007/3-540-53479-2_17}
\bibinfo{author}{Rocco \surnamestart De~Nicola\surnameend} \& \bibinfo{author}{Frits \surnamestart Vaandrager\surnameend} (\bibinfo{year}{1990}): \emph{\bibinfo{title}{Action versus state based logics for transition systems}}.
\newblock In \bibinfo{editor}{Ir{\`e}ne \surnamestart Guessarian\surnameend}, editor: {\slshape \bibinfo{booktitle}{Semantics of Systems of Concurrent Processes}}, \bibinfo{publisher}{Springer Berlin Heidelberg}, \bibinfo{address}{Berlin, Heidelberg}, pp. \bibinfo{pages}{407--419}.

\bibitemdeclare{article}{DIEKERT201429}
\bibitem{DIEKERT201429}
\bibinfo{author}{Volker \surnamestart Diekert\surnameend} \& \bibinfo{author}{Martin \surnamestart Leucker\surnameend} (\bibinfo{year}{2014}): \emph{\bibinfo{title}{Topology, monitorable properties and runtime verification}}.
\newblock {\slshape \bibinfo{journal}{Theoretical Computer Science}} \bibinfo{volume}{537}, pp. \bibinfo{pages}{29--41}.
\newblock \bibinfo{note}{Theoretical Aspects of Computing (ICTAC 2011). \href{https://doi.org/10.1016/j.tcs.2014.02.052}{DOI: 10.1016/j.tcs.2014.02.052}}.

\bibitemdeclare{article}{fine1975normal}
\bibitem{fine1975normal}
\bibinfo{author}{Kit \surnamestart Fine\surnameend} (\bibinfo{year}{1975}): \emph{\bibinfo{title}{Normal forms in modal logic.}}
\newblock {\slshape \bibinfo{journal}{Notre Dame journal of formal logic}} \bibinfo{volume}{16}(\bibinfo{number}{2}), pp. \bibinfo{pages}{229--237}.

\bibitemdeclare{incollection}{GLABBEEK20013}
\bibitem{GLABBEEK20013}
\bibinfo{author}{Rob~J. \surnamestart van Glabbeek\surnameend} (\bibinfo{year}{2001}): \emph{\bibinfo{title}{{The Linear Time--Branching Time Spectrum I}}}.
\newblock In \bibinfo{editor}{Jan~A. \surnamestart Bergstra\surnameend}, \bibinfo{editor}{Alban \surnamestart Ponse\surnameend} \& \bibinfo{editor}{Scott~A. \surnamestart Smolka\surnameend}, editors: {\slshape \bibinfo{booktitle}{Handbook of Process Algebra}}, \bibinfo{publisher}{North-Holland / Elsevier}, pp. \bibinfo{pages}{3--99}, \doi{10.1016/B978-044482830-9/50019-9}.

\bibitemdeclare{article}{GrafS86a}
\bibitem{GrafS86a}
\bibinfo{author}{Susanne \surnamestart Graf\surnameend} \& \bibinfo{author}{Joseph \surnamestart Sifakis\surnameend} (\bibinfo{year}{1986}): \emph{\bibinfo{title}{A Modal Characterization of Observational Congruence on Finite Terms of {CCS}}}.
\newblock {\slshape \bibinfo{journal}{Inf. Control.}} \bibinfo{volume}{68}(\bibinfo{number}{1-3}), pp. \bibinfo{pages}{125--145}, \doi{10.1016/S0019-9958(86)80031-6}.

\bibitemdeclare{article}{DBLP:journals/iandc/GrooteV92}
\bibitem{DBLP:journals/iandc/GrooteV92}
\bibinfo{author}{Jan~Friso \surnamestart Groote\surnameend} \& \bibinfo{author}{Frits~W. \surnamestart Vaandrager\surnameend} (\bibinfo{year}{1992}): \emph{\bibinfo{title}{Structured Operational Semantics and Bisimulation as a Congruence}}.
\newblock {\slshape \bibinfo{journal}{Inf. Comput.}} \bibinfo{volume}{100}(\bibinfo{number}{2}), pp. \bibinfo{pages}{202--260}, \doi{10.1016/0890-5401(92)90013-6}.

\bibitemdeclare{inproceedings}{10.1007/978-3-642-82453-1_17}
\bibitem{10.1007/978-3-642-82453-1_17}
\bibinfo{author}{D.~\surnamestart Harel\surnameend} \& \bibinfo{author}{A.~\surnamestart Pnueli\surnameend} (\bibinfo{year}{1985}): \emph{\bibinfo{title}{On the Development of Reactive Systems}}.
\newblock In \bibinfo{editor}{Krzysztof~R. \surnamestart Apt\surnameend}, editor: {\slshape \bibinfo{booktitle}{Logics and Models of Concurrent Systems}}, \bibinfo{publisher}{Springer Berlin Heidelberg}, \bibinfo{address}{Berlin, Heidelberg}, pp. \bibinfo{pages}{477--498}, \doi{10.1007/978-3-642-82453-1_17}.

\bibitemdeclare{article}{HennessyM85}
\bibitem{HennessyM85}
\bibinfo{author}{Matthew \surnamestart Hennessy\surnameend} \& \bibinfo{author}{Robin \surnamestart Milner\surnameend} (\bibinfo{year}{1985}): \emph{\bibinfo{title}{Algebraic Laws for Nondeterminism and Concurrency}}.
\newblock {\slshape \bibinfo{journal}{J. {ACM}}} \bibinfo{volume}{32}(\bibinfo{number}{1}), pp. \bibinfo{pages}{137--161}, \doi{10.1145/2455.2460}.

\bibitemdeclare{book}{Kechris1995}
\bibitem{Kechris1995}
\bibinfo{author}{A.~S. \surnamestart Kechris\surnameend} (\bibinfo{year}{1995}): \emph{\bibinfo{title}{Classical Descriptive Set Theory}}.
\newblock \bibinfo{publisher}{Springer-Verlag}, \bibinfo{address}{New York}.

\bibitemdeclare{article}{KLEIJNEN1995145}
\bibitem{KLEIJNEN1995145}
\bibinfo{author}{Jack~P.C. \surnamestart Kleijnen\surnameend} (\bibinfo{year}{1995}): \emph{\bibinfo{title}{Verification and validation of simulation models}}.
\newblock {\slshape \bibinfo{journal}{European Journal of Operational Research}} \bibinfo{volume}{82}(\bibinfo{number}{1}), pp. \bibinfo{pages}{145--162}, \doi{https://doi.org/10.1016/0377-2217(94)00016-6}.
\newblock \urlprefix\url{https://www.sciencedirect.com/science/article/pii/0377221794000166}.

\bibitemdeclare{article}{LEUCKER2009293}
\bibitem{LEUCKER2009293}
\bibinfo{author}{Martin \surnamestart Leucker\surnameend} \& \bibinfo{author}{Christian \surnamestart Schallhart\surnameend} (\bibinfo{year}{2009}): \emph{\bibinfo{title}{A brief account of runtime verification}}.
\newblock {\slshape \bibinfo{journal}{The Journal of Logic and Algebraic Programming}} \bibinfo{volume}{78}(\bibinfo{number}{5}), pp. \bibinfo{pages}{293--303}, \doi{https://doi.org/10.1016/j.jlap.2008.08.004}.
\newblock \urlprefix\url{https://www.sciencedirect.com/science/article/pii/S1567832608000775}.
\newblock \bibinfo{note}{The 1st Workshop on Formal Languages and Analysis of Contract-Oriented Software (FLACOS’07)}.

\bibitemdeclare{book}{4cd7021d283c4edab373d42f1d69a14c}
\bibitem{4cd7021d283c4edab373d42f1d69a14c}
\bibinfo{author}{R.~\surnamestart Milner\surnameend} (\bibinfo{year}{1989}): \emph{\bibinfo{title}{Communication and Concurrency}}.
\newblock \bibinfo{publisher}{Prentice-Hall, Inc.}

\bibitemdeclare{book}{Moschovakis2009}
\bibitem{Moschovakis2009}
\bibinfo{author}{Y.~N. \surnamestart Moschovakis\surnameend} (\bibinfo{year}{2009}): \emph{\bibinfo{title}{Descriptive Set Theory}}, \bibinfo{edition}{second} edition.
\newblock {\slshape \bibinfo{series}{Mathematical Surveys and Monographs}} \bibinfo{volume}{155}, \bibinfo{publisher}{American Mathematical Society}.

\bibitemdeclare{inproceedings}{10.1007/3-540-44685-0_19}
\bibitem{10.1007/3-540-44685-0_19}
\bibinfo{author}{Ji{\v{r}}{\'i} \surnamestart Srba\surnameend} (\bibinfo{year}{2001}): \emph{\bibinfo{title}{On the Power of Labels in Transition Systems}}.
\newblock In \bibinfo{editor}{Kim~G. \surnamestart Larsen\surnameend} \& \bibinfo{editor}{Mogens \surnamestart Nielsen\surnameend}, editors: {\slshape \bibinfo{booktitle}{CONCUR 2001 --- Concurrency Theory}}, \bibinfo{publisher}{Springer Berlin Heidelberg}, \bibinfo{address}{Berlin, Heidelberg}, pp. \bibinfo{pages}{277--291}.

\bibitemdeclare{article}{SteffenI94}
\bibitem{SteffenI94}
\bibinfo{author}{Bernhard \surnamestart Steffen\surnameend} \& \bibinfo{author}{Anna \surnamestart Ing{\'{o}}lfsd{\'{o}}ttir\surnameend} (\bibinfo{year}{1994}): \emph{\bibinfo{title}{Characteristic Formulae for Processes with Divergence}}.
\newblock {\slshape \bibinfo{journal}{Inf. Comput.}} \bibinfo{volume}{110}(\bibinfo{number}{1}), pp. \bibinfo{pages}{149--163}, \doi{10.1006/INCO.1994.1028}.

\bibitemdeclare{techreport}{osti_835920}
\bibitem{osti_835920}
\bibinfo{author}{B~H \surnamestart Thacker\surnameend}, \bibinfo{author}{S~W \surnamestart Doebling\surnameend}, \bibinfo{author}{F~M \surnamestart Hemez\surnameend}, \bibinfo{author}{M~C \surnamestart Anderson\surnameend}, \bibinfo{author}{J~E \surnamestart Pepin\surnameend} \& \bibinfo{author}{E~A \surnamestart Rodriguez\surnameend} (\bibinfo{year}{2004}): \emph{\bibinfo{title}{Concepts of Model Verification and Validation}}.
\newblock \bibinfo{type}{Technical Report}, \bibinfo{institution}{Los Alamos National Lab., Los Alamos, NM (US)}, \doi{10.2172/835920}.
\newblock \urlprefix\url{https://www.osti.gov/biblio/835920}.

\bibitemdeclare{incollection}{Tretmans2008}
\bibitem{Tretmans2008}
\bibinfo{author}{Jan \surnamestart Tretmans\surnameend} (\bibinfo{year}{2008}): \emph{\bibinfo{title}{Model Based Testing with Labelled Transition Systems}}.
\newblock In \bibinfo{editor}{Robert~M. \surnamestart Hierons\surnameend}, \bibinfo{editor}{Jonathan~P. \surnamestart Bowen\surnameend} \& \bibinfo{editor}{Mark \surnamestart Harman\surnameend}, editors: {\slshape \bibinfo{booktitle}{Formal Methods and Testing: An Outcome of the FORTEST Network, Revised Selected Papers}}, \bibinfo{publisher}{Springer Berlin Heidelberg}, \bibinfo{address}{Berlin, Heidelberg}, pp. \bibinfo{pages}{1--38}, \doi{10.1007/978-3-540-78917-8_1}.
\newblock \urlprefix\url{https://doi.org/10.1007/978-3-540-78917-8_1}.

\bibitemdeclare{book}{vickers1988topology}
\bibitem{vickers1988topology}
\bibinfo{author}{Steven \surnamestart Vickers\surnameend} (\bibinfo{year}{1988}): \emph{\bibinfo{title}{Topology via Logic}}.
\newblock {\slshape \bibinfo{series}{Cambridge Tracts in Theoretical Computer Science}}~\bibinfo{volume}{5}, \bibinfo{publisher}{Cambridge University Press}.

\end{thebibliography}

\newpage

\appendix

\section{Appendix}
\label{appendix}

\subsection{Proofs on Trace-Based Topologies} \label{app:trace-based-topologies}

\begin{proof}[\textbf{Proof of Lemma~\ref{alex top with trace preorder}}]

\textit{(i)} ($\varnothing, \Proc \in \tau(\sqsubseteq_T)$.) Both \(\varnothing\) and \(\Proc\) are upward-closed. Indeed, \(\varnothing\) is upward-closed , since it has no elements. Moreover, \(\Proc\) is upward-closed because whenever \(p\in \Proc\) and \(p\le q\), we automatically have \(q\in \Proc\). Hence, $\varnothing\in\tau(\sqsubseteq_{T})$ and $\Proc\in\tau(\sqsubseteq_{T})$.

\medskip\noindent
\textit{(ii) (Closure under arbitrary unions.)} Let \(\{U_i\}_{i\in I}\subseteq \tau(\sqsubseteq_{T})\), and set $U=\bigcup_{i\in I} U_i$.
We show that \(U\) is upward-closed.

Let \(x\in U\) and suppose \(x\sqsubseteq_{T} y\). Since \(x\in U\), there exists \(i\in I\) such that \(x\in U_i\). Because \(U_i\) is upward-closed and \(x\sqsubseteq_{T} y\), we obtain \(y\in U_i\subseteq U\). Thus, \(U\in\tau(\sqsubseteq_{T})\).

\medskip
\noindent
\textit{(iii) (Closure under finite intersections.)} Let \(U,V\in\tau(\sqsubseteq_{T})\), and let \(x\in U\cap V\) with \(x\sqsubseteq_{T} y\). Since \(U\) and \(V\) are upward-closed, we have \(y\in U\) and \(y\in V\). Hence, \(y\in U\cap V\). Thus, \(U\cap V\in\tau(\sqsubseteq_{T})\).

Therefore, \(\tau(\sqsubseteq_{T})\) is a topology on \(\Proc\).
\end{proof}

\begin{proof}[\textbf{Proof of Lemma~\ref{lem:unique-trace-equality-contrib}}]
Since $\Trc(p)\cap\Sigma^\omega=\{t_p\}$, we have $t_p\in \Trc(p)\cap\Sigma^\omega$, hence
$t_p\in \Trc(p)$ and $t_p\in \Sigma^\omega$.
If $\Trc(p)\subseteq \Trc(q)$, then $t_p\in \Trc(q)$.
Because $t_p$ is infinite, $t_p\in \Trc(q)\cap\Sigma^\omega$.
But $q$ is a trace process, so $\Trc(q)\cap\Sigma^\omega=\{t_q\}$ has exactly one element.
Therefore, $t_p=t_q$.
\end{proof}

\begin{proof}[\textbf{Proof of Lemma~\ref{lem:order-collapses-contrib}}]
($\Rightarrow$) Assume $p\sqsubseteq_{T} q$, i.e.\ $\Trc(p)\subseteq \Trc(q)$.
By Lemma~\ref{lem:unique-trace-equality-contrib}, $t_p=t_q$.
By Definition~\ref{linear trace-process},
\[
\Trc(p)=\Pref(t_p)\cup\{t_p\}=\Pref(t_q)\cup\{t_q\}=\Trc(q).
\]
($\Leftarrow$) If $\Trc(p)=\Trc(q)$, then $\Trc(p)\subseteq \Trc(q)$, so $p\sqsubseteq_{T} q$.
\end{proof}

\subsubsection{Proofs on Finite Trace Observations and \texorpdfstring{\(\tau_O\)}{tau-S}} \label{app:finite-trace-observations}

\begin{proof}[\textbf{Proof of Proposition~\ref{prop:item4-trace-LTS-contrib}}]
Fix \(s\in\Sigma^*\) and \(t\in\Sigma^\omega\). Since \(s\) is finite, we have
\[
s\in\Trc(t)
\iff
s\in\Pref(t)
\iff
s \text{ is a prefix of } t.
\]
Therefore, $t\in O(s)
\iff
t\in s \cdot \Sigma^\omega$, and hence \(O(s)=s \cdot \Sigma^\omega\).
\end{proof}

\paragraph{Proof that \texorpdfstring{$\mathcal{B}_{O}$}{B O} is a Basis.}

\begin{proof}[\textbf{Proof of Proposition~\ref{prop:BS-basis-trace-LTS-contrib}}]
We verify the two basis axioms.

\smallskip
\noindent
\textit{(i) (Covering property.)}
Let \(t\in \Sigma^\omega\). Since the empty word \(\varepsilon\) is a prefix of every infinite trace, we have $\varepsilon\in \Trc(t)$. Hence, $t\in O(\varepsilon)$.
Therefore, every point of \(\Sigma^\omega\) belongs to some member of \(\mathcal B_O\).

\smallskip
\noindent
\textit{(ii) (Intersection property.)}
Let \(t\in O(s_1)\cap O(s_2)\), where \(s_1,s_2\in\Sigma^*\). Then, $s_1\in \Trc(t)$ and $s_2\in \Trc(t)$. By Proposition~\ref{prop:item4-trace-LTS-contrib}, this means that both \(s_1\) and \(s_2\) are prefixes of \(t\). Since two finite prefixes of the same infinite word are always comparable under the prefix order, either $s_1\preceq s_2$ or $s_2\preceq s_1$.

If \(s_1\preceq s_2\), then every infinite word having prefix \(s_2\) also has prefix \(s_1\). Hence, $O(s_2)\subseteq O(s_1)\cap O(s_2)$. Moreover, since \(t\in O(s_2)\), we obtain $t\in O(s_2)\subseteq O(s_1)\cap O(s_2)$. 
Similarly, if \(s_2\preceq s_1\), then $t\in O(s_1)\subseteq O(s_1)\cap O(s_2)$. 
Thus, for every \(t\in O(s_1)\cap O(s_2)\), there exists a basis element \(O(s)\in\mathcal B_O\) such that $t\in O(s)\subseteq O(s_1)\cap O(s_2)$.
Therefore, \(\mathcal B_O\) satisfies the basis axioms, and hence it is a basis for a topology on \(\Sigma^\omega\).
\end{proof}

\begin{proof}[\textbf{Proof of Corollary~\ref{cor:item4-cantor-contrib}}]
The basic open sets of the Cantor topology on \(\Sigma^\omega\) are the sets of the form \(s \cdot \Sigma^\omega\) with \(s\in\Sigma^*\).
By Proposition~\ref{prop:item4-trace-LTS-contrib}, the sets \(O(s)\) are exactly the basic sets \(s\cdot\Sigma^\omega\), which form a basis for the Cantor topology.
\end{proof}

\subsubsection{Proofs on Finite-Depth Trace Inclusion}

\begin{proof}[\textbf{Proof of Corollary \ref{finite depth top}}]
Fix \(k\in\mathbb N\). The relation \(\sqsubseteq_T^k\) is a preorder on
\(\Proc\), since it is defined by inclusion of the sets \(\Trc_k(p)\), and set
inclusion is reflexive and transitive. The proof of
Lemma~\ref{alex top with trace preorder} uses only the fact that the underlying
relation is a preorder. Applying the same argument to
\(\sqsubseteq_T^k\), we obtain that \(\tau(\sqsubseteq_T^k)\) is a topology on
\(\Proc\).
\end{proof}

\begin{proof}[\textbf{Proof of Lemma~\ref{lem:Trck-singleton-trace-LTS-contrib}}]
We have $\Trc(t)=\Pref(t)\cup\{t\}$.
Since \(t\notin\Sigma^k\), intersecting with \(\Sigma^k\) implies $\Trc_k(t)=\Pref(t)\cap\Sigma^k$.
There is exactly one prefix of \(t\) of length \(k\), i.e. \(t\!\upharpoonright k\).
\end{proof}

\begin{proof}[\textbf{Proof of Lemma~\ref{lem:preorder-prefix-trace-LTS-contrib}}]
By Lemma~\ref{lem:Trck-singleton-trace-LTS-contrib}, $\Trc_k(t)=\{t\!\upharpoonright k\}$ and $\Trc_k(u)=\{u\!\upharpoonright k\}$.
Thus, \(\Trc_k(t)\subseteq \Trc_k(u)\) holds, if and only if these singletons coincide.
\end{proof}

\begin{proof}[\textbf{Proof of Corollary~\ref{cor:Tk-prefix-sets-contrib}}]
By Lemma~\ref{lem:preorder-prefix-trace-LTS-contrib}, two points are \(\sqsubseteq_{T}^k\)-equivalent
exactly when they have the same prefix of length \(k\). Hence, the equivalence class of
\(t\) is \((t\!\upharpoonright k)  \cdot \Sigma^\omega\).
\end{proof}

\begin{proof}[\textbf{Proof of Theorem~\ref{thm:item2-trace-LTS-contrib}}]
For fixed \(k\), Corollary~\ref{cor:Tk-prefix-sets-contrib} shows that
\(\tau(\sqsubseteq_{T}^k)\) is generated by unions of sets of length \(k\). In particular,
every \(\tau(\sqsubseteq_{T}^k)\)-open set is Cantor-open. Hence, $\tau(\sqsubseteq_{T}^k)\subseteq \tau_{\mathrm{Cantor}}$, $k\in\mathbb N$, and therefore, $\tau(\sqsubseteq_{T}^{\mathrm{fin}})\subseteq \tau_{\mathrm{Cantor}}$. 

Conversely, for each finite word \(s\in\Sigma^*\) with \(|s|=k\), the cylinder
\(s  \cdot \Sigma^\omega\) is one of the equivalence classes of \(\sqsubseteq_{T}^k\), hence belongs to
\(\tau(\sqsubseteq_{T}^k)\subseteq \tau(\sqsubseteq_{T}^{\mathrm{fin}})\). Since the sets form a
basis for the Cantor topology, this yields $\tau_{\mathrm{Cantor}}\subseteq \tau(\sqsubseteq_{T}^{\mathrm{fin}})$. 
Thus, the two topologies coincide.
\end{proof}

\paragraph{Proof on the Comparison of Finite and Full Trace Topologies.}

\begin{proof}[\textbf{Proof of Corollary~\ref{cor:trace-LTS-summary}}]
By Corollary~\ref{cor:item4-cantor-contrib}, the topology generated by the family \(\{O(s)\mid s\in\Sigma^*\}\) is the Cantor topology. By Theorem~\ref{thm:item2-trace-LTS-contrib}, $\tau(\sqsubseteq_{T}^{\mathrm{fin}})=\tau_{\mathrm{Cantor}}$.
This proves the first equality. The identity
$\tau(\sqsubseteq_T)=\mathcal P(\Sigma^\omega)$
follows because, in \(LTS_{\mathrm{tr}}\), full trace inclusion coincides with
equality of infinite traces.
The strict inclusion for \(|\Sigma|\ge 2\) follows because the Cantor topology on \(\Sigma^\omega\) is not discrete, whereas \(\mathcal P(\Sigma^\omega)\) is.
\end{proof}

\subsection{Proofs on Process-Based Topologies} 
\label{app:process-topologies}

\paragraph{Proof that \texorpdfstring{$\mathcal{B}_{O}$}{B O} is a Basis.}

\begin{proof}[\textbf{Proof of Proposition~\ref{prop:BS-process-basis-contrib}}]
We verify the two basis axioms.

\smallskip
\noindent
\textit{(i) (Covering property.)}
Let \(p\in \Proc\). Since the empty trace \(\varepsilon\) is executable from every process, we have $\varepsilon\in\Trc(p)$. Hence, $p\in O(\varepsilon)$. 
Since $O(\varepsilon)=\bigcap_{i=1}^1 O(\varepsilon)\in\mathcal B_O$, it follows that every point of \(\Proc\) belongs to some member of \(\mathcal B_O\).

\smallskip
\noindent
\textit{(ii) (Intersection property.)}
Let $B_1=\bigcap_{i=1}^nO(s_i)\in\mathcal B_O$ and $B_2=\bigcap_{j=1}^m O(t_j)\in\mathcal B_O$, 
and let \(p\in B_1\cap B_2\). Define
\[
B_3=
\left(\bigcap_{i=1}^n O(s_i)\right)\cap
\left(\bigcap_{j=1}^m O(t_j)\right)
=
\bigcap_{k=1}^{n+m} O(u_k),
\]
where \(u_1,\dots,u_{n+m}\) is the concatenation of $s_1,\dots,s_n,t_1,\dots,t_m$. Then, \(B_3\in\mathcal B_O\), since it is again a finite intersection of sets of the form \(O(s)\). Moreover, $p\in B_3\subseteq B_1\cap B_2$. 
Thus, for every \(p\in B_1\cap B_2\), there exists \(B_3\in\mathcal B_O\) such that $
p\in B_3\subseteq B_1\cap B_2$.

Therefore, \(\mathcal B_O\) is a basis for a topology on \(\Proc\).
\end{proof}

\begin{proof}[\textbf{Proof of Proposition~\ref{prop:tau-s2-open}}]
Since \(\mathcal B_O\) is a basis, this is exactly the standard basis characterization of open sets.
\end{proof}

\paragraph{Proof that \texorpdfstring{$\tau_O=\tau(\sqsubseteq_T^{\mathrm{fin}})$.}{tau O equals the topology induced by finite trace inclusion}}

\begin{proof}[\textbf{Proof of Theorem~\ref{equality of topologies on processes}}]
($\Rightarrow$) Fix \(s\in\Sigma^*\), and let \(|s|=k\). We show that \(O(s)\) is upward-closed with respect to \(\sqsubseteq_{T}^k\). Let \(p,q\in \Proc\) with \(p\sqsubseteq_{T}^k q\) and \(p\in O(s)\). Then, $\Trc_k(p)\subseteq \Trc_k(q)$, and since \(p\in O(s)\), by the definition of $O(s)$, we have \(s\in \Trc_k(p)\). Hence, \(s\in \Trc_k(q)\), so \(q\in O(s)\). Therefore, \(O(s)\in\tau(\sqsubseteq_{T}^k)\subseteq \tau(\sqsubseteq_{T}^{\mathrm{fin}})\).

Since every generator \(O(s)\) is open in \(\tau(\sqsubseteq_{T}^{\mathrm{fin}})\), the topology \(\tau_O\) generated by them is contained in \(\tau(\sqsubseteq_{T}^{\mathrm{fin}})\). Therefore, $\tau_O\subseteq \tau(\sqsubseteq_{T}^{\mathrm{fin}})$.

($\Leftarrow$) Fix \(k\in\mathbb N\), and let $U$ be open in $\tau(\sqsubseteq_{T}^k)$, that is $U\in \tau(\sqsubseteq_{T}^k)$. We show that $U$ is open in $\tau_O$, too. Since \(\tau(\sqsubseteq_{T}^k)\) is the Alexandroff topology induced by \(\sqsubseteq_{T}^k\), the set \(U\) is upward-closed with respect to \(\sqsubseteq_{T}^k\). Hence, for fixed $k$, we have
$q \in U \iff \exists p \in U \text{ and } p \sqsubseteq_{T}^k q$.
Equivalently, $U
=
\bigcup_{p\in U}\{\,q\in \Proc \mid p\sqsubseteq_{T}^k q\,\}$.
By Definition \ref{def: Finite-depth trace inclusion}, we have
\[
U
=
\bigcup_{p\in U}\{\,q\in \Proc \mid \Trc_k(p)\subseteq \Trc_k(q)\,\}.
\]
To prove that $U$ is open in $\tau_O$, we show that it can be written as a union of $\tau_O$-open sets.

Fix \(p\in U\). Since \(U\) is open, there exists \(k_p\in \mathbb{N}\) such that
\begin{equation}
\label{inclusion}
\{q\in \Proc \mid \Trc_{k_p}(p)\subseteq \Trc_{k_p}(q)\}\subseteq U.
\end{equation}
We claim that
\begin{equation}
\label{equality}
\{q\in \Proc \mid \Trc_{k_p}(p)\subseteq \Trc_{k_p}(q)\}
=
\bigcap_{s\in \Trc_{k_p}(p)} O(s).
\end{equation}
Indeed, for any \(q\in \Proc\),
\[
\begin{aligned}
q\in \{q\in \Proc \mid \Trc_{k_p}(p)\subseteq \Trc_{k_p}(q)\}
&\iff  \Trc_{k_p}(p)\subseteq \Trc_{k_p}(q)\\
&\iff  \forall s\in \Trc_{k_p}(p),\ s\in \Trc_{k_p}(q)\\
&\iff  \forall s\in \Trc_{k_p}(p),\ s\in \Trc(q)\\
&\iff  \forall s\in \Trc_{k_p}(p),\ q\in O(s)\\
&\iff q\in \bigcap_{s\in \Trc_{k_p}(p)} O(s).
\end{aligned}
\]
This proves the claim.
Hence, by \eqref{inclusion} and \eqref{equality}, $\bigcap_{s\in \Trc_{k_p}(p)} O(s)\subseteq U$.

Moreover, since \(\Trc_{k_p}(p)\subseteq \Trc_{k_p}(p)\), we have $p\in \bigcap_{s\in \Trc_{k_p}(p)} O(s)$.
Thus, every \(p\in U\) is contained in a basic open set that is itself contained in \(U\). Therefore, $
U
=
\bigcup_{p\in U}\bigcap_{s\in \Trc_{k_p}(p)} O(s)$.

By assumption, \(\Trc_k(p)\) is finite for every \(p\in \Proc\). Hence, for each \(p\in U\), the set $\bigcap_{s\in \Trc_k(p)} O(s)$
is a finite intersection of subbasic \(\tau_O\)-open sets, and is therefore \(\tau_O\)-open.
It follows that \(U\), being a union of \(\tau_O\)-open sets, also belongs to \(\tau_O\). Thus, $\tau(\sqsubseteq_{T}^k)\subseteq \tau_O$. 
Since this holds for every \(k\in\mathbb N\), we obtain
\[
\tau(\sqsubseteq_{T}^{\mathrm{fin}})
=
\left\langle\bigcup_{k\ge 0}\tau(\sqsubseteq_{T}^k)\right\rangle
\subseteq \tau_O.
\]
Combining this with the first inclusion yields $\tau_O=\tau(\sqsubseteq_{T}^{\mathrm{fin}})$.
\end{proof}

\subsubsection{Proofs on Simulation Topology}

\begin{proof}[\textbf{Proof of Lemma~\ref{lem:plus-sim-contrib}}]

(\(\Rightarrow\)) Assume that
$q_1+q_2 \sqsubseteq_{S} p$. Then, there exists a simulation $R\subseteq \mathrm{Proc_F}\times \Proc$,
such that $(q_1+q_2,p)\in R$.

Define $R'=R\cup\{(q_1,p),(q_2,p)\}$. We show that \(R'\) is a simulation.

Let \((r,s)\in R'\), and assume that $r\xrightarrow{a}_F r'$.
We must find \(s'\in\Proc\), such that $s\xrightarrow{a}_P s'$ and $(r',s')\in R'$. There are three cases:
\begin{itemize}
    \item[(i)] Assume that \((r,s)\in R\). Then, the required transition exists because \(R\) is already a simulation.
    \item[(ii)] Assume that $(r,s)=(q_1,p)$. If
$q_1\xrightarrow{a}_F r'$,
then we also have $q_1+q_2\xrightarrow{a}_F r'$. 
Since \((q_1+q_2,p)\in R\) and \(R\) is a simulation, there exists
\(p'\in\Proc\), such that $p\xrightarrow{a} p'$ and $(r',p')\in R$. Since \(R\subseteq R'\), we have $(r',p')\in R'$.
    \item[(iii)] 
Assume $(r,s)=(q_2,p)$. This case is symmetric to case \textup{(ii)}.
\end{itemize}
Thus, \(R'\) is a simulation. Since $(q_1,p)\in R'$ and $(q_2,p)\in R'$, we conclude that $q_1 \sqsubseteq_{S} p$ and $q_2 \sqsubseteq_{S} p$.

\smallskip
\noindent
\((\Leftarrow)\)
Assume $q_1 \sqsubseteq_{S} p$ and $q_2 \sqsubseteq_{S} p$. Then, there exist simulations \(R_1,R_2\subseteq \mathrm{Proc_F}\times \Proc\) such that $(q_1,p)\in R_1$
and $(q_2,p)\in R_2$. 

Define $R=R_1\cup R_2\cup\{(q_1+q_2,p)\}$.
We show that \(R\) is a simulation. 

Since by assumption, $R_1$ and $R_2$ are simulations, we show that \((q_1+q_2,p)\) is a simulation. If $q_1+q_2\xrightarrow{a}_F r$, then either the transition comes from \(q_1\) or from \(q_2\). In the first case, since \((q_1,p)\in R_1\) and \(R_1\) is a simulation, there exists \(p'\in \Proc\), such that $p\xrightarrow{a} p'$ and $(r,p')\in R_1\subseteq R$.
The second case is symmetric using \(R_2\). Hence, \(R\) is a simulation as a union of simulations, and thus $q_1+q_2 \sqsubseteq_{S} p$.
\end{proof}

\begin{lemma}
\label{lem:zero-sim}
For every \(p\in \Proc\), $0 \sqsubseteq_{S} p$.
\end{lemma}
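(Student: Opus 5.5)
The plan is to exhibit an explicit simulation relation containing the pair \((0,p)\), using nothing beyond the definition of simulation from Section~\ref{preliminaries}. Fix an arbitrary \(p\in\Proc\) and let \(R=\{(0,p)\}\subseteq\Proc_F\times\Proc\). Recall that, by construction of \(LTS_F\), the term \(0\) has no outgoing transitions: neither clause of the definition of \(\xrightarrow{a}_F\) applies to \(0\), since \(0\) is neither of the form \(a.q'\) nor of the form \(q_1+q_2\).

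The single step is to check the simulation condition for the unique pair \((0,p)\in R\). That condition only constrains transitions \(0\xrightarrow{a}_F r'\); since there are none, it holds vacuously. Hence \(R\) is a simulation from \(LTS_F\) to \(L\). Because \((0,p)\in R\), the definition of \(\sqsubseteq_S\) gives \(0\sqsubseteq_S p\). As \(p\) was arbitrary, the claim holds for every \(p\in\Proc\).

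There is essentially no obstacle here. The only point needing care is the justification that \(0\) has no transitions in \(\to_F\). This follows by inspecting the two generating clauses of \(\xrightarrow{a}_F\), which is exactly the remark made in Section~\ref{preliminaries}. As a consequence, \(O(0)=\Proc\). This is presumably how the lemma will be used, namely to supply the covering property in Proposition~\ref{prop:Bsim-basis-contrib} and to show \(D\notin\tau_{\mathrm{sim}}\) in Proposition~\ref{prop:D-bis-open-not-sim-open}.
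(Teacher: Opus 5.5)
Your proof is correct and matches the paper's argument exactly: take \(R=\{(0,p)\}\) and observe that the simulation condition holds vacuously because \(0\) has no outgoing transitions in \(LTS_F\). Your explicit check that neither clause defining \(\xrightarrow{a}_F\) applies to \(0\) is a slightly more careful version of the paper's one-line remark.
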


\begin{proof}
Let $R=\{(0,p)\}$. Since \(0\) has no outgoing transitions, the simulation condition is automatically satisfied. Hence, \(R\) is a simulation, and therefore $0 \sqsubseteq_{S} p$.
\end{proof}

\paragraph{Proof that \texorpdfstring{$\mathcal{B}_{\mathrm{sim}}$}{B sim} is a Basis.}

\begin{proof}[\textbf{Proof of Proposition~\ref{prop:Bsim-basis-contrib}}]
We verify the two basis axioms.

\smallskip
\noindent
\textit{(i) (Covering property.)}
Let \(p\in \Proc\). By Lemma~\ref{lem:zero-sim}, $0 \sqsubseteq_{S} p$. Hence, $p\in O(0)$.
Therefore, every process belongs to some member of \(\mathcal B_{\mathrm{sim}}\).

\smallskip
\noindent
\textit{(ii) (Intersection property.)}
Let $p\in O(q_1)\cap O(q_2)$, where \(q_1,q_2\in \mathrm{Proc_F}\). Then, $q_1 \sqsubseteq_{S} p$ and $q_2 \sqsubseteq_{S} p$. By Lemma~\ref{lem:plus-sim-contrib}, $q_1+q_2 \sqsubseteq_{S} p$. Hence, $p\in O(q_1+q_2)$.
Moreover, again by Lemma~\ref{lem:plus-sim-contrib}, $O(q_1+q_2)=O(q_1)\cap O(q_2)$.
Thus, $p\in O(q_1+q_2)\subseteq O(q_1)\cap O(q_2)$.

Hence, \(\mathcal B_{\mathrm{sim}}\) is a basis for a topology on \(\Proc\).
\end{proof}

\begin{proof}[\textbf{Proof of Proposition~\ref{prop:tau-sim-open}}]
Since \(\mathcal B_{\mathrm{sim}}\) is a basis, this is exactly the standard basis characterization of open sets.
\end{proof}

\subsubsection{Proofs on \texorpdfstring{$\tau_O$ versus $\tau_{\mathrm{sim}}$}{tau O versus tau sim}}
\label{app-example-section}

\begin{proof}[\textbf{Proof of Theorem~\ref{thm:tauS-in-tauSim-contrib}}]
We show that every subbasic open set of \(\tau_O\) is also open in \(\tau_{\mathrm{sim}}\).

Fix \(s\in \Sigma^*\) and let $s = a \cdot s'$, $s' \in \Sigma^*$. Define a finite loop-free process \(q_s\in \mathrm{Proc_F}\) inductively by
\[
q_\varepsilon = 0,
\quad
q_{s}= q_{a \cdot s'} =a.q_{s'}, 
\quad a\in\Sigma,\ s'\in\Sigma^*.
\]

We show that $O(s)=O(q_s)$. We prove this by induction on the length of \(s\).

If \(s=\varepsilon\), then $O(\varepsilon)= \Proc$, since every process can execute the empty trace. Also, by Lemma~\ref{lem:zero-sim}, $O(q_\varepsilon)=O(0)= \Proc$. Hence, $O(\varepsilon)=O(q_\varepsilon)$.

Now let \(s=a \cdot s'\), and assume inductively that $O(s')=O(q_{s'})$. We show that $O(a \cdot s')=O(a.q_{s'})$.

Let \(p\in \Proc\). Then, $p\in O(a.q_{s'}) \iff a.q_{s'} \sqsubseteq_{S} p$. By simulation definition in \ref{preliminaries}, this holds iff there exists \(p'\in \Proc\), such that
$p\xrightarrow{a} p'$ and $q_{s'} \sqsubseteq_{S} p'$.
Equivalently, by the induction hypothesis, there exists \(p'\in \Proc\), such that $p\xrightarrow{a} p'$ and $s'\in \Trc(p')$. 

But this is exactly the condition $a\cdot s'\in \Trc(p)$,
that is, $p\in O(a \cdot s')$. 
Therefore, $O(a \cdot s')=O(a.q_{s'})$. 

Thus, for every \(s\in\Sigma^*\), the set \(O(s)\) is open in \(\tau_{\mathrm{sim}}\). Hence, $\tau_O\subseteq \tau_{\mathrm{sim}}$.
\end{proof}

\begin{proof}[\textbf{Proof of Example~\ref{exp:tauS-strictly-in-tauSim-contrib}}]
By Theorem~\ref{thm:tauS-in-tauSim-contrib}, we have $
\tau_O\subseteq \tau_{\mathrm{sim}}$. We show that the inclusion is strict.

Consider a finite loop-free process $q=a.(b.0+c.0) \in \mathrm{Proc_F}$ and the processes $p_1= a. b.0 + a.c.0$ and $p_2=a.(b.0+c.0)$, where $p_1,p_2 \in \Proc$.

Let $O(q)=\{\,p\in\Proc\mid q \sqsubseteq_S p\,\}$. Then, $O(q)\in\tau_{\mathrm{sim}}$. 

We show that $O(q)\notin\tau_O$.

The process \(p_2\)
has an \(a\)-successor \(t\), such that \(t\) can execute both \(b\) and \(c\).
Thus, after executing \(a\), the process \(p_2\) reaches a state that can
simulate \(b.0+c.0\). Therefore, $q \sqsubseteq_S p_2$, and hence $p_2\in O(q)$. On the other hand, $q\not\sqsubseteq_{S} p_1$ and thus \(p_1\notin O(q)\).

Now assume towards a contradiction that $O(q)\in\tau_O$.
Since \(p_2\in O(q)\), by the open sets characterization of \(\tau_O\), there exist $s_1,\ldots,s_n\in\Sigma^*$,
such that $p_2\in \bigcap_{i=1}^nO(s_i)
\subseteq O(q)$. 

But $\Trc(p_1)=\Trc(p_2)$, so \(p_1\) satisfies exactly the same finite trace observations as \(p_2\). Thus, we also get $p_1\in \bigcap_{i=1}^nO(s_i)$.

Hence, $p_1\in O(q)$, but this is a contradiction. Therefore, $O(q)\notin\tau_O$. 

Since $O(q)\in\tau_{\mathrm{sim}}$, but $O(q)\notin\tau_O$,
the inclusion of Theorem \ref{thm:tauS-in-tauSim-contrib} is strict. Hence, $\tau_O\subsetneq\tau_{\mathrm{sim}}$.
\end{proof}

\begin{proof}[\textbf{Proof of Corollary~\ref{cor:tauS-in-tauSim-strict}}]
The proof follows immediately, combining Theorem \ref{thm:tauS-in-tauSim-contrib} and Example \ref{exp:tauS-strictly-in-tauSim-contrib}.
\end{proof}

\subsection{Proofs on Verification Induced by Observation Topologies}

\subsubsection{Proofs on Topological Verification}

\begin{proof}[\textbf{Proof of Proposition \ref{prop:general-basis-characterization}}]
Since by assumption $\mathcal B_{G}$ is a basis, this is exactly the standard basis characterization of open sets.
\end{proof}

\begin{proof}[\textbf{Proof of Theorem \ref{thm:general-topological-verification}}]

\((\Rightarrow)\) Assume that \(L\) is \(\mathcal B_G\)-verifiable. Then,
there exists a \(\mathcal B_G\)-verifier \(\nu\) for \(L\), such that, for every
\(p\in L\), there exists an observation \(o_p\in\Obs\) satisfying $p\in O(o_p)$ and $\nu(o_p)=\mathsf{yes}$.
By the first condition of Definition \ref{Topological verifier}, $\nu(o_p)=\mathsf{yes}$, implies $O(o_p)\subseteq L$. 
Hence, for every \(p\in L\), $p\in O(o_p)\subseteq L$. Therefore, $
L=\bigcup_{p\in L}O(o_p)$. 
Since each \(O(o_p)\) is a basis element and therefore open in \(\tau_G\), \(L\) is open in \(\tau_G\).

\smallskip
\noindent
\((\Leftarrow)\) Assume that \(L\in\tau_G\). Define
\[
\nu(o)=
\begin{cases}
\mathsf{yes}, & \text{if } O(o)\subseteq L,\\
\mathsf{?}, & \text{otherwise.}
\end{cases}
\]
We show that \(\nu\) is a \(\mathcal B_G\)-verifier for \(L\).

The first condition of Definition \ref{Topological verifier} follows directly from the definition of $\nu$. For the second condition of Definition  \ref{Topological verifier}, assume that
\(\nu(o)=\mathsf{yes}\) and \(O(o')\subseteq O(o)\). Since
\(\nu(o)=\mathsf{yes}\), we have \(O(o)\subseteq L\), and hence
$O(o')\subseteq O(o)\subseteq L$.
Therefore, \(\nu(o')=\mathsf{yes}\).

We show that \(L\) is \(\mathcal B_G\)-verifiable. 

Let \(p\in L\). Since \(L\) is open and \(\mathcal B_G\) is a basis for \(\tau_G\), by
Proposition \ref{prop:general-basis-characterization}, there exists \(o\in\Obs\), such
that $p\in O(o)\subseteq L$.
By the definition of \(\nu\), we have \(\nu(o)=\mathsf{yes}\). Thus, \(L\) is
\(\mathcal B_G\)-verifiable.
\end{proof}

\subsubsection{Proofs on Monitorability Induced by \texorpdfstring{$\tau_O$}{tau-O} and \texorpdfstring{$\tau_{\mathrm{sim}}$}{tau-sim}}

\begin{proof}[\textbf{Proof of Corollary \ref{Equivalence L top with L in tau s}}]
This follows from Theorem~\ref{thm:general-topological-verification}, by taking
\(\Obs=\mathcal P_f(\Sigma^*)\) and
\(O(D)=\bigcap_{s\in D}O(s)\).
\end{proof}

\paragraph{Proof on Topological and Operational Equivalence of Multi-Trace Monitorability.}

\begin{proof}[\textbf{Proof of Theorem~\ref{theorem:topological-monitoring-system-equivalence}}]

\((i.\iff ii.)\) This follows by Corollary~\ref{Equivalence L top with L in tau s}.

\smallskip
\noindent
\((ii.\Rightarrow iii.)\)
Assume that \(L\) is multi-trace monitorable according to
Definition~\ref{def: multi-trace mon sense 1}. Then, there exists a multi-trace
monitor for \(L\), $m:\mathcal P_f(\Sigma^*)\to\{\mathsf{yes},\mathsf{?}\}$,
such that, for every \(p\in L\), there exists a finite set $D\subseteq \Trc(p)\cap\Sigma^*$, with
$m(D)=\mathsf{yes}$.

Define $m'=\{\,D\in\mathcal P_f(\Sigma^*)\mid m(D)=\mathsf{yes}\,\}$.

$-$ We show that \(m'\) is a multi-trace monitor according to
Definition~\ref{def: Multi-trace monitor}. 

Let
\(C,D\in\mathcal P_f(\Sigma^*)\), and assume that
$C\in m'$ and $C\subseteq D$. By the definition of \(m'\), we have $m(C)=\mathsf{yes}$.
Since \(m\) satisfies the second condition of
Definition~\ref{def:multi-trace-monitor-ms}, it follows that $m(D)=\mathsf{yes}$. Hence, by the definition of \(m'\), $D\in m'$.
Thus, \(m'\) is a multi-trace monitor according to
Definition~\ref{def: Multi-trace monitor}.

$-$ We now show that \(m'\) is sound for \(L\).

Let \(p\in\Proc\), and suppose that $\mathrm{acc_{m'}}(p)$.By Definition~\ref{def:process-trace-acceptance}, there exists $D\subseteq \Trc(p)\cap\Sigma^*$, such that $D\in m'$.
By the definition of \(m'\), this means that
$m(D)=\mathsf{yes}$. 
Since \(m\) is a multi-trace monitor for \(L\), the first condition of
Definition~\ref{def:multi-trace-monitor-ms} implies $\bigcap_{s\in D}O(s)\subseteq L$.

Moreover, since \(D\subseteq \Trc(p)\cap\Sigma^*\), every \(s\in D\) is a trace
of \(p\). Hence, $p\in O(s)$, for every \(s\in D\), and therefore, $p\in\bigcap_{s\in D}O(s)$. 
Thus, $p\in\bigcap_{s\in D}O(s)\subseteq L$. 
So \(p\in L\). 

Hence, \(m'\) is sound for \(L\).

$-$ We show that every process in \(L\) is accepted by \(m'\). 

Let \(p\in L\). Since \(L\) is multi-trace monitorable according to
Definition~\ref{def: multi-trace mon sense 1}, there exists a finite set $D\subseteq \Trc(p)\cap\Sigma^*$, such that $m(D)=\mathsf{yes}$. By the definition of \(m'\), we have $D\in m'$. 
Therefore, by Definition~\ref{def:process-trace-acceptance}, $\mathrm{acc_{m'}}(p)$. 
Hence, \(L\) is multi-trace satisfaction monitorable according to
Definition~\ref{def:multi-trace-satisfaction-monitorability}.

\smallskip
\noindent
\((iii.\Rightarrow ii.)\)
Assume that \(L\) is multi-trace satisfaction monitorable according to
Definition~\ref{def:multi-trace-satisfaction-monitorability}. Then, there exists
a multi-trace monitor $m\subseteq\mathcal P_f(\Sigma^*)$, such that \(m\) is sound for \(L\), and for every \(p\in L\), $\mathrm{acc_m}(p)$.

Define a function $m':\mathcal P_f(\Sigma^*)\to\{\mathsf{yes},\mathsf{?}\}$, by
\[
m'(D)=
\begin{cases}
\mathsf{yes}, & \text{if } D\in m,\\[1mm]
\mathsf{?}, & \text{otherwise.}
\end{cases}
\]

$-$ We show that \(m'\) is a multi-trace monitor for \(L\) according to
Definition~\ref{def:multi-trace-monitor-ms}.

For the first condition of Definition~\ref{def:multi-trace-monitor-ms}, assume that $m'(D)=\mathsf{yes}$. Then, by the definition of \(m'\), $D\in m.$ We show that $\bigcap_{s\in D}O(s)\subseteq L$.

Let $p\in\bigcap_{s\in D}O(s)$. Then, \(p\in O(s)\) for every \(s\in D\). Hence, every \(s\in D\) is a trace of
\(p\), and so $D\subseteq \Trc(p)\cap\Sigma^*$.
Since \(D\in m\), Definition~\ref{def:process-trace-acceptance} gives $\mathrm{acc_m}(p)$.
Since \(m\) is sound for \(L\), we obtain $p\in L$. Therefore, $\bigcap_{s\in D}O(s)\subseteq L$.

For the second condition of Definition~\ref{def:multi-trace-monitor-ms}, assume that $m'(C)=\mathsf{yes}$ and $C\subseteq D$. Then, by the definition of \(m'\), $C\in m$. 
Since \(m\) is a multi-trace monitor according to
Definition~\ref{def: Multi-trace monitor}, and since \(C\subseteq D\), we have $D\in m$. 
Therefore, by the definition of \(m'\), $m'(D)=\mathsf{yes}$. 

Therefore, \(m'\) is a multi-trace monitor for \(L\) according to
Definition~\ref{def:multi-trace-monitor-ms}.

$-$ We show that \(L\) is multi-trace monitorable according to
Definition~\ref{def: multi-trace mon sense 1}. 

Let \(p\in L\). Since \(L\) is
multi-trace satisfaction monitorable, we have $\mathrm{acc_m}(p)$. 
By Definition~\ref{def:process-trace-acceptance}, there exists $D\subseteq \Trc(p)\cap\Sigma^*$, such that $D\in m$. 
By the definition of \(m'\), this gives $m'(D)=\mathsf{yes}$.
Therefore, for every \(p\in L\), there exists  $D\subseteq \Trc(p)\cap\Sigma^*$, such that $m'(D)=\mathsf{yes}$. 
Hence, \(L\) is multi-trace monitorable according to
Definition~\ref{def: multi-trace mon sense 1}.

Therefore, all three statements are equivalent.
\end{proof}

\begin{proof}[\textbf{Proof of Corollary~\ref{simul mon in tau sim}}]
This follows from Theorem~\ref{thm:general-topological-verification} by taking $\Obs=\mathrm{Proc}_F$ and $O(q)=\{\,p\in\Proc\mid q \sqsubseteq_S p\,\}$.
\end{proof}

\subsubsection{Proofs on Beyond Simulation Observations}

\paragraph{Proof that \texorpdfstring{\(\mathcal{F}_{\mathrm{CS}}\) and \(\mathcal{F}_{\mathrm{RS}}\) Do not Form a Basis.}{F_CS and F_RS Do Not Form a Basis}}

\begin{proof}[\textbf{Proof of Proposition~\ref{prop:FCS-FRS-not-cover}}]
We use the same counterexample for complete simulation and ready simulation.
Assume that \(\Sigma=\{a\}\), and consider the LTS $L=\langle \{x\},\Sigma,\to\rangle$,
where \(\Proc=\{x\}\) and the only transition is $x\xrightarrow{a} x$. Thus, $I(x)=\{a\}\neq\varnothing$ and hence, \(x\) is not a deadlock. Moreover, the only state reachable from \(x\) is
\(x\) itself, so no deadlock state is reachable from \(x\).

We show that, for each \(X\in\{\mathrm{CS},\mathrm{RS}\}\),
\[
x\notin \bigcup_{q\in\mathrm{Proc_F}}O_X(q).
\]
Equivalently, we show that there is no finite loop-free process
\(q\in\mathrm{Proc_F}\) such that $q\sqsubseteq_X x$.

We first note a property common to complete simulation and ready simulation.

If \(R\) is an \(X\)-simulation relation, where
\(X\in\{\mathrm{CS},\mathrm{RS}\}\), and if \((u,v)\in R\), then 
\begin{equation}
\label{eq:deadlock-preservation-X}  
I(u)=\varnothing
\Longrightarrow
I(v)=\varnothing.
\end{equation}
Indeed, for a complete simulation, this follows from the condition $I(u)=\varnothing
\Longleftrightarrow
I(v)=\varnothing$, while for a ready simulation, it follows from the stronger condition $I(u)=I(v)$.

Now let \(q\in\mathrm{Proc_F}\). We distinguish two cases.

\medskip
\noindent
\textit{(i)} Suppose that \(q\) is a deadlock, i.e. $I(q)=\varnothing$. Assume towards a contradiction that
$q\sqsubseteq_X x$.
Then, there exists an \(X\)-simulation relation $R\subseteq\mathrm{Proc_F}\times\Proc$, such that, $(q,x)\in R$. By \eqref{eq:deadlock-preservation-X}, since \(I(q)=\varnothing\), we must have $I(x)=\varnothing$. But this contradicts $I(x)=\{a\}\neq\varnothing$. 
Hence, in this case, $q\not\sqsubseteq_X x$.

\medskip
\noindent
\textit{(ii)} Suppose that \(q\) is not a deadlock, i.e. $I(q)\neq\varnothing$. Since \(q\in\mathrm{Proc_F}\) is finite and loop-free, every path
starting from \(q\) must eventually stop and reach a deadlock finite process. Hence, there are finite processes
$q=q_0,q_1,\ldots,q_n$, and actions \(a_0,\ldots,a_{n-1}\in\Sigma\), such that
\[
q_0\xrightarrow{a_0}_F q_1
\xrightarrow{a_1}_F \cdots
\xrightarrow{a_{n-1}}_F q_n,
\]
with $I(q_n)=\varnothing$.

Assume towards a contradiction that $q\sqsubseteq_X x$.
Then, there exists an \(X\)-simulation relation $R\subseteq\mathrm{Proc_F}\times\Proc$, such that $(q,x)\in R$. Since \(q=q_0\), we have $(q_0,x)\in R$. 

Let $x_0=x$. 
By the first condition of Definition \ref{def:X-simulation}, every transition of the finite process must be matched by a
transition with the same label. Since $(q_0,x_0)\in R$ and
$q_0\xrightarrow{a_0}_F q_1$, there exists \(x_1\in\Proc\), such that $x_0\xrightarrow{a_0} x_1$ and $(q_1,x_1)\in R$. Repeating the same argument along the path
\[
q_0\xrightarrow{a_0}_F q_1
\xrightarrow{a_1}_F \cdots
\xrightarrow{a_{n-1}}_F q_n,
\]
we obtain states $x_0,x_1,\ldots,x_n$, such that, for every \(0\leq i<n\), $x_i\xrightarrow{a_i}_P x_{i+1}$ and $(q_{i+1},x_{i+1})\in R$. 
In particular, $(q_n,x_n)\in R$.
However, in the LTS \(L\), the only state reachable from \(x\) is \(x\) itself.
Since \(x_0=x\), it follows that $x_n=x$. Therefore, $(q_n,x)\in R$.

But \(q_n\) is a deadlock, while \(x\) is not a deadlock, and hence $I(q_n)=\varnothing$, while $I(x)=\{a\}\neq\varnothing$.
By \eqref{eq:deadlock-preservation-X}, since \((q_n,x)\in R\) and
\(I(q_n)=\varnothing\), we must have $I(x)=\varnothing$.
But this is a contradiction. Hence, in this case too, $q\not\sqsubseteq_X x$.

Thus, for every \(q\in\mathrm{Proc_F}\), $q\not\sqsubseteq_X x$. Equivalently, $x\notin O_X(q)$, 
for every \(q\in\mathrm{Proc_F}\). Therefore,
\[
x\notin \bigcup_{q\in\mathrm{Proc_F}}O_X(q).
\]
Therefore, \(\mathcal F_X\) does not cover \(\Proc\).

Since \(X\in\{\mathrm{CS},\mathrm{RS}\}\) was arbitrary, neither
\(\mathcal F_{\mathrm{CS}}\) nor \(\mathcal F_{\mathrm{RS}}\) satisfies the
covering property. Consequently, neither family is a basis.
\end{proof}

\paragraph{\texorpdfstring{Proof that \(\mathcal{B}^{\mathrm{fin}}_{\mathrm{bis}}\) is a Basis.}{Proof that B_fin_bis is a Basis}}

\begin{proof}[\textbf{Proof of Proposition~\ref{prop:bis-finite-depth-basis}}]

\smallskip
\noindent
\textit{(i) (Covering property.)}
For $k=0$, by the base case of Definition \ref{def:k-bisimulation}, all the states are \(0\)-bisimilar, and thus, for every $q\in\mathrm{Proc_F}$,
\[
O^0_{\mathrm{bis}}(q)=\{\,p\in\Proc\mid q\sim_{0} p\,\}=\Proc.
\]
Therefore,
\[
\Proc
=
\bigcup_{\substack{q\in\mathrm{Proc_F}\\ k\in\mathbb N}}
O^k_{\mathrm{bis}}(q).
\]
Thus, \(\mathcal B_{\mathrm{bis}}\) satisfies the covering property.

\smallskip
\noindent
\textit{(ii) (Intersection property.)}
Let $p\in O^{k_1}_{\mathrm{bis}}(q_1)\cap O^{k_2}_{\mathrm{bis}}(q_2)$, where \(q_1,q_2\in\mathrm{Proc_F}\) and \(k_1,k_2\in\mathbb N\). Then, $q_1\sim_{k_{1}}p$ and $q_2\sim_{k_{2}}p$.

Without loss of generality, assume that \(k_1\leq k_2\). Since
\(k_2\)-bisimilarity implies \(k_1\)-bisimilarity, from
$q_2\sim_{k_{2}}p$, we obtain $q_2\sim_{k_{1}}p$. Also, since
$q_1\sim_{k_{1}}p$ and \(\sim_{k_{1}}\) is an equivalence relation, it follows that $q_1\sim_{k_{1}}q_2$.

We show that $O^{k_2}_{\mathrm{bis}}(q_2)\subseteq O^{k_1}_{\mathrm{bis}}(q_1)$. 

Let $r\in O^{k_2}_{\mathrm{bis}}(q_2)$. Then, $q_2\sim_{k_{2}}r$. Since \(k_1\leq k_2\), this implies $q_2\sim_{k_{1}}r$. 
Since $q_1\sim_{k_{1}}q_2$, 
transitivity gives $q_1\sim_{k_{1}}r$.
Hence, $r\in O^{k_1}_{\mathrm{bis}}(q_1)$. 
Therefore, $O^{k_2}_{\mathrm{bis}}(q_2)\subseteq O^{k_1}_{\mathrm{bis}}(q_1)$. 

Now since $O^{k_2}_{\mathrm{bis}}(q_2)\subseteq O^{k_1}_{\mathrm{bis}}(q_1)$,
we have
\[
O^{k_1}_{\mathrm{bis}}(q_1)\cap O^{k_2}_{\mathrm{bis}}(q_2)
=
O^{k_2}_{\mathrm{bis}}(q_2).
\]
Moreover, since $p\in O^{k_2}_{\mathrm{bis}}(q_2)$, it follows that
\[
p\in O^{k_2}_{\mathrm{bis}}(q_2)
\subseteq
O^{k_1}_{\mathrm{bis}}(q_1)\cap O^{k_2}_{\mathrm{bis}}(q_2).
\] 
Therefore,
$\mathcal B^{\mathrm{fin}}_{\mathrm{bis}}$ satisfies the intersection property.
\end{proof}

\paragraph{\texorpdfstring{Proof that \(\tau^{\mathrm{fin}}_{\mathrm{bis}} \neq \tau_{\mathrm{sim}}\).}{Proof that tau_fin_bis is not equal to tau_sim}}

\begin{proof}[\textbf{Proof of Proposition~\ref{prop:D-bis-open-not-sim-open}}]

First, we show that $D$ is open in \(\tau^{\mathrm{fin}}_{\mathrm{bis}}\). Since the \(0\)-process has no outgoing transitions, a process \(p\) is \(1\)-bisimilar to \(0\)
if and only if \(p\) also has no outgoing transitions. Therefore,
\[
O^1_{\mathrm{bis}}(0)
=
\{\,p\in\Proc\mid 0\sim_1 p\,\}
=
D.
\]
Since \(O^1_{\mathrm{bis}}(0)\in\mathcal B^{\mathrm{fin}}_{\mathrm{bis}}\), we have $D\in\tau^{\mathrm{fin}}_{\mathrm{bis}}$.

Next, we show that \(D\) is closed by proving that its complement is open.
The complement \(\Proc\setminus D\) is the set of all processes with at least
one outgoing transition, or equivalently, at least one initial action.

We group such processes according to their set of initial actions. For every
non-empty set \(A\subseteq\Sigma\), let \(q_A\in\Proc_F\) be the finite process
that can initially perform exactly the actions in \(A\), and then stop. (For
example, if \(A=\{a,b\}\), then $q_A=a.0+b.0$.) Thus, $I(q_A)=A$.

At depth \(k=1\), $1$-bisimulation compares only the initial actions, because all
successor states are compared at depth \(0\). Hence,
\[
O^1_{\mathrm{bis}}(q_A)
=
\{\,p\in\Proc\mid I(p)=A\,\}.
\]
Every non-deadlocked process has some non-empty set of initial actions.
Therefore,
\[
\Proc\setminus D
=
\bigcup_{\varnothing\neq A\subseteq\Sigma}
O^1_{\mathrm{bis}}(q_A).
\]
Each set \(O^1_{\mathrm{bis}}(q_A)\) is open in
\(\tau^{\mathrm{fin}}_{\mathrm{bis}}\), and arbitrary unions of open sets are open.
Thus, $\Proc\setminus D\in\tau^{\mathrm{fin}}_{\mathrm{bis}}$.
Hence, \(D\) is closed. Since $D$ is open in $\tau^{\mathrm{fin}}_{\mathrm{bis}}$, \(D\) is also
clopen in \(\tau^{\mathrm{fin}}_{\mathrm{bis}}\).

We now show that \(D\notin\tau_{\mathrm{sim}}\). Assume towards a
contradiction, that $D\in\tau_{\mathrm{sim}}$.
By assumption, \(D\neq\varnothing\), so choose a deadlocked process
\(d\in D\). Since \(D\) is open in \(\tau_{\mathrm{sim}}\), by
Proposition~\ref{prop:tau-sim-open}, there exists \(q\in\Proc_F\), such that $d\in O(q)\subseteq D$. Since \(d\in O(q)\), we have $q \sqsubseteq_S d$.
But \(d\) is a deadlock, so it has no outgoing transitions. Since $d$ simulates $q$, \(q\) must also be a deadlock.

Now, if \(q\) is a deadlock,  this means that $d$ has no transition of $q$ to match, so it automatically simulates it. Hence, $q\sqsubseteq_S d$, for every $d \in \Proc$. 
Therefore, $O(q)=\Proc$. 
Thus, $\Proc=O(q)\subseteq D$. But this is a contradiction, because $\Proc$ does not contain only deadlocked processes. Hence, $D\notin\tau_{\mathrm{sim}}$.

 Consequently, $\tau^{\mathrm{fin}}_{\mathrm{bis}}\neq\tau_{\mathrm{sim}}$.
\end{proof}

\subsection{Proofs on Modal Logic as a Topological Basis}

For each $s \in \SigStar$, we can define recursively formula $\varphi(s)$: $\varphi(\eps) = \true$ and for every $a s \in \SigStar$, $\varphi(a s) = \langle a \rangle \varphi(s)$.

\begin{lemma}\label{lem:trace2formula}
    For every $s \in \SigStar$ and $p \in \Proc$, $p \models \varphi(s)$ if and only if $s \in \Trc(p)$.
\end{lemma}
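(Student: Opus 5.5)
The plan is to argue by induction on the length of $s$, following the recursive definition of $\varphi(s)$. For each $n$, the claim at length $n$ will be: for every $s$ with $|s| = n$ and every $p \in \Proc$, $p\models\varphi(s)$ iff $s\in\Trc(p)$. Because $p$ ranges over all processes, the hypothesis can later be applied to the successor $p'$ rather than to $p$ itself.

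\emph{Base case.} Take $s=\eps$, so $\varphi(\eps)=\true$ and every process satisfies it. The Preliminaries state that $\eps$ is a finite trace of every process, so $\eps\in\Trc(p)$ always holds. Both sides are therefore true for every $p$.

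\emph{Inductive step.} Take $s=a s'$ and assume the claim for $s'$ and all processes. By the semantics of the diamond modality, $p\models\langle a\rangle\varphi(s')$ holds iff there is some $p'$ with $p\xrightarrow{a}p'$ and $p'\models\varphi(s')$. By the induction hypothesis applied to $p'$, this holds iff some $a$-successor $p'$ of $p$ has $s'\in\Trc(p')$. It remains to check that this is equivalent to $a s'\in\Trc(p)$, using the definition of finite traces as witnessing sequences $p_0,\dots,p_n$.

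For the forward direction, let $s'=a_2\cdots a_n$ be witnessed from $p'$ by $p'=p_1,\dots,p_n$. Prepending $p_0=p$ gives a witness for $a s'$ from $p$. For the converse, take a witness $p_0=p, p_1,\dots,p_n$ for $a s'$. Its first step gives $p\xrightarrow{a}p_1$, and the tail $p_1,\dots,p_n$ witnesses $s'\in\Trc(p_1)$, so we set $p'=p_1$.

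The only delicate point is this bookkeeping with witnessing sequences, which the inductive-step argument above handles. This is exactly the argument already carried out in the proof of Theorem~\ref{thm:tauS-in-tauSim-contrib} for $O(a\cdot s')=O(a.q_{s'})$, with $\langle a\rangle$ in place of simulation of $a.q_{s'}$. No other obstacle is expected.
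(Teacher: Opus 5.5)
Your proof is correct and takes the same approach as the paper, which only says the lemma follows by a straightforward induction on $s$. Your base case for $\varepsilon$ and your $\langle a\rangle$ step, including the bookkeeping with witnessing sequences, are that induction written out in full.
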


\begin{proof}
The proof is straightforward by induction on $s$.
\end{proof}

\begin{proof}[\textbf{Proof of Theorem~\ref{modal-form}}]
The case of $\sqsubseteq_S$ and $\mathcal{L}_S$ is one of the cases proven in \cite{AcetoILS12}. The methods from \cite{AcetoILS12} and \cite{GrafS86a} (but see also~\cite{fine1975normal}) can also be adapted to the rest of the cases of the theorem, but for completeness, we prove these in the following.

    For the case of $\sqsubseteq_T^{fin}$ and $\mathcal{L}_T$, observe that $\Trc(p)$ is finite. Then, we can define $\varphi = \bigwedge_{s \in \Trc(p)} \varphi(s)$. The theorem follows from Lemma~\ref{lem:trace2formula}.

    For the remaining case of $\sim_k$, we use induction on $k$. For $k = 0$, observe that $p \sim_k q$ for every $q \in \Proc$, and therefore we can define $\varphi = \true$.
    If $k > 0$ and $p = 0$, then we define $\varphi = \zero$.
    Otherwise, for each $a \in \act$, let $a(p) = \{p' \in \Proc_F \mid p \xrightarrow{a} p'\}$. By the inductive hypothesis, every $p' \in \Proc_F$ has a characteristic formula $\varphi(p',k-1)$ for $p'$ for 
    $\sim_k$
    We can then define: 
    \[ \varphi = 
    \bigwedge_{\substack{a \in \act\\a(p) \neq \emptyset}}\left(\bigwedge_{p'\in a(p)}\langle a \rangle \varphi(p',k-1) \ \land \ [a] \bigvee_{p'\in a(p)}\langle a \rangle \varphi(p',k-1) \right)  
    \ \land \bigwedge_{\substack{a \in \act\\a(p) = \emptyset}}[a] \false.\]
    It is now straightforward from the inductive hypothesis that $q \models \varphi$ if and only if $p \sim_k q$, which completes the proof.
\end{proof}

\begin{proof}[\textbf{Proof of Corollary \ref{cor:topol-equiv-logic}}]
The proof follows from Theorem \ref{modal-form}.
\end{proof}

\end{document}